\setlist{parsep=0pt,topsep=4pt}
\theoremstyle{definition}
\newtheorem{definition}{Definition} 
\newtheorem{theorem}{Theorem}
\newtheorem{lemma}{Lemma}
\newtheorem{corollary}{Corollary}
\let\emptyset\varnothing
\DeclareRobustCommand*\cal{\@fontswitch\relax\mathcal}
\newcommand{\FSY}{\textsc{Fsynch}\xspace}
\newcommand{\SSY}{\textsc{Ssynch}\xspace}
\newcommand{\RSY}{\textsc{Rsynch}\xspace}
\newcommand{\RR}{\textsc{RR}\xspace}
\newcommand{\CCA}{\ensuremath{1}\xspace}
\newcommand{\PS}{\ensuremath{2}\xspace}
\newcommand{\RA}{\ensuremath{m}\xspace}
\newcommand{\RS}{R}
\newcommand{\RC}{\ensuremath{3}\xspace}
\newcommand{\LU}{\ensuremath{\mathcal{LUMI}}\xspace}
\newcommand{\FS}{\ensuremath{\mathcal{FST\!A}}\xspace}
\newcommand{\FC}{\ensuremath{\mathcal{FCOM}}\xspace}
\newcommand{\OB}{\ensuremath{\mathcal{OBLOT}}\xspace}
\newcommand{\Look}{\ensuremath{\mathit{Look}}\xspace}
\newcommand{\Compute}{\ensuremath{\mathit{Compute}}\xspace}
\newcommand{\Move}{\ensuremath{\mathit{Move}}\xspace}
\newcommand{\LCM}{\ensuremath{\mathit{LCM}}\xspace}
\newcommand{\N}{{\rm I\kern-.22em N}} 
\newcommand{\Z}{{\sf Z\kern-.42em Z}} 
\newcommand{\R}{{\rm I\kern-.22em R}}
\newcommand{\CoP}{{\tt pred}}
\newcommand{\CoS}{{\tt suc}}
\newcommand{\all}{3}
\newcommand{\fst}{4}
\newcommand{\snd}{5} 
\newcommand{\length}{\mathit{length}\xspace}
\newcommand{\status}{\mathit{status}\xspace}
\newcommand{\Status}{\mathit{Status}\xspace}
\newcommand{\light}{\mathit{light}\xspace}
\newcommand{\Light}{\mathit{Light}\xspace}
\newcommand{\executed}{\ensuremath{\mathit{executed}}\xspace}
\newcommand{\checked}{\mathit{checked}\xspace}
\newcommand{\here}{\mathit{here}\xspace}
\newcommand{\lcolor}{\mathit{color}\xspace}
\newcommand{\step}{\mathit{step}\xspace}
\newcommand{\flag}{\mathit{flag}\xspace}
\newcommand{\true}{\mathit{True}\xspace}
\newcommand{\false}{\mathit{False}\xspace}
\newcommand{\cntr}{\mathit{center}\xspace}
\newcommand{\final}{\mathit{final}\xspace}
\newcommand{\other}{\mathit{other}\xspace}
\newcommand{\des}{\mathit{des}\xspace}
\newcommand{\pos}{\mathit{pos}\xspace}
\newcommand{\suc}{\mathit{suc}\xspace}
\newcommand{\charged}{\mathit{charged}\xspace}
\newlength\tdima
\newcommand\tabfill[1]{%
	\setlength\tdima{\linewidth}%
	\addtolength\tdima{\@totalleftmargin}%
	\addtolength\tdima{-\dimen\@curtab}%
	\parbox[t]{\tdima}{#1\ifhmode\strut\fi}}
\newcounter{Codeline}
\newcommand{\Newcodeline}{\setcounter{Codeline}{1}}
\newcommand{\Cl}{\>{\theCodeline}:\' \addtocounter{Codeline}{1}}
\newcommand{\crm}{\\}
\title{On the Computational Power of Energy-Constrained 
   Mobile Robots: Algorithms and Cross-Model Analysis\thanks{This work was supported in part by 
JSPS KAKENHI No.
20K11685 and 21K11748,  Israel \& Japan Science and Technology Agency (JST) SICORP (Grant\#JPMJSC1806), and by the Natural Sciences and Engineering Research Council of Canada (NSERC) under Discovery Grants A2415 and 203254.}}
\author[1]{Kevin Buchin}%\thanks{k.a.buchin@tue.nl}}
\author[2]{
Paola Flocchini}%\thanks{pflocchi@uottawa.ca}} 
\author[1]{Irina Kostitsyna}%\thanks{i.kostitsyna@tue.nl}} 
\author[1]
{Tom Peters}%\thanks{t.peters1@tue.nl}}
\author[3]
{Nicola Santoro}% \thanks{santoro@scs.carleton.ca}}
\author[4]{
Koichi Wada}%\thanks{wada@hosei.ac.jp}%0000-0002-5351-1459}
\affil[1]{TU Eindhoven, The Netherlands}
\affil[2]{University of Ottawa, Canada}
\affil[3]{Carleton University, Canada}
\affil[4]{Hosei University, Tokyo, Japan}
\date{}
\begin{document}
%
%\title{Contribution Title\thanks{Supported by organization x.}}
%\title{On the Computational Power of Energy-Constrained 
%   Mobile Robots: Algorithms and Cross-Model Analysis\thanks{This work was supported in part by 
%JSPS KAKENHI No.
%20K11685 and 21K11748,  Israel \& Japan Science and Technology Agency (JST) SICORP (Grant\#JPMJSC1806).}}
%\titlerunning{Abbreviated paper title}
% If the paper title is too long for the running head, you can set
% an abbreviated paper title here
%
%\author{Kevin Buchin\inst{1}\orcidID{} \and
%Paola Flocchini\inst{2}\orcidID{} \and
%Irina Kostitsyna\inst{1}\orcidID{}
%\and
%Tom Peters\inst{1}\orcidID{} 
%\and
%Nicola Santoro\inst{3}\orcidID{}
%\and
%Koichi Wada\inst{4}\orcidID{}
%}
%
%\authorrunning{K. Buchin et al.}
% First names are abbreviated in the running head.
% If there are more than two authors, 'et al.' is used.
%
%\institute{TU Eindhoven, The Netherlands\\
%\email{k.a.buchin@tue.nl, i.kostitsyna@tue.nl, t.peters1@tue.nl}\\
%\and
%University of Ottawa, Canada\\
%\email{pflocchi@uottawa.ca}\and
%Carleton University, Canada %\\
%\email{santoro@scs.carleton.ca}
%\and
%Hosei University, Tokyo, Japan\\
%\email{wada@hosei.ac.jp}
%\url{http://www.springer.com/gp/computer-science/lncs} \and
%ABC Institute, Rupert-Karls-University Heidelberg, Heidelberg, Germany\\
%\email{\{abc,lncs\}@uni-heidelberg.de}
%}
%
\maketitle              % typeset the header of the contribution
\begin{abstract}
    
 We consider distributed systems of identical autonomous computational entities, called 
 {\em robots}, moving and operating in the plane
 in  synchronous \Look-\Compute-\Move (\LCM)  cycles. 
 The algorithmic capabilities of these systems have been extensively investigated in the literature 
 under four distinct models ($\OB$, $\FS$, $\FC$, $\LU$), each identifying different levels of memory persistence 
 and communication capabilities of the robots. In spite of their differences, they all  always assume that the robots have
 unlimited amounts of energy. 
 
 In this paper we remove this assumption, and start the
 study of the computational capabilities of robots whose energy is limited, albeit renewable.
 More precisely, we consider systems where an activated entity uses all its energy
 to execute an  \LCM  cycle, and the energy  can be restored  through a period of inactivity.

We first study the impact that memory persistence 
and communication capabilities have on the computational power
of such energy-constrained systems of robots; we do so by analyzing 
the computational relationship between the four models under this energy constraint.
We provide a complete characterization of this relationship. 
Among the many results of
this cross-model analysis  we prove that for energy-constrained robots, 
$\FC$  is more powerful  than $\FS$ (that is, it is better to communicate than to remember).
Integral part of the proof is the design and analysis of an algorithm that allows robots in $\FC$ to
execute correctly any protocol for the more powerful $\LU$ model.

We then  study  the difference  in computational power caused by the energy restriction, and provide
a complete characterization of the relationship between  energy-restricted and unrestricted robots
in each of the models. We  prove that, within $\LU$ there is no difference; 
integral part of the proof is the design and analysis of an algorithm that in  $\LU$ allows 
energy-constrained robots  to execute correctly any protocol for robots with unlimited energy.
We then show the (apparently counter-intuitive) result that in all other models,
the energy constraint actually provides the robots with a computational  advantage.

At the basis of our results is the mapping allowing to represent   a system of energy-constrained robots
as  a system of robots with unlimited energy but  subject to a special adversarial activation scheduler.

\end{abstract}

%\newpage

\section{Introduction}

\subsection{Framework: Background and Cross-Model Analysis}

In this paper, we consider distributed systems composed of  identical autonomous computational entities, 
viewed as points and called 
  {\em robots}, moving and operating in the Euclidean plane
 in  synchronous \Look-\Compute-\Move (\LCM)  cycles.

In each synchronous round, a non-empty set of (possibly all) robots is  activated, 
each performs its \LCM cycle simultaneously, and terminates by the end of the round. 
Each cycle is composed of three phases;
in the \Look phase, an entity  obtains a snapshot of the space showing the positions of the other robots; 
in  the \Compute phase, it  executes its algorithm (the same for all robots) using the snapshot as input; 
it then moves towards 
the computed destination in the \Move phase.
Repeating these cycles, the robots are able to collectively perform some tasks and solve some problems.

The selection of which robots are activated in a round is made by an adversarial scheduler.
%, constrained to be fair; \note{IK: this later is discussed in more detail}
This general setting is usually called {\em semi-synchronous}  (\SSY); 
the special restricted    setting where every robot is  activated in every round is called  {\em fully-synchronous} 
 (\FSY) .

 These systems have been extensively investigated within distributed computing.
 The research focus has been on understanding
 the nature and the extent of the impact  that  crucial factors,  such as {\em memory persistence} 
 and {\em communication capability},  have  on  the solvability of a problem and thus 
 on the computational power of the system.
 To this end, four models have been identified and investigated: \OB, \FS, \FC, and \LU.

 In the most common (and weakest) model, $\OB$, in addition to the standard assumptions of {anonymity} and {uniformity} 
 (robots have no IDs and run identical algorithms),  the  robots are 
 {\em oblivious} (they have no persistent memory to record information of previous cycles)
 and they are {\em silent} (without explicit means of communication). 
  Computability in this model has been the object of intensive research since its
 introduction in  \cite{SY}; 
% Extensive investigations have been carried out to clarify the computational limitations 
% and powers of these robots for basic coordination tasks 
 (e.g., see
% {\bf only a subset of all these:} \note{IK: why bold?}
 \cite{AP,AOSY,BDT,CDN,CFPS,CP,FPSW05,ISKIDWY,%SY}, \cite{
FPSW08,FYOKY,SY,YS,YUKY,%},  \cite{
CG,GP,SIW}; as well as the recent
  book and chapters therein \cite{FPS19}).
 Clearly,  the restrictions created by the absence of  persistent memory and the incapacity of
 explicit communication severely limit 
 what the robots can do.
%  can do and render  
% complex and difficult performing the tasks they can be performed.

In the stronger  $\LU$ model, formally introduced and defined in \cite{DFPSY},
 robots are provided with some (albeit limited) persistent memory and
communication means.
In this model, each robot is equipped with a constant-sized memory (called {\em light}),
 whose value (called {\em color}) can be set during the \Compute phase. 
 The light  is  visible to all the robots and is persistent in the sense that it does 
 not automatically reset at the end of a cycle. Hence,  these luminous 
 robots  are capable in each cycle of both remembering and communicating a constant number of bits. 
 There is a lot of research work on the design of algorithms and the feasibility of problems for luminous  
 robots  \cite{sssbhagat17,DFPSY,FSVY,HDT,LFCPSV,OWD,OWK,TWK,SABM18,sssgokarna16,V}; 
 for a recent survey, see \cite{DLV19}.
The availability of both persistent memory and communication, however limited, clearly
renders
luminous robots more powerful than  oblivious robots (see e.g., \cite{DFPSY}).

%In particular,  if the robots were  endowed with a constant number of bits of persistent memory
%but  were still unable to communicate explicitly, what problems could they solve ? 
% If the robots can
%communicate  a constant number of bits in each cycle, 
%but were oblivious, what would be their computational power then ?
%Which of the two  capabilities, memory or communication,
%is more important?  or, in other words, 
% {\em is it better to remember or to communicate} ?

Models \FS and \FC are sub-models of \LU,
%Helpful to understand the individual
%computational power of  memory and communication
% are  two sub-models of $\LU$,  $\FS$ and $\FC$
introduced in \cite{FSVY} and studied in %\textcolor{red}{
\cite{OWD,TWK}.
They are helpful to understand the individual computational power of persistent memory and communication.
In the first model, $\FS$,  the light of a robot is ``internal'', i.e., visible only to that robot,
 while in the second model, $\FC$,  the light of a robot is visible only to the other robots but not to the robot itself.
Thus in $\FS$  the color merely encodes an internal state;  hence  the robots are
{\em finite-state} and {\em silent}. On the contrary, in $\FC$,  a robot
 can communicate to the other robots through its colored light
  but forgets the content of its transmission by the next cycle; that is,  robots are
 {\em  finite-communication} and {\em oblivious}.  
 
 Summarizing, to understand  the computational power 
of these distributed systems one needs 
to explore and determine the computational power of the robots 
 within each of these   models, 
 as well as (and more importantly) with respect to each other.
  
 This type of cross-model investigation has been taking place but rather limited in scope (e.g., \cite{FSVY}).
 Recently, a  substantial step  has been  taken in \cite{FSW19} where, by integrating existing
 bounds and establishing new results, a comprehensive map  has been drawn 
 of the computational  relationship  between the four models, $\OB$, $\FS$, $\FC$,  $\LU$, 
 (and hence of the computational impact of the presence/absence of persistent memory and/or
 communication capabilities) for  the two fundamental synchronous settings: fully-synchronous and
 semi-synchronous.

\subsection{The Energy Problem}
 
 In the vast existing literature on these systems of autonomous mobile computational entities,
 surprisingly,  no consideration has been made so far on the
 energy required for the robots  to be able to operate.
 In other words, the  existing research  share the same implicit  assumption,
 that the robots have an unlimited amount of energy enabling them to perform
 their activities.

  In this paper we remove this assumption, and start the
 study of the computational capabilities of robots whose energy is limited, albeit renewable.
 More precisely, we consider systems where an activated entity uses all its energy
 to execute an  \LCM  cycle, and that once this happens the robot is not operational 
 and cannot be activated;
  the energy however  can be restored  through a period of inactivity.
 This would be for example the case if the robot's power is provided by a battery 
  rechargeable  by  energy  harvesting (as it is done 
in  conceptually related systems  such as  wireless mobile sensors \cite{ShHJ18}).
 
 The immediate natural questions are: 
 {\em what is the computational power of these energy-constrained robots} ?
and, in particular, 
{\em what is the impact of the crucial factors (memory and communication) in this case} ?

 In this paper we start  investigating these questions. 
%
% We provide comprehensive
% answers to those questions for such systems.  

\subsection{Contributions}

We consider  systems  where
 the energy of a robot  is sufficient for executing exactly one  $\mathit{LCM}$ cycle, and  the depleted energy  is restored  after one round of inactivity. 
We investigate  the computational power of the distributed robotic systems 
 described by the four models  when the robots are subject to such an energy-constraint.

%{\footnotesize
\begin{table}[htbp]
%\footnotesize
% \begin{minipage}[t]{.45\textwidth}
\centering
 \caption{Relationships within \RSY.}
 \label{tab:RS-RS}
        \begin{tabular}{|c|c|c|c|}
\hline
        & \gape{$\mathcal{\FC}^{RS}$}  & $\mathcal{\FS}^{RS}$  & $\mathcal{\OB}^{RS}$   
      \\ \hline
% FSYNC
 $\mathcal{\LU}^{RS}$   & \makecell{$\equiv$ \\ (Th.\ref{th:FCRSLURSLUS})} & \makecell{$>$ \\ (Th.\ref{th:FCRSLURSLUS})}  & \makecell{$>$ \\ (Th.\ref{th:FCRSLURSLUS})} \\ 
%    & (Th.\ref{th:FCRSLURSLUS}) & (Th.\ref{th:FCRSLURSLUS}) & (Th.\ref{th:FCRSLURSLUS}) \\ 
\hline
  $\mathcal{\FC}^{RS}$  & $-$ & \makecell{$>$ \\ (Th.\ref{th:FCRSLURSLUS})} & \makecell{$>$ \\ (Th.\ref{th:FCRSLURSLUS})}  \\ 
%   &  & (Th.\ref{th:FCRSLURSLUS}) &(Th.\ref{th:FCRSLURSLUS})   \\ 
\hline
   $\mathcal{\FS}^{RS}$ & $-$  &$-$ & \makecell{$>$ \\ (Th.\ref{th:FCRSLURSLUS})}  \\
%      &  &  & (Th.\ref{th:FCRSLURSLUS})  \\ 
\hline
%    $\mathcal{\OB}^{F}$ &  &  &  &  \\ \hline
\end{tabular}
%  \caption{Relationships within \RSY.}
%  \label{tab:RS-RS}
% \end{minipage}
 %
% \caption{Relationships within \RSY.}
% \label{tab:RS-RS}
\end{table}
%}

At the basis of our study is the computational correspondence
between   systems  of energy-restricted robots under \SSY\ 
and those same systems  when the robots have unbounded energy 
 (as traditionally considered in the literature) but  whose activation is 
   under the control of  a  special adversarial synchronous scheduler,
 \RSY, never studied before, where  the sets of robots activated in any two 
consecutive rounds are restricted to be disjoint.
% restricted form of the general synchronous scheduler  \SSY\
This direct correspondence   enables us to reduce the 
cross-model investigation 
%study of  the  impact of memory and communication on 
%the computational power 
of these energy-constrained robots
to  the  cross-model investigation of energy-unbounded robots
under the \RSY\ scheduler.
Furthermore, it allows us to  determine  the change (if any) in
computational power due to the energy restrictions, by
determining the relationship between  \RSY\ and the 
general unrestricted \SSY\ scheduler.

Let 
 $M^{S}$  and $M^{RS}$
 denote the systems  of unlimited-energy robots 
 defined by model ${M}\in \{\LU, \FC,\FS,\OB \}$,
 under \SSY\ and \RSY, respectively (the latter being
 equivalent to the systems  of constrained energy robots
 under \SSY).
 \color{black}
 
% We  represent   the system of energy-restricted robots
% as  the  system of robots with unlimited energy but  whose activation is 
%   under the control of
% a  special adversarial synchronous scheduler,
% \RSY, where  the sets of robots activated in any two 
% consecutive rounds are restricted to be disjoint. This 
% restricted form of the general synchronous scheduler  \SSY\ has never been studied
% before; the correspondence between the considered energy constraint and
% this scheduler is direct and rather obvious.
%This direct correspondence   enables us to reduce the study
%of  the  impact of memory and communication 
%on the computational power of these energy-constrained robots
%to  the  cross-model investigation of the \RSY\ scheduler.
%Furthermore, it allows us to  determine  the change (if any) in
%computational power due to the energy restrictions, by
%determining the relationship between  \RSY\ and the 
%general (unrestricted) \SSY\ scheduler.
%%, as well as  the fully synchronous \FSY\ one.

%Let 
% $M^{S}$  and $M^{RS}$
% denote the systems  of unlimited energy robots 
% defined by model ${M}\in \{\LU, \FC,\FS,\OB \}$,
% under \SSY\ and \RSY, respectively.

We first study the impact that memory persistence 
and communication capabilities have on the computational power
of  energy-constrained systems of robots; we do so by analyzing 
the computational relationship between the four models under this energy constraint.
We provide a complete characterization of this relationship proving that (see  Table~\ref{tab:RS-RS}):
 $$\mathcal{\LU}^{RS} \equiv \FC^{RS} > \FS^{RS} > \OB^{RS}$$
 where (as formally defined in Sect. \ref{sec:Relation}) 
 $\cal{X} > \cal{Y}$ denotes that $\cal{X}$ is strictly more powerful than $\cal{Y}$,
  $\cal{X} \equiv \cal{Y}$ denotes that $\cal{X}$ and $\cal{Y}$ are
  computationally equivalent, and $\cal{X} \bot \cal{Y}$,
  denotes that $\cal{X}$ and $\cal{Y}$ are computationally incomparable.
  
Integral part of the proof  that $\FC^{RS}$  is more powerful  than $\FS^{RS}$ 
(that is, it is better to communicate than to remember), is the design and analysis of an algorithm 
that allows robots in $\FC^{RS}$ to
execute correctly any protocol for the more powerful $\LU^{RS}$.

We then  study  what  variation   in computational power is created by the 
presence of the energy restriction, by comparing the computational difference
between energy- restricted and unrestricted robots in each of the four models
(i.e., between  $M^{RS}$ and $M^{S}$ for each  ${M}\in \{\LU, \FC,\FS,\OB \}$).
We provide a complete characterization (see Table \ref{tab:RS-SS}). 
In particular, we  prove that 
within $\LU$, the strongest model, there is no difference;
 i.e., $\LU^{RS}\equiv \LU^{S}$.
Integral part of the proof is the design and analysis of an algorithm that 
allows 
energy-constrained robots  to execute in $\LU^{RS}$ correctly any protocol 
for robots with unlimited energy $\LU^{S}$.
 In all other models, we  prove   that
the energy constraint actually provides the robots with a 
definite computational  advantage;  this apparently counter-intuitive result
is due to the fact that  the energy restriction  reduces the adversarial power 
of the activation scheduler.
Let us stress that the established characterization  covers all  the  cross-model and
cross-scheduler relationships.

%{\footnotesize
\begin{table}[htbp]
%\footnotesize
\centering
\caption{Relationship between \RSY\ and \SSY.}\label{tab:RS-SS}
%\caption{Relationship between \RSY\ and \SSY.}%%%%
%\vspace{5mm}
%\newline
%\RSY\ vs \SSY \newline
\begin{tabular}{|c|c|c|c|c|}
\hline
      & \gape{$\mathcal{\LU}^{S}$}  & $\mathcal{\FC}^{S}$  & $\mathcal{\FS}^{S}$  & $\mathcal{\OB}^{S}$ \\
\hline
% FSYNC
 \gape{\makecell{$\mathcal{\LU}^{RS}$ \\ $\equiv \mathcal{\FC}^{RS}$}}  & \makecell{$\equiv$ \\ (Th.\ref{th:FCRSLURSLUS})} & \makecell{$>$ \\ (Th.\ref{th:RSvsS1})} & \makecell{$>$ \\ (Th.\ref{th:FCRSLURSLUS}, Th.\ref{th:RSvsS1})} & \makecell{$>$ \\ (Th.\ref{th:FCRSLURSLUS}, Th.\ref{th:RSvsS1})}\\
% $\equiv \mathcal{\FC}^{RS}$ & (Th.\ref{th:FCRSLURSLUS}) & (Th.\ref{th:RSvsS1}.\ref{FCRS>FCS}) & (Th.\ref{th:FCRSLURSLUS},Th.\ref{th:RSvsS1}.\ref{FSRS>FSS}) & (Th.\ref{th:FCRSLURSLUS},Th.\ref{th:RSvsS1}.\ref{OBRS>OBS}) \\ 
\hline
%
%   $\mathcal{\FC}^{RS}$ & $=$ & $>$ & $>$ & $>$\\
%  & (Th.\ref{th:FCRSLURSLUS}) & (Th.\ref{th:RSvsS1}.\ref{FCRS>FCS}) & (Th.\ref{th:FCRSLURSLUS},Th.\ref{th:RSvsS1}.\ref{FSRS>FSS}) & (Th.\ref{th:FCRSLURSLUS},Th.\ref{th:RSvsS1}.\ref{OBRS>OBS}) \\ \hline
% 
%  $\mathcal{\LU}^{RS}$ $\equiv \mathcal{\FC}^{RS}$ & $=$ & $>$ & $>$ & $>$\\
% (Th.\ref{th:LUMISeqLUMIRS}) & (Th.\ref{th:FCRSLURSLUS}) & (Th.\ref{th:RSvsS1}.\ref{FCRS>FCS}) & (Th.\ref{th:FCRSLURSLUS},Th.\ref{th:RSvsS1}.\ref{FSRS>FSS}) & (Th.\ref{th:FCRSLURSLUS},Th.\ref{th:RSvsS1}.\ref{OBRS>OBS}) \\ \hline
% 
%  $\mathcal{\FC}^{F}$ & $>$ & $>$ & $>$ & $>$\\
%   & (Th.\ref{}) & (Th.\ref{}) & (Th.\ref{}) & (Th.\ref{})  \\ \hline
    $\mathcal{\FS}^{RS}$ & \makecell{$<$ \\ (Th.\ref{th:FCRSLURSLUS})} & \makecell{$\bot$ \\ (Th.\ref{th:RSvsS2})} & \makecell{$>$ \\ (Th.\ref{th:RSvsS1})} & \makecell{$>$ \\ (Th.\ref{th:FCRSLURSLUS}, Th.\ref{th:RSvsS1})}\\
%    & (Th.\ref{th:FCRSLURSLUS}) & (Th.\ref{th:RSvsS2}.\ref{FCSneqFSRS}) & (Th.\ref{th:RSvsS1}.\ref{FSRS>FSS}) & (Th.\ref{th:FCRSLURSLUS},Th.\ref{th:RSvsS1}.\ref{OBRS>OBS})  \\ 
\hline
    $\mathcal{\OB}^{RS}$ & \makecell{$<$ \\ (Th.\ref{th:FCRSLURSLUS})} & \makecell{$\bot$ \\ (Th.\ref{th:RSvsS1})} & \makecell{$\bot$ \\ (Th.\ref{th:RSvsS1})} & \makecell{$>$ \\ (Th.\ref{th:RSvsS1})}\\
%     & (Th.\ref{th:FCRSLURSLUS}) & (Th.\ref{th:RSvsS1}.\ref{OBRSneqFCS}) & (Th.\ref{th:RSvsS1}.\ref{OBRSneqFSS}) &  (Th.\ref{th:RSvsS1}.\ref{OBRS>OBS})\\ 
\hline
\end{tabular}
%\caption{Relationship between \RSY\ and %\SSY.}\label{tab:RS-SS}
\end{table}
%}

%{\footnotesize
\begin{table}[htbp]
\begin{center}
\caption{Relationship between \FSY\ and \RSY.}\label{tab:FS-RS}
%\caption{Relationship between \FSY\ and \RSY.}%%%%
%\vspace{5mm}
%\newline
%\FSY\ vs \RSY \newline
\begin{tabular}{|c|c|c|c|}
\hline
      & \gape{$\mathcal{\LU}^{RS} \equiv \mathcal{\FC}^{RS}$}   & $\mathcal{\FS}^{RS}$  & $\mathcal{\OB}^{RS}$  \\ \hline
      
%      & $\mathcal{\LU}^{RS} \equiv \mathcal{\FC}^{RS}$   & $\mathcal{\FS}^{RS}$  & $\mathcal{\OB}^{RS}$\\
%      &  $ (\equiv \LU^{S})$ & & \\ 
%      & (Th.\ref{th:LUMISeqLUMIRS}) & & 
%      \\ \hline
% FSYNC
 \makecell{$\mathcal{\LU}^{F}$ \\ $\equiv \mathcal{\FC}^{F}$} & \makecell{$>$ \\ (Th.\ref{th:FvsRS})} & \makecell{$>$ \\ (Th.\ref{th:FvsRS}, Th.\ref{th:FCRSLURSLUS})}
%(Th.\ref{th:FCRSLURSLUS})} 
 & \makecell{$>$ \\ (Th.\ref{th:FvsRS}, Th.\ref{th:FCRSLURSLUS})}\\
% $\equiv \mathcal{\FC}^{F}$ & (Th.\ref{th:FvsRS}.\ref{LUF>LURS},Th.\ref{th:FvsRS}.\ref{FCF>FCRS}) & (Th.\ref{th:FvsRS}.\ref{FCF>FCRS},Th.\ref{th:FCRSLURSLUS}) & (Th.\ref{th:FvsRS}.\ref{OBF>OBRS},Th.\ref{th:InFandS}.\ref{LUF=FCF>FSF>OBF}) \\ 
\hline
  
%    $\equiv \mathcal{\FC}^{F}$& $>$ & $>$ & $>$\\
%  & (Th.\ref{th:FvsRS}.\ref{LUF>LURS},Th.\ref{th:FvsRS}.\ref{FCF>FCRS}) & (Th.\ref{th:FvsRS}.\ref{FCF>FCRS},Th.\ref{th:FCRSLURSLUS}) & (Th.\ref{th:FvsRS}.\ref{OBF>OBRS},Th.\ref{th:InFandS}.\ref{LUF=FCF>FSF>OBF}) \\ \hline
  
%   $\mathcal{\LU}^{F}$ $\equiv \mathcal{\FC}^{F}$& $>$ & $>$ & $>$\\
%  (Th.\ref{th:InFandS}.\ref{LUF=FCF>FSF>OBF})& (Th.\ref{th:FvsRS}.\ref{LUF>LURS},Th.\ref{th:FvsRS}.\ref{FCF>FCRS}) & (Th.\ref{th:FvsRS}.\ref{FCF>FCRS},Th.\ref{th:FCRSLURSLUS}) & (Th.\ref{th:FvsRS}.\ref{OBF>OBRS},Th.\ref{th:InFandS}.\ref{LUF=FCF>FSF>OBF}) \\ \hline
%  $\mathcal{\FC}^{F}$ & $>$ & $>$ & $>$ & $>$\\
%   & (Th.\ref{}) & (Th.\ref{}) & (Th.\ref{}) & (Th.\ref{})  \\ \hline
   $\mathcal{\FS}^{F}$ & \makecell{$\bot$ \\ (Th.\ref{th:FSFneqLURSFCRS})} & \makecell{$>$ \\ (Th.\ref{th:FvsRS})} & \makecell{$>$ \\ (Th.\ref{th:FvsRS}, Th.\ref{th:FCRSLURSLUS})} \\
%    & (Th.\ref{th:FSFneqLURSFCRS}.\ref{FSFneqLURS},Th.\ref{th:FSFneqLURSFCRS}.\ref{FSFneqFCRS}) & (Th.\ref{th:FvsRS}.\ref{FSF>FSRS}) & (Th.\ref{th:FvsRS}.\ref{OBF>OBRS},Th.\ref{th:InFandS}.\ref{OBFneqLURS})   \\ 
\hline
    $\mathcal{\OB}^{F}$ & \makecell{$\bot$ \\ (Th.\ref{th:orthFSOBwithRS})} & \makecell{$\bot$ \\ (Th.\ref{th:orthFSOBwithRS})} & \makecell{$>$ \\ (Th.\ref{th:FvsRS})}\\
%     & (Th.\ref{th:orthFSOBwithRS}.\ref{OBFneqLURS}) & (Th.\ref{th:orthFSOBwithRS}.\ref{OBFneqFSRS}) & (Th.\ref{th:FvsRS}.\ref{OBF>OBRS})\\ 
\hline
\end{tabular}
\centering
%\caption{Relationship between \FSY\ and %\RSY.}\label{tab:FS-RS}
\end{center}
\end{table}
%}

Finally, we  complete the study of  systems of energy-restricted robots 
by analyzing the relationship between their computational   power and that
of  robots with unlimited energy  under the most benign
synchronous activation scheduler \FSY\ (i.e., fully synchronous).
In this case, perhaps not surprising,  
we prove that, in each model, energy-constrained robots are strictly
less powerful than fully-synchronous ones with unbounded energy
(see Table \ref{tab:FS-RS}). Also in this study, 
the characterization  covers all  the  cross-model and
cross-scheduler relationships.

An interesting consequence of our investigation is that it provides a novel insight
in the landscape of synchronous activation environments. 
In fact, it identifies  the presence of a 
novel  synchronous environment which lies in between the
semi-synchronous and the fully synchronous ones.
It is worth noting that  
the computational  cross-model ``map''  of  the new scheduler \RSY\ 
(which includes {\em Round Robin} as a particular case) has
the same structure as that of \FSY.
A graphical summary of all results and of these observations  is shown in Figure \ref{fig:summary}.

%New Flow: 
%\begin{enumerate}
%\item We study the impact of lights  ($\{\LU, \FC,\FS,\OB \}$) on {\sc Look-Compute-Move} Robots.
%\item We first concentrate on  the classical  synchronous models (SSYNC and FSYNC) and show that
% $$\mathcal{\LU}^{F} =\FC^{F} > \FS^{F} > \OB^{F}$$
% $$\mathcal{\LU}^{S} > (\FC^{S} <> \FS^{S}) > \OB^{S}$$

\begin{figure}[tbp]
\centering
\includegraphics{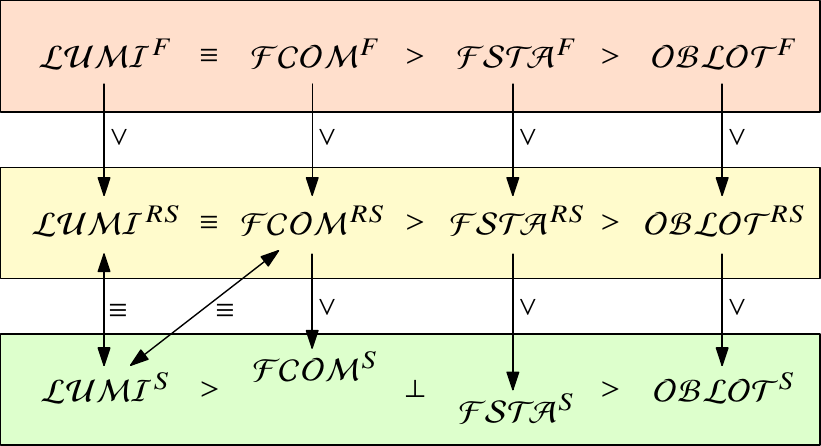}
%\centering\includegraphics[keepaspectratio, width=8.5cm]{Summary2.pdf}
    \caption{Summary of relationships.}
    \label{fig:summary}
\end{figure}

%%%%%%%%%%%%%%%%%%%%%%%%%%
%\section{Models and Preliminaries}\label{sec:model}
%%%%%%%%%%%%%%%%%%%%%%%%%%

%%%%%%%%%%%%%%%%%%%%%%%%%%
\section{Models and Preliminaries}\label{sec:model}
%%%%%%%%%%%%%%%%%%%%%%%%%%

%-------------------------------------
\subsection{The Basics}
%-------------------------------------

 Each  robot operates in \Look-\Compute-\Move (\LCM) cycles: it observes
its surroundings, it computes a destination within the space  based on what it sees, and it moves toward the destination.
 It can observe its surroundings and move within the space based on what it sees. The same space may be populated by several mobile robots, each with its local coordinate system, and static objects.

The system  consists of 
a set  $R = \{ r_0,\cdots,r_{n-1}\}$ 
of  computational entities, called robots,  
 modeled as  geometric points, that live in  $\mathbb R^2$,
 where they can move freely and continuously.
 The robots are autonomous without a central control.
They  are indistinguishable by their appearance,
do not have internal identifiers, and  execute 
the same algorithm.
 
 Let $x_i(t)$ denote the location of robot $r_i$ at time $t$ 
in a global coordinate system (unknown to the robots), and let
$X(t)= \{x_i(t) : 0 \leq i \leq {n-1}\} =  \{x_0(t),x_1(t), \ldots, x_{m-1}(t)\}$; 
observe that $|X(t)| = m \leq n$ since several robots might be
at the same location at  time $t$.

Each robot has its own local coordinate system and it  perceives itself at its origin.
A robot is equipped with  devices that allow it to 
observe  
the positions of the other robots in its local coordinate system.

The robots operate in \Look-\Compute-\Move (\LCM) 
cycles.  When activated,
a robot  executes a 
cycle by performing the following three operations:
\begin{enumerate}
\item {\em Look:} The robot  obtains a snapshot of the positions occupied by robots 
expressed with respect to its own coordinate
system; this  operation is assumed to be instantaneous.
%\footnote{This is called the {\em full visibility} (or unlimited visibility)  setting; 
%restricted forms of visibility have also been considered for these systems}.
\item {\em Compute:} The robot executes the algorithm using the snapshot as input;
the result of the computation is a destination point.
\item {\em Move:} The robot moves towards  the computed destination.
If the destination is the current location, the robot stays still.
\end{enumerate}

The system is {\em synchronous}; that is, time is divided into discrete
intervals, called {\em rounds}. In each round a robot is either active or inactive.
The robots  active in a round
perform their \LCM cycle in perfect synchronization; 
if not active, the robot is idle in that round. 
All robots are initially idle. 
In the following,  we use round and time interchangeably.

Each robot has a bounded amount of {\em energy}, which is totally consumed 
whenever it performs a cycle; its energy however is restored after being 
idle for a round. A robot  with depleted energy cannot be active.

Movements are said to be {\em rigid} if the robots always reach their destination.
They are said to be {\em non-rigid}  if they may be unpredictably stopped by an adversary 
whose only limitation is the existence of $\delta>0$, unknown to the robots, such that
  if the destination is at distance at most $\delta$ the robot will  
reach it, else it will move at least $\delta$ towards the
  destination.

There might not be consistency between the local coordinate systems and their  unit of distance. 
The absence of any a-priori assumption on consistency of the local coordinate systems is called {\em disorientation}.
% in which case the robots are said to be {\em disoriented}. 
The type of disorientation can range from  {\em fixed}, where each local coordinate system remains the same through all the rounds,
to {\em variable} where   the direction, the orientation, and the unit of distance 
of a robot may vary between successive rounds.    In this paper we consider only fixed disorientation.

%\color{blue}KW:I think that the following sentence is added (three assumtions are default) and in each lemma not holding assumption is stated. How about this?
%\color{red} To add: "global disorientation" vs "internal disorientation". Note that we still use the term fixed disorientation somewhere.
%\color{black}
%\irina{(assuming consistent unit distance of robots' local coordinate systems)}
%\koichi{Although the unit distance is used in the FSTA algorithm, the agreemnet is not necessary. For only FSTA case, rigidity is needed.}
%\irina{I think we should clarify the issue of agreement on coordinate systems (orientation, unit distance) vs consistency of the coordinate system of a single robot between different activations. Would terms "global disorientation" vs "internal disorientation" work?}

The robots are said
to have {\em chirality} if they share the same circular orientation of the plane (i.e., they 
agree on ``clockwise'' direction). Notice that, 
in presence of  chirality,  at any time~$t$, there would exist
a unique  circular ordering of the locations $X(t)$ occupied by the robots at that time;
let  {\tt suc} and {\tt pred}  be the functions  denoting the ordering and, without loss of generality,
let
{\tt suc}$(x_{i}(t))=x_{i+1\bmod m}(t)$ and {\tt pred}$(x_i)(t)=x_{i-1\bmod m}(t)$ for $i \in \{0,1, \ldots,m-1\}$.

\subsection{The Computational Models}
%-------------------------------------

In the most common model, $\OB$, the robots are {\em silent}: they have no explicit means of communication; furthermore they are {\em oblivious}: at the start of a cycle, a robot has no
memory of observations and computations performed in previous cycles.

In the other common model, \LU, 
each robot $r$ is  equipped with a persistent  visible
state variable $\Light[r]$, called {\em light}, whose values are taken from a finite set $C$ of states called {\em colors} (including the color that represents the initial state when the light is off). 
The colors of the lights can be set in each cycle by $r$ at the end of its \Compute operation. 
A light is {\em persistent} from one computational cycle to the next: the color is not automatically reset at the end of a cycle;  the robot is otherwise oblivious, forgetting all other information from previous cycles.
In \LU, the \Look operation produces a colored snapshot; i.e., it returns the set of pairs 
 $(\mathit{position},\lcolor)$ 
of the other robots\footnote{If  (strong) multiplicity detection is assumed, the snapshot is a multi-set.}.
Note that if $|C|=1$, then the light is not used; thus, this case corresponds to the $\OB$ model. 

It is sometimes convenient to describe a robot $r$ as having $k\geq 1$ lights, denoted
 $r.\light_1,\ldots,\\r.\light_k$, where the values of $r.\light_i$ are from a finite set of colors
 $C_i$, 
 and to consider $\Light[r]$ as a $k$-tuple of variables; clearly, this corresponds to $r$ having a
 single light that uses $\Pi^{k}_{i=1}|C_i|$ colors.

The lights provide simultaneously   persistent memory and 
direct means of communication, although both limited to a constant number of bits per cycle.
 Two sub-models of $\LU$ have been defined and investigated, each offering only one
 of these two capabilities.
 
 In the first model, $\FS$,  a robot can only see the color of its own light; that is, the light is
 an {\em internal} one and its color merely encodes an internal state.
 Hence  the robots are {\em silent}, as in $\OB$; but 
  are {\em finite-state}. Observe that a snapshot in $\FS$ is the same as in $\OB$.

In the second model, $\FC$,  the lights are {\em external}: a robot
 can communicate to the other robots through its colored light
  but forgets the color of its own light by the next cycle; that is,  robots are
 {\em  finite-communication} but  {\em oblivious}.  
 A snapshot in $\FC$ is like in \LU except that, for the position $x$ where the robot
 $r$ performing the $Look$ is located,  $Light[r]$ is omitted from the set 
 of  colors present at $x$.

 In all the above models,
 a {\em configuration} $\textit{C}(t)$ at time $t$ is the multi-set of  the $n$ pairs 
 of the 
 $(x_i(t), c_i(t))$, where $c_i(t)$ is the color of robot $r_i$ at time $t$. 

\subsection{Activation Schedulers and Energy Restriction}
%-------------------------------------

In all computational models just described, 
in each synchronous round, some robots become active 
and they  execute their \LCM cycle in complete synchrony.
The choice of which non-empty subset of the
robots is activated in a specific round is
considered to be under the control of
an adversarial {\em activation scheduler}
 constrained  to be fair; that is, every robot will become active
infinitely often.

Given a synchronous scheduler $\cal S$ and a  set of robots 
$R$, an {\em activation  sequence}  of $R$ by (or, under) $\cal S$
  is  an infinite sequence  $E= \langle e_1, e_2, \ldots, e_i, \ldots \rangle$,
where   $e_i\subseteq R$ denotes the set of robots
activated in round $i$,  satisfying   the {\em fairness constraint}:
$$[(\forall r\in R\; \exists e_i : r\in e_i)  \textbf{ and}\  (\forall i \geq 1,  r\in e_i  \Rightarrow  \exists j>i : r\in e_j)].$$
Let ${\cal E}({\cal S},R)$ denote the set of all activation sequences  of $R$ by $\cal S$.

In the standard  synchronous scheduler (\SSY), first studied in \cite{SY}
 and often called  {\em semi-synchronous},  each sequence  
 $E= \langle e_1, e_2, \ldots, e_i, \ldots \rangle\in {\cal E}(\SSY,R)$
 satisfies the {\em basic condition}
\begin{equation} \forall i\geq 1,  \emptyset \neq e_i\subseteq R.\end{equation}

 The special  {\em fully-synchronous} (\FSY) setting,
 where every robot is  activated in every round, 
 corresponds to  further
 restricting   the activation sequences 
 %$E= <e_1, e_2, \ldots, e_i, \ldots>$
by imposing  $\forall i\geq 1, e_i=R$. 
Notice that, in this setting,
  the activation scheduler has no adversarial power. 
 Another special setting is   defined by the 
well-known {\em round-robin}  (\RR)
scheduler (see, e.g., \cite{DePP20,TWK}), whose generalized definition corresponds to
 adding the restriction:  $[\exists p>1 : (\cup_{1\leq i\leq p} e_i = R)\ \textbf{ and}\
(\forall 1\leq i\neq j\leq p, [e_i \cap e_{j}=\emptyset])\
   \textbf{ and}\ (\forall  i\geq 1,  [e_i=e_{i+p}])]$.

We study systems 
 of  {\em energy-constrained robots} under the standard synchronous
 activation scheduler.
 More precisely, 
 we study  systems where  a robot (i) has just enough energy to execute a cycle,
(ii) it cannot be activated in a round unless it has full energy, and
(iii) its depleted energy is  regenerated after one round.

These three conditions  clearly have an impact  on  
 the possible activation sequences of the robots. 
   In particular,
since a robot with depleted energy cannot be activated,
 the basic condition on  $e_i$   becomes 
\begin{equation} \forall i\geq 1, e_i\subseteq R^*[i] \textbf{ and } R^*[i]\not=\emptyset \Rightarrow e_i\not=\emptyset\end{equation}
where $R^*[i] \subseteq R$ denotes the set of robots with full energy
in round $i$.  Furthermore, since it takes   a round to
regenerate depleted energy, $e_i$ must also satisfy
\begin{equation}\forall i\geq 1, (e_i \cap e_{i+1} = \emptyset)\end{equation}
 
 Notice that, since $e_i\subseteq R^*[i]$, 
 it is possible that ${e}_ {i}= R$ when  $R^*[i] =R$. Should this be the case,
 since a robot  has just enough energy to execute a cycle, then
 $R^*[i+1]=\emptyset$ and, since a robot with depleted energy cannot be activated,
 ${e}_ {i+1}=\emptyset$.
Further observe  that, due to the conditions imposed by
the energy limitations, 
    if $\emptyset\neq e_i\neq R$ then 
\[
\forall j\geq i,  [\emptyset\neq e_j\neq R\; \textbf{ and}\
(e_j \cap e_{j+1} = \emptyset)]\,.
\]
that is, if fewer than $|R|$  full energy robots are activated in any round $i$, 
then $R^*[j] \neq R$ for all $j\geq i$.

In other words,  
 the activation sequences of the
  energy-constrained robots 
%  under the standard synchronous scheduler
are infinite sequences where the prefix is a  (possibly empty)
  alternating sequence of $R$ and $\emptyset$, and, if the prefix is finite, the rest are non-empty sets
  satisfying the constraint $(e_i \cap e_{i+1} = \emptyset)$.

  Notice that this set of sequences, denoted by ${\cal E}(\SSY_\mathit{res},R)$,
  is not a proper subset of ${\cal E}(\SSY,R)$ since some sequences might have 
  empty sets in their prefix.

  Consider now  the synchronous  scheduler, we shall call \RSY,  obtained from \SSY\ by
adding the following {\em restricted-repetition
  condition} to its activation sequences:
\[
\begin{aligned}
	& [\forall i\geq 1, e_i=R]\  \textbf{ or } \\
	& [\exists p\geq 0 : ( [\forall  i\leq p,  (e_i=R)]\ \textbf{ and }\\
	& \phantom{[\exists p\geq 0 : ( } [\forall  i > p,( \emptyset\neq e_i\neq R\ \textbf{ and }\  e_i \cap e_{i+1} = \emptyset)])]\,,
\end{aligned}
\]
 \noindent that is,  ${\cal E}(\RSY,R)$ is composed of sequences where
 the prefix is a  (possibly empty) sequence of $R$ and, if the prefix is finite, the rest are non-empty sets
  satisfying the constraint $(e_i \cap e_{i+1} = \emptyset)$.

There is an obvious bijection $\phi$\ between ${\cal E}(\SSY_{res},R)$\ and 
 ${\cal E}(\RSY,R)$, where $\phi(E)$ corresponds to
  removing all empty sets from $E\in{\cal E}(\SSY_{res},R)$; to this 
corresponds the obvious identity between
 the computation performed by $R$ under  
$E$ and that performed under
$\phi(E)$.
Informally, under $\SSY_{res}$, if all robots are activated in the same round $i$, 
they will be all idle in  round $i+1$, and they will all be with full energy
in  round $i+2$.  Since no activity takes place in  round $i+1$,
the selected set $e_{i+2}$ will perform exactly the same computation
 as if they had been activated in round $i+1$.

In other words, the computation by  energy-constrained $R$ under 
the standard synchronous scheduler \SSY.
is the same as it would be if the robots in $R$ were energy-unbounded
but the activation was controlled by scheduler \RSY.

This   restricted-repetition   setting has never been studied before;
 observe that it includes both fully synchronous \FSY\ and round robin \RR\
 as special cases.

%-------------------------------------
\subsection{Computational Relationships}
\label{sec:Relation}
%-------------------------------------

Let ${\cal M} = \{\LU, \FC,\FS,\OB \}$  be the set of models under investigation, and 
${\cal S}= \{\FSY, \RSY, \SSY\}$ be the set of activation schedulers under consideration.

We denote by $\mathcal{R}$ the set of all teams of robots satisfying the core assumptions (i.e., they are identical, autonomous, and operate in \LCM cycles), and $R \in \mathcal{R}$ a team of robots having identical capabilities (e.g., common coordinate system, persistent storage, internal identity, rigid movements etc.). By $\mathcal{R}_n \subset \mathcal{R}$ we denote the set of all teams of size $n$.

Given a model  $M \in {\cal M}$, a scheduler $S\in  {\cal S}$, and a team of robots $R \in \mathcal{R}$, 
let $Task(M,S;R)$ denote the set of problems solvable by $R$ in  $M$ 
under adversarial scheduler $S$.

Let $M_1, M_2\in{\cal M}$ and $S_1, S_2\in{\cal S}$.
%Given a model $M_1$  under scheduler  $S_1$ and a model $M_2$  under scheduler  $S_2$,
%$\mathcal{Y}_{\ell}$ in $\text{Y}_{\sigma}$, 
\begin{itemize} 
\item We say that  model $M_1$  under scheduler  $S_1$
%$\mathcal{X}_{\ell}$ in $\text{X}_{\sigma}$ 
is {\em computationally not less powerful than} 
%$\mathcal{Y}_{\ell}$ in $\text{Y}_{\sigma}$, 
model $M_2$  under  $S_2$, denoted by
$M_{1}^{S_1} \geq M_{2}^{S_2}$
%$\mathcal{X}_{\ell}^{\text{X}_{\sigma}} \geq \mathcal{Y}_{\ell}^{\text{Y}_{\sigma}}$,
 if $\forall R \in \mathcal{R}$
  we have $Task(M_1,S_1;R) \supseteq Task(M_2,S_2;R)$.
% we have $Task(\mathcal{Y}_{\ell},\text{Y}_{\sigma};R) \subseteq Task(\mathcal{X}_{\ell},\text{X}_{\sigma};R)$.

\item We say that 
 $M_1$  under  $S_1$
%$\mathcal{X}_{\ell}$ in $\text{X}_{\sigma}$ 
is {\em computationally more powerful than} 
$M_2$  under  $S_2$, 
%$\mathcal{Y}_{\ell}$ in $\text{Y}_{\sigma}$, 
denoted by
$M_{1}^{S_1} >  M_{2}^{S_2}$,
%$\mathcal{X}_{\ell}^{\text{X}_{\sigma}} > \mathcal{Y}_{\ell}^{\text{Y}_{\sigma}}$, 
if 
$M_{1}^{S_1} \geq M_{2}^{S_2}$ 
%$\mathcal{X}_{\ell}^{\text{X}_{\sigma}} \geq \mathcal{Y}_{\ell}^{\text{Y}_{\sigma}}$ 
and
$\exists R \in \mathcal{R}$ such that 
$Task(M_1,S_1;R) \setminus Task(M_2,S_2;R)  \neq \emptyset$.
%$Task(\mathcal{X}_{\ell},\text{X}_{\sigma};R)\backslash Task(\mathcal{Y}_{\ell},\text{Y}_{\sigma};R) \neq \emptyset$.

\item We say that  $M_1$  under  $S_1$
and  $M_2$  under  $S_2$ are {\em computationally equivalent}, denoted by  
$M_{1}^{S_1} \equiv  M_{2}^{S_2}$,
if $M_{1}^{S_1} \geq M_{2}^{S_2}$ and $M_{2}^{S_2} \geq M_{1}^{S_1}$.

\item Finally, we say that 
 $M_1$  under  $S_1$
and  $M_2$  under  $S_2$ are {\em computationally orthogonal} (or {\em incomparable}), denoted by  
$M_{1}^{S_1} \bot  M_{2}^{S_2}$,  if $\exists R_1, R_2 \in \mathcal{R}$
 such that $Task(M_1,S_1;R_1) \setminus Task(M_2,S_2;R_1) \neq \emptyset$ and 
 $Task(M_2,S_2;R_2) \setminus Task(M_1,S_1;R_2) \neq \emptyset$.

 \end{itemize}
 
 For simplicity of notation, for a model $M\in{\cal M}$, let $M^{F}$, $M^{RS}$, and $M^{S}$ denote $M^{\FSY}$, 
 $M^{\RSY}$, and $M^{\SSY}$, respectively;
 and let $Task(M,\FSY;R)$,
 $Task(M,\RSY;R)$, and $Task(M,\SSY;R)$
 be denoted by let  $M^F(R), M^{RS}(R),$ and $M^S(R)$, respectively.
  
% $\mathcal{X}_{\ell}^{F}(R)$, $\mathcal{X}_{\ell}^{RS}(R)$, and $\mathcal{X}_{\ell}^{S}(R)$ denote $Task(\mathcal{X}_{\ell},\text{FSYNC};R)$, $Task(\mathcal{X}_{\ell},\text{RSSYNC};R)$, and $Task(\mathcal{X}_{\ell},\text{SSYNC};R)$ for a model of robot with light $\mathcal{X}_{\ell}$, respectively.
 
Trivially,
 for all $M\in{\cal M}$,
\[
M^{F} \geq M^{RS} \geq M^{S}\,,
\]
also, for all $P\in{\cal S}$,
\[
\begin{aligned}
\LU^{P}&\geq\FS^{P}\geq\OB^{P}\,, \text{ and }\\
\LU^{P}&\geq\FC^{P}\geq\OB^{P}\,.
\end{aligned}
\]

 The following theorems, established in \cite{FSW19}, summarize  the 
 computational relationship between \SSY\ and \FSY.
   
%\subsection{Previous results for \SSY\ and \FSY}
%Cnoditions should be stated in each Theorem. rigid and chirality disoriented
\begin{theorem}[\cite{FSW19}]\label{th:InFandS}~
\begin{enumerate}
\item\label{LUF=FCF>FSF>OBF} $\LU^{F} \equiv \FC^{F} > \FS^{F} > \OB^{F}$,
%\item $\FC^{F} > \FS^{F}$.
%\item $\FS^{F} > \OB^{F}$.
\item\label{FCSneqFSS} $\FC^{S} \bot$ $\FS^{S}$,
%\item $\FS^{S} > \OB^{S}$  and  $\FC^{S} > \OB^{S}$.
\item\label{LUS>FSS>OBSandLUS>FCS>OBS} $\LU^{S} > \FS^{S} >  \OB^{S}$  and\\  $\LU^{S} > \FC^{S} > \OB^{S}$.
\end{enumerate}
\end{theorem}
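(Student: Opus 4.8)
The statement bundles six relations across the \SSY\ and \FSY\ schedulers, so the plan is to split them into the ``free'' half (submodel inclusions), the one non-trivial equivalence (a simulation), and the strict/orthogonal parts (separating problems together with impossibility proofs). The inclusions $\LU^{P}\geq\FC^{P}$, $\FS^{P}\geq\OB^{P}$, and $\LU^{F}\geq\LU^{S}$ (and their analogues) are already recorded in the preliminaries and need no further argument; what remains is (a) the reverse simulation $\FC^{F}\geq\LU^{F}$, and (b) exhibiting, for each claimed strict inequality or orthogonality, a team $R$ and a problem lying in the difference of the two task sets, proved to be unsolvable on the weaker side.

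For part~\ref{LUF=FCF>FSF>OBF} I would first prove $\FC^{F}\equiv\LU^{F}$ by designing, for an arbitrary $\LU$ protocol $\mathcal A$, an $\FC$ protocol $\mathcal B$ whose executions under \FSY\ coincide with those of $\mathcal A$. The design exploits the lockstep nature of full synchrony: since every robot is activated in every round, the colored configuration follows a single deterministic trajectory, and the only obstacle to an \FC-robot faithfully running $\mathcal A$ is that it cannot read the color it is currently ``holding''. The crux is therefore the bookkeeping that lets each robot recover, from what it legitimately observes in \FC, the \LU-color on which $\mathcal A$'s next \Compute\ step must act -- e.g.\ by carrying in its constant-size light enough extra data to replay the pending transition, using the fact that under \FSY\ the robot can re-derive the outcome of its last computation. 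For $\FS^{F}>\OB^{F}$ I would pick a problem whose solution requires each robot to execute a fixed, non-stationary finite sequence of behaviours (so a constant amount of internal state suffices) and then argue impossibility in $\OB$: an oblivious robot in \FSY\ computes a deterministic function of the current snapshot alone, so the team's trajectory on configurations is eventually periodic or unbounded, which the chosen problem forbids; strictness follows, and $\LU^{F}\equiv\FC^{F}>\FS^{F}$ is inherited.

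For the \SSY\ parts~(\ref{FCSneqFSS}) and~(\ref{LUS>FSS>OBSandLUS>FCS>OBS}) the engine is the adversarial power of the semi-synchronous scheduler. For $\LU^{S}>\FC^{S}$ I would use a task that forces a robot to retain, across cycles in which no other robot helps it, information about its own past activations -- impossible in \FC, where a robot forgets its own light -- while for $\LU^{S}>\FS^{S}$ I would use a task in which one robot must learn, hence another must transmit, a piece of information that a silent finite-state robot cannot convey. The orthogonality $\FC^{S}\bot\FS^{S}$ needs both directions: a communication-bound task that \FC\ solves but silent \FS\ cannot, and a self-memory-bound task that \FS\ solves but oblivious \FC\ cannot; each impossibility is obtained by the standard route of constructing, against any candidate protocol, an \SSY\ activation sequence and an initial placement that drive the system into an invariant contradicting the specification (a symmetry invariant on the \FS\ side, a ``blindness to one's own light'' invariant on the \FC\ side). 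I expect the $\FC^{F}\geq\LU^{F}$ simulation to be the main obstacle -- forcing a genuinely oblivious, self-blind robot to reconstruct the exact \LU-state using only the guarantees of \FSY -- with a secondary difficulty in making the separating-task impossibility arguments airtight, since they hinge on carefully choreographed adversary schedules and on invariants that must survive every case of the \Look-\Compute-\Move\ cycle.
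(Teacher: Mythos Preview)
The paper does not prove this theorem: it is quoted from \cite{FSW19} as a background result and carries no proof in the present paper. There is therefore no ``paper's own proof'' to compare your proposal against.

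That said, your sketch has a genuine gap in part~\ref{LUF=FCF>FSF>OBF}. You handle $\FC^{F}\equiv\LU^{F}$ via simulation and $\FS^{F}>\OB^{F}$ via a separating task, but then write that ``$\LU^{F}\equiv\FC^{F}>\FS^{F}$ is inherited.'' It is not: nothing you have established forces the inequality $\FC^{F}>\FS^{F}$ to be strict. You still need a concrete task in $\FC^{F}(R)\setminus\FS^{F}(R)$ for some team $R$. In the present paper the problem CYC plays exactly this role (Lemma~\ref{lem:CYC-FSFimpo} shows $\text{CYC}\notin\FS^{F}$, Lemma~\ref{lem:CYC-FCS} shows $\text{CYC}\in\FC^{S}\subseteq\FC^{F}$), and \cite{FSW19} uses an analogous counting task whose solution requires communicating $\Theta(n)$ bits that a constant-size internal light cannot store. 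Your description of the $\FC^{F}\geq\LU^{F}$ simulation is also too vague to be a proof: under \FSY\ a robot cannot ``re-derive the outcome of its last computation'' from the current snapshot alone, since positions may collapse irreversibly; the actual construction in \cite{FSW19} (and its \RSY\ extension in this paper, Algorithm~\ref{algo:simAmain}) works by having each robot copy its neighbours' colours into auxiliary lights so that a robot can read its own colour off its predecessor's copy.
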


\begin{theorem}[\cite{FSW19}]\label{th:FSYNCvsSSYNC}~
\begin{enumerate}
    \item\label{LUF=FCF>LUS>FCS} $ \LU^{F} \equiv$ $\FC^{F} > \LU^{S}  > \FC^{S} $,
    \item\label{FSFneqLUS} $\FS^{F} \bot$ $\LU^S$,
    \item\label{FSF>FSS} $\FS^{F} > \FS^{S}$,
    \item\label{OBF>OBF} $\OB^{F} > \OB^{S}$,
    \item\label{OBFneqLUSFCSFSS}  $\OB^{F} \bot$ $\LU^{S}$, $\OB^{F} \bot$ $\FC^{S}$, and\\  $\OB^{F} \bot$ $\FS^{S}$.
\end{enumerate}
\end{theorem}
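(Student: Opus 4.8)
The plan is to organize the proof around two kinds of claims: the ``$\geq$'' containments, which follow formally, and the strict separations and incomparabilities, each of which needs a witnessing problem. All the containments come from transitivity applied to the trivial hierarchy $M^{F}\geq M^{RS}\geq M^{S}$ (for every model $M$), $\LU^{P}\geq\FS^{P}\geq\OB^{P}$, and $\LU^{P}\geq\FC^{P}\geq\OB^{P}$ (for every scheduler $P$), together with Theorem~\ref{th:InFandS}. For instance, $\FC^{F}\equiv\LU^{F}\geq\LU^{S}$ gives the containment half of $\LU^{F}\equiv\FC^{F}>\LU^{S}>\FC^{S}$, while $\LU^{S}>\FC^{S}$ is already part of Theorem~\ref{th:InFandS}; and $\FS^{F}\geq\FS^{S}$, $\OB^{F}\geq\OB^{S}$ are instances of the trivial hierarchy, so that $\FS^{F}>\FS^{S}$ and $\OB^{F}>\OB^{S}$ reduce to establishing strictness only.

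The substance is the witness half. For the ``full synchrony beats semi-synchrony'' direction I would isolate a single perpetual task $P_{\star}$ that does most of the heavy lifting: starting from a fixed rigid configuration, the robots must forever cycle through a prescribed sequence of (pairwise non-congruent, automorphism-free) target configurations, moving to the next one at every round, with the requirement that at \emph{every} round the configuration is a clean member of the sequence. Under \FSY, even oblivious robots solve $P_{\star}$: every round each robot reads off from the current positions which configuration it is at, identifies its own role in it, and moves to its prescribed successor position; hence $P_{\star}\in\OB^{F}\subseteq\FS^{F}$. Under \SSY, an adversary that activates a fair proper subset at some round leaves some robots behind and produces a configuration that is not a clean sequence member; iterating, it drives the system through unboundedly many such off-sequence configurations, and an $O(1)$-color light scheme cannot distinguish enough of them to always recover — a standard indistinguishability argument then gives $P_{\star}\notin\LU^{S}$. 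This one separation delivers $\LU^{F}>\LU^{S}$, the $\FS^{F}\setminus\LU^{S}$ direction of $\FS^{F}\bot\LU^{S}$, and all of $\OB^{F}\setminus\LU^{S}$, $\OB^{F}\setminus\FC^{S}$, $\OB^{F}\setminus\FS^{S}$; and the two-robot specialization (go to the midpoint under \FSY, impossible under \SSY\ by the classical impossibility of $2$-robot gathering for oblivious robots) yields the strictness of $\FS^{F}>\FS^{S}$ and $\OB^{F}>\OB^{S}$.

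For the opposite directions I would adapt the problems underlying Theorem~\ref{th:InFandS}. I need one task in $\LU^{S}\setminus\FS^{F}$ — a communication-sensitive task in the spirit of the one separating $\FC^{F}$ from $\FS^{F}$, but chosen so that $O(1)$ lights already suffice for it under \SSY, while silent robots fail on it even under \FSY\ (a symmetry argument: a silent robot cannot detect that a symmetric companion has already acted) — and one task in $\FS^{S}\setminus\OB^{F}$ and one in $\FC^{S}\setminus\OB^{F}$, namely tasks that require persistent state, respectively inter-robot communication, and for which full synchrony is no substitute in an oblivious system because the positions alone never encode the needed history. These complete $\FS^{F}\bot\LU^{S}$ and the three incomparabilities $\OB^{F}\bot\LU^{S}$, $\OB^{F}\bot\FC^{S}$, $\OB^{F}\bot\FS^{S}$. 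The main obstacle is precisely this last step: pinning down the ``memory and communication are irreplaceable even under \FSY'' witnesses and carrying out their impossibility halves rigorously. Those adversarial-scheduling and indistinguishability arguments — not the routine containment bookkeeping, nor the straightforward lockstep algorithms — are where the real work lies.
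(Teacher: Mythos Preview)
This theorem is not proved in the present paper; it is quoted from \cite{FSW19} as a known result. Nevertheless, the paper exposes the concrete witness problems that \cite{FSW19} uses, so a comparison with your plan is possible.

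For the ``\FSY\ beats \SSY'' direction, the actual witness is CGE (Center of Gravity Expansion): each robot must move once to a destination determined by the \emph{initial} center of gravity via a non-invertible floor map; under \FSY\ this is solved in $\FS^{F}$ and $\FC^{F}$ (and a variant CGE* in $\OB^{F}$), while under \SSY, once a proper subset has moved, the original CoG is provably unrecoverable (Lemma~\ref{short-lem:infinite}) --- an information-loss argument, not a counting argument. Your perpetual-lockstep task $P_{\star}$ is a different route and can be made to work, but your impossibility sketch is muddled: if the specification truly demands a clean sequence member at \emph{every} round, then any proper activation under \SSY\ violates it immediately and the ``$O(1)$ colors cannot distinguish unboundedly many off-sequence configurations'' clause is superfluous; if instead you mean a weaker specification where recovery is allowed, then that counting argument needs real content, and it is far from obvious since $\LU$ robots can use lights to resynchronize phases.

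The genuine gap is in your reverse directions. For $\LU^{S}\setminus\FS^{F}$ the witness is CYC (Lemmas~\ref{lem:CYC-FSFimpo} and~\ref{lem:CYC-FCS}): a central robot must cycle through $2^{n-1}$ targets indexed by a counter that the $n-1$ circle robots display on external lights; an $\FS$ robot cannot store an $(n-1)$-bit index in $O(1)$ internal colors even under \FSY, whereas $\FC$ robots implement a distributed full-adder under \SSY. For $\LU^{S}\setminus\OB^{F}$ (and the $\FC^{S}$, $\FS^{S}$ analogues) the witness is OSP (Lemma~\ref{lem:OSP}). Your proposal here is only ``a communication-sensitive task in the spirit of\ldots'' and ``tasks that require persistent state'' --- that is a description of what is needed, not a proof. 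The impossibility halves (why does full synchrony not compensate for silence, or for obliviousness?) are precisely where the specific combinatorics of CYC and OSP do work you have not supplied; a generic symmetry remark does not suffice, since under \FSY\ positions themselves evolve deterministically and can in principle encode history.
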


These theorems hold assuming rigidity and chirality and these results are extended and they hold without rigidity and chirality~ \cite{apdcm}. 
%In \cite{FSW19}, variable Disorientation is necessary to show that the impossibility of $\FS^{F}$ (the problem $\lnot IL$  cannot be solved in $\FS^{F}$). However, we will show that we can prove $\FC^{S} \bot$ $\FS^{S}$ on fixed Disorientation.

% %%%%%%%%%%%%%%%%%%%%%%%%%%
%  \section{Computational Relationship between \RSY\ and \FSY\ and between \RSY\ and \SSY}
%   %%%%%%%%%%%%%%%%%%%%%%%%%%
%  \section{Computational Relationship between \FSY\ and \SSY~\cite{FSW19}}

%\input{SPAA2021-FSYNCHvsSSYNCH}

% %%%%%%%%%%%%%%%%%%%%%%%%%%
%  \section{Computational Relationship between \RSY\ and \SSY}
% %%%%%%%%%%%%%%%%%%%%%%%%%%
%\subsection{Computational Relationships between \RSY\ and \SSY}  

 % %%%%%%%%%%%%%%%%%%%%%%%%%%
  \section{Computational Relationship between \RSY\ and \SSY}
% %%%%%%%%%%%%%%%%%%%%%%%%%%
%\subsection{Computational Relationships between \RSY\ and \SSY}  

In this section we analyze the impact that the presence of the energy constraints
has on the computational capability of the robots. We do so by studying
the computational relationship between \RSY\ and \SSY in each of the four models.

We   show that,  
 $$\mathcal{X}^{RS} > \mathcal{X}^{S} (\forall \mathcal{X} \in \{\FC, \FS, \OB \}),$$
and  $\LU^{RS} \equiv \LU^{S}$.
We also show that 
  the power of \RSY\
makes $\FC^{RS}$ as powerful as  $\LU^{S}$, and even as $\LU^{RS}$ in Section~\ref{LUS=FCRS}.
%Note that the  \RSY\ class  (which includes \FSY\ and Round-Robin) has never
%been studied before.

%----------------------
\subsection{Power of \RSY\ in \texorpdfstring{$\FC,\FS$ and $\OB$}{FCOM,FSTA and OBLOT}}
%----------------------

In this section, we study the relationships among the various models in \RSY\ and we show that  for ${\cal X}\in\{\FC,\FS,\OB\}$ we have
${\cal X}^{RS}>{\cal X}^{S}$.

% {\bf RENDEZVOUS (RDV)~\cite{FPS}}: Two robots $a$ and $b$ gather in the same location, not previously known in advance.

%In the following lemmas, we assume chirality and rigidity.
%It is well known that RDV cannot be solved in SSYNC even if the chirality and rigidity are assumed~\cite{FPS} and can be solved by $\FC$ and $\FS$ in SSYNC~\cite{FSVY}. We show that RDV cannot be also solved in RSSYNC. 
We start by showing  that Rendezvous problem (RDV)~\cite{FPS}, where two robots $a$ and $b$ must gather in the same location not known in advance, cannot be solved  in \RSY. %Since RDV can be solved by $\FC$ and $\FS$ in SSYNC,
%$\OB^{RS}$ and $\FC^S$ ($\FS^S$) are orthogonal.

\begin{lemma}\label{lem:RDV}
%Assume chirality and rigidity. Then,
%\begin{itemize}
    %\item 
    $\exists R \in \mathcal{R}_2$, $\text{RDV} \not \in \OB^{RS}(R)$.
    This result holds even in presence of   chirality and rigidity of movement.
%    \item $\forall \mathit{R} \in \mathcal{R}_2$, $\text{RDV}  \in \FC^{S}(R) \cap \FS^{S}(R)$.
%\end{itemize}
\end{lemma}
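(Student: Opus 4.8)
The plan is to exhibit a two-robot team — anonymous, identical robots with fixed (mutually incoherent) local coordinate systems, but endowed with chirality and rigid moves — for which no $\OB$ algorithm solves RDV under every $\RSY$ sequence. Equivalently, fixing an arbitrary algorithm $\mathcal{A}$, I must produce one $\RSY$ execution (initial positions, local frames, activation sequence) in which the two robots never occupy a common point.

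First I would record the shape of $\RSY$ sequences for $n=2$: each is either the fully-synchronous one $\langle R,R,\ldots\rangle$, or a (possibly empty) block of $R$'s followed by the strict alternation $\{a\},\{b\},\{a\},\ldots$ — a non-empty proper subset of a two-element set is a singleton, and consecutive activated sets must be disjoint. Since solvability in $\OB^{RS}(R)$ demands correctness under all of these, in particular under $\langle R,R,\ldots\rangle$, we are immediately done if $\mathcal{A}$ fails under $\FSY$; so I may assume $\mathcal{A}$ is correct under $\FSY$.

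Second, the main tool is the \emph{symmetric configuration}: put $a$ and $b$ at distance $d>0$ with equal units of length and with local orientations differing by exactly $\pi$ (this is compatible with chirality, being a rotation rather than a reflection). Then the two robots receive \emph{identical} snapshots, hence compute the same destination in their respective frames, hence make globally opposite displacements. Consequently: under $\FSY$ from a symmetric configuration the midpoint is fixed and the configuration stays symmetric, its gap vector evolving by $v\mapsto v-2F(v)$, where $F(v)$ is the global displacement of a robot that sees the other at relative position $v$ in this setting; while under the alternation $\{a\},\{b\},\ldots$ (with or without an $\FSY$ prefix) a one-line induction gives the gap evolution $v\mapsto v-F(v)$, a single-robot move being half of a synchronous step. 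Because $\mathcal{A}$ is correct under $\FSY$, the map $v\mapsto v-2F(v)$ sends every nonzero gap to $0$ in finitely many steps; taking the last nonzero point of such an orbit yields a gap $v^*\neq 0$ with $F(v^*)=v^*/2$, so from the symmetric configuration with gap $v^*$ a single synchronous round already gathers the robots at the midpoint.

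Third, and this is where I expect the real difficulty, I would run the $\RSY$ execution consisting of the symmetric configuration with gap $v^*$ and the pure alternation schedule, and prove the robots never coincide. By obliviousness the first activated robot performs exactly $F(v^*)=v^*/2$, reaches the midpoint, and leaves the gap $v^*/2\neq 0$; it then remains to show that, whatever $\mathcal{A}$ prescribes from there, the orbit under $v\mapsto v-F(v)$ never hits $0$. The heart of the matter is that correctness under $\FSY$ and correctness under alternation are \emph{jointly} unsatisfiable for an $\OB$ algorithm: reaching $0$ along the alternation orbit forces, at some gap $v$ in that orbit, a move $F(v)=v$ — the active robot landing exactly on the other — but feeding that same behaviour into a synchronous round from the symmetric configuration with gap $v$ makes the two robots \emph{swap}, which preserves the gap, and iterating this produces an $\FSY$ execution that never gathers, contradicting $\FSY$-correctness. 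Turning this ``midpoint-like versus chase-like'' clash into a rigorous argument is the main obstacle: one must carry out the simultaneous analysis of the two gap dynamics and dispose of the configurations where the motion leaves the symmetric/collinear regime, most plausibly by also varying the (adversarial) local frames away from the symmetric ones, so that an activated robot can be forced into a disproportionately large displacement while the other is held fixed, blocking both gathering and termination. The remaining ingredients — snapshot and fixed-frame bookkeeping, and the enumeration of $\RSY$ schedules for two robots — are routine.
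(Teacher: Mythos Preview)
Your swap argument is exactly the paper's: if $\mathcal{A}$ solved RDV under $\RSY$ it would solve it under the pure alternation; at the gathering step the active robot computes $F(v)=v$; but then the $\FSY$ execution from the symmetric configuration with gap $v$ has the robots swap forever, contradicting $\RSY$-correctness (since $\FSY\subset\RSY$). That is the whole proof.

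Where you go astray is in what follows. The detour through $v^*$ with $F(v^*)=v^*/2$ is unnecessary: once the swap observation gives $F(v)\neq v$ for every $v\neq 0$, the alternation orbit $v\mapsto v-F(v)$ from \emph{any} nonzero starting gap never reaches~$0$, full stop. More importantly, the ``main obstacle'' you anticipate does not exist. With the two local frames related by a rotation of~$\pi$ and equal unit, a one-line check shows that at \emph{every} instant each robot sees the other at precisely the current global gap vector~$v$ in its own frame, regardless of where the midpoint has drifted; hence the recursion $v\mapsto v-F(v)$ holds exactly throughout the alternation, in~$\mathbb R^2$, with no ``symmetric/collinear regime'' to leave and no need to vary the frames adversarially. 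Your proposed completion would send you chasing a difficulty that isn't there.
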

%\color{red}
\begin{proof} 
%The possibility result can be proved by~\cite{FSVY}.
%
%Similarly to the case of \SSY~\cite{FSVY}, the proof can be obtained
%even assuming   chirality and rigidity of movement. %, and the special scheduler \RSY.  
Consider two robots $a$ and $b$ with the same  chirality. The two robots have exactly the same view of the universe.
Assume by contradiction that $A$ is a solution
protocol. Consider an execution {\cal E}  of $A$ where, in the last round, only one robot, say $a$, is activated and moves (achieving rendezvous for the first time) while the other robot does not move in that
round (such an execution can be shown to exist). Consider now the execution up to (and excluding)
the last round; at this point, proceed by activating not just one (as in  {\cal E}), but both robots. When
this happens, $a$ will perform the same move as in {\cal E}, whose destination is the observed position
of $b$. Since the view of $b$ is specular, once activated, $a$  will choose the observed position of $b$  as its
destination. The result will be just a switch of  positions of the robots. Since the robots are oblivious,
in the same conditions they will repeat the same actions; this means that, if they are both activated
in every turn from now on, they will continue to switch without ever gathering.
%
%The proof can be obtained similarly to the case of SSYNC~\cite{FSVY},  
%even assuming   chirality and rigidity of movement.  
%Consider two robots $a$ and $b$ with the same  chirality. The two robots have exactly the same view of the universe.
%Assume, by contradiction, that $\textit{A}$ is a solution protocol in RSSYNC. 
%Consider an execution of $\textit{A}$  where  a single  robot is activated in every round.
% Since in \RSY\ this execution must  alternate activation of $a$ and $b$, 
%the movement of a robot cannot be of a certain distance toward the other, otherwise the two will converge to each other but they will never meet in the same point. Indeed, it is easy to see that 
% the only action  that can possibly achieve RDV under this schedule is an asymmetric one: when one is   activated (say $a$), it must move on the other, when    $b$ is activated, it should    not move.  
%Consider now another execution $\textit{A}'$, allowed in RSSYNC,  when   both robots are  activated at every round. In this case, under the algorithm described above,  they will continue to switch   positions without ever meeting.
\end{proof}
%\color{black}

The following problem was introduced to show that $\OB^{F}$ is computationally more powerful than $\OB^S$, and that models $\OB^{F}$ and $\FC^S$ (or $\FS^S$) are incomparable~\cite{FSW19}.
The problem can also take a role to show $\OB^{RS} > \OB^S$ and orthogonality of $\OB^{RS}$ and $\FC^S$ (or $\FS^S$).

\begin{figure} [tbh]
\centering
\includegraphics{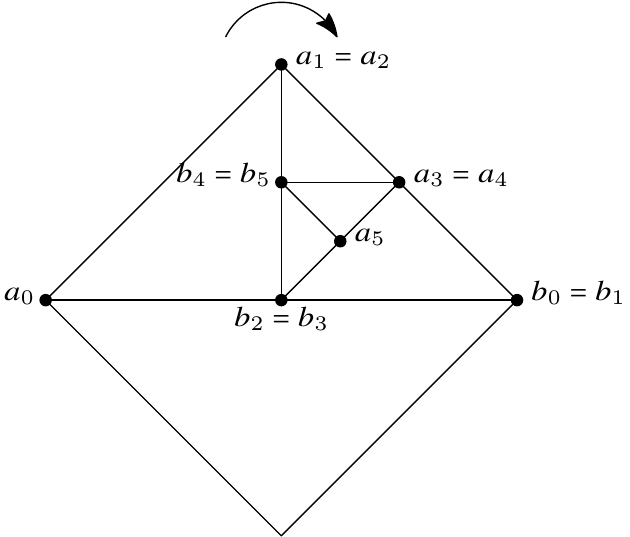}
%\centering\includegraphics[keepaspectratio, width=8.5cm]{RNER.pdf}
    \caption{Shrinking Rotation (SRO). Configurations $C_0,C_1,C_2,C_3,\dots$ where $C_i=(a_i,b_i)$.}
    \label{fig:RNER}
\end{figure}

\begin{definition}{\bf SHRINKING ROTATION (SRO)}~\cite{FSW19}:
Two robots $a$ and $b$ are  initially  placed in arbitrary distinct points (forming the initial configuration $C_0$).
% forming configuration $C_0$. 
The two robots uniquely identify a square (initially $Q_0$) whose diagonal is given by the segment  between them\footnote{By  square, we means the entire space delimited by the four sides.}. 
Let $a_0$ and $b_0$ indicate the initial positions of the robots, $d_0$  the segment between them, and $\length(d_0)$ its length.
Let $a_i$ and $b_i$ be the  positions of $a$ and $b$ in   configuration $C_i$ ($i \geq 0$). 
The problem consists of moving from  configuration $C_i$  to $C_{i+1}$ in such a way that 
either Condition~1 or Condition~2 is satisfied, and Condition~3 is satisfied in either case:

\begin{itemize}
%\item[Condition~1]  Condition~1:$d_{i+1}$ is a $90$ degree clockwise rotation of $d_i$
\item Condition~1: $d_{i+1}$ is a $90^\circ$ clockwise rotation of $d_i$
% around  $d_i$'s midpoint and thus  $Q_i = Q_{i-1}$, 
and thus  $\length(d_{i+1})= \length(d_i)$,
\item Condition~2: $d_{i+1}$ is a ``shrunk'' $45^\circ$ clockwise rotation of $d_i$ 
%along one endpoint 
such that $\length(d_{i+1}) = \frac{\length(d_i)}{\sqrt{2}}$, 
\item Condition~3: $a_{i+1}$ and $b_{i+1}$ must be contained in the square $Q_{i-1}$\footnote{We define $Q_{-1}$ to be the whole plane.}.
\end{itemize}
\end{definition}

\Newcodeline
\begin{algorithm}[tp]
\caption{{\tt Alg-SRO} for robot $r$.}
\label{alg:RNER}
%{\footnotesize
\small 
\begin{tabbing}
111 \= 11 \= 11 \= 11 \= 11 \= 11 \= 11 \= \kill
%{\em Parameters}: scheduler\crm
%{\em lights}: me.color $\in \{A,B\}$\crm
{\em Assumptions}: $\OB$, \RSY\crm
\crm
{\em State Look}\crm
\> $r.\pos$, $\other.\pos$: positions of robot $r$ and the other robot;\crm
%Note that $r$ can see only colors of own robot and cannot see the other robots' color in $\mathcal{FSTA}$.\crm
%(Note I am robot $x$)\crm\crm
%{\em Input}:
\crm
{\em State Compute}\crm

\Cl \tabfill{$r.\des \leftarrow$ the point of clockwise rotating by $90^\circ$ around the midpoint of $r.\pos$ and $\other.\pos$}\crm

\crm
%This is obtained by the following simple algorithm:
%- rotate of 90 degree with respect to the midpoint between me and the other robot

{\em State Move}\crm
\> Move to $r.\des$
\end{tabbing}
%}
%}
\end{algorithm}

% 
%The resulting action between $C_i$ and $C_{i+1}$ is   either a perfect $90$ degree rotation of the segment (when both robots are activated) or a $45$ degree rotation decreasing the segment's size (when one robot is activated).

\begin{lemma}\label{lem:SRO-OBRS}
$\forall \mathit{R} \in \mathcal{R}_2, \text{SRO}  \in \OB^{RS}(R)$, 
assuming common chirality and rigid movement.
%\color{blue}Since these are default assumptions, they are not stated\color{black}
\end{lemma}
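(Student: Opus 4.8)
The plan is to show that Algorithm~{\tt Alg-SRO} solves SRO. Write $\varrho$ for the $90^\circ$ clockwise rotation of the plane (well defined, and the same for both robots, by common chirality), let $Q_i$ be the square whose diagonal is $d_i=[a_i,b_i]$ (the problem's square for $C_i$), and put $Q_{-1}=Q_{-2}=\mathbb{R}^2$; by rigidity every activated robot reaches its computed destination exactly. First observe that, with $|R|=2$, an activation sequence under \RSY\ is one of: (i)~$e_i=R$ for every $i$; or (ii)~for some finite $\tau\ge 0$ one has $e_i=R$ for $i\le\tau$ while for $i>\tau$ each $e_i$ is a singleton with $e_i\cap e_{i+1}=\emptyset$, i.e.\ the two robots are activated strictly alternately from round $\tau+1$ on. When both robots move in a round $i$, each uses the common midpoint $m=\tfrac12(a_i+b_i)$ and moves to the $90^\circ$ clockwise rotation of its own position about $m$; since $a_i-m=-(b_i-m)$, the two new positions are $m\pm\varrho(a_i-m)$, so $m$ is unchanged and $d_{i+1}$ is $d_i$ turned $90^\circ$ clockwise about it (Condition~1), and — a $90^\circ$ rotation of a square about its centre being the square itself — $Q_{i+1}=Q_i$. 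Hence in case~(i), and in every round $i\le\tau$ of case~(ii), the robots occupy two opposite vertices of $Q_0$, $Q_i=Q_0$, and $a_i,b_i\in Q_0=Q_{i-2}$ (Condition~3). So case~(i) already solves SRO, and only rounds $>\tau$ in case~(ii) remain.

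Next I analyse a round that activates a single robot: say the robots are at $p,q$ with midpoint $m$, and the one at $p$ moves, to $p'=m+\varrho(p-m)$. With $u:=p-m$ (so $q-m=-u$), the segment $[p',q]$ has midpoint $m+\tfrac12(\varrho u-u)$ and half-length $\tfrac12|\varrho u+u|=\tfrac{1}{\sqrt2}|u|$; thus $\length(d')=\length(d)/\sqrt2$, and $d'$ points along $\varrho u+u$, which is $u$ rotated $45^\circ$ clockwise — exactly Condition~2. Moreover $p'$ is the vertex of $Q$ adjacent to $q$ obtained from $p$ by a clockwise quarter-turn, so $[p',q]$ is a \emph{side} of $Q$; I shall also use the elementary fact that, for any square with centre $z$ and any side $\sigma$, the square with diagonal $\sigma$ has $z$ among its four vertices. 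To handle rounds $>\tau$, let $v_{-1}$ be the position (just before it moves) of the robot activated in round $\tau+1$, let $v_0$ be the other robot's position, and set $v_{t+1}=\tfrac12(v_{t-1}+v_t)+\tfrac12\varrho(v_{t-1}-v_t)$ for $t\ge 0$, i.e.\ $v_{t+1}$ is $v_{t-1}$ turned $90^\circ$ clockwise about the midpoint of $v_{t-1}$ and $v_t$. Following the strict alternation and the single-round description, one checks that for every $s\ge1$: after round $\tau+s$ the robot positions are exactly $\{v_{s-1},v_s\}$; $Q_{\tau+s}$ is the square with diagonal $[v_{s-1},v_s]$; that segment is a side of $Q_{\tau+s-1}$ (the mover sat at an endpoint of its diagonal); and round $\tau+s$ is a Condition-2 step. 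Also $[v_{-1},v_0]$ is a diagonal of $Q_\tau$ (opposite vertices at the end of the prefix, or the defining diagonal of $Q_0$ if $\tau=0$).

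The heart of the argument is then a short algebraic identity: plugging the recurrence at indices $t$ and $t-1$ into each other and using $\varrho^2=-\mathrm{id}$ gives $v_{t+1}=\tfrac12(v_{t-2}+v_{t-1})$ for all $t\ge1$ — that is, $v_{t+1}$ is the \emph{centre of $Q_{\tau+t-1}$} (the midpoint of its diagonal). This yields Condition~3 for every round $>\tau$: for round $\tau+1$, $v_0$ and $v_1$ are vertices of $Q_\tau$, which equals $Q_{\tau-1}$ if $\tau\ge1$ (and $Q_{-1}=\mathbb{R}^2$ if $\tau=0$); for round $\tau+2$, $v_1$ is a vertex and $v_2$ the centre of $Q_\tau=Q_{(\tau+2)-2}$; and for round $\tau+s$ with $s\ge3$, $v_s$ is the centre of $Q_{\tau+s-2}$ while $v_{s-1}$ is the centre of $Q_{\tau+s-3}$, which — since $Q_{\tau+s-2}$ is the square built on a side of $Q_{\tau+s-3}$ — is a vertex of $Q_{\tau+s-2}$ by the fact above; in each case both positions lie in $Q_{(\tau+s)-2}$. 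Together with the prefix rounds, every round satisfies Condition~1 or~2 and Condition~3, so {\tt Alg-SRO} solves SRO under \RSY\ assuming common chirality and rigid movement. The main obstacle is precisely this last step: a priori, in a single-activation round the moving robot could land on the \emph{other} new vertex of $Q_{\tau+s-1}$, the one lying outside $Q_{\tau+s-3}$, which would violate Condition~3; showing that it always lands instead on the centre of the two-rounds-old square — equivalently, establishing the identity $v_{t+1}=\tfrac12(v_{t-2}+v_{t-1})$, which makes the whole trajectory spiral inward through the nested squares — is the crux, and it is exactly where common chirality (making ``$90^\circ$ clockwise'' unambiguous and common to both robots) and rigidity are used.
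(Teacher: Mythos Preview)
Your proof is correct and follows the same overall approach as the paper: the same algorithm {\tt Alg-SRO}, the same dichotomy on \RSY\ schedules for two robots (perpetual simultaneous activation versus a finite prefix followed by strict alternation), and the same verification that simultaneous rounds realise Condition~1 while single-activation rounds realise Condition~2. Where you go beyond the paper is in the treatment of Condition~3 during the alternation phase: the paper simply asserts that ``each new square is included in the previous (in a kind of fractal way)'' and refers to the figure, whereas you prove it via the clean identity $v_{t+1}=\tfrac12(v_{t-2}+v_{t-1})$, showing that each new position is the centre of the two-rounds-old square and hence a vertex of the current one. This is a genuine added value, since the naive picture alone does not rule out the mover landing on the ``wrong'' adjacent vertex of $Q_{\tau+s-1}$ (the one outside $Q_{\tau+s-3}$); your identity pins this down. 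One tiny slip: in the prefix analysis you write ``$a_i,b_i\in Q_0=Q_{i-2}$'', which is literally false for $i=1$ (there $Q_{i-2}=Q_{-1}=\mathbb{R}^2$), but the intended containment $a_i,b_i\in Q_{i-2}$ of course still holds.
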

\begin{proof}
The proof is  by construction: Algorithm~\ref{alg:RNER}   prescribes a robot to rotate clockwise by $90^\circ$ with respect to the midpoint between the two robots.
%Note that the resulting action between $C_i$ and $C_{i+1}$ is   either a perfect $90$ degree rotation of the segment (when both robots are activated) or a $45$ degree rotation decreasing the segment's size (when only one robot is activated).
Note that any schedule in \RSY\  allows  consecutive simultaneous activation of the two robots, but as soon as one of them is activated alone during a round, the subsequent  rounds necessarily consist of  an alternation of activations of each. So, there are only two possible types of executions under \RSY\ with two robots: $(1)$ a perpetual activation of  both robots in each round, or $(2)$ the activation of both for a   finite number of rounds, followed by a perpetual  alternation of activations of each.
In Case (1), the problem is clearly solved by  Algorithm~\ref{alg:RNER}   because the robots keep rotating of $90^\circ$ clockwise around their mid-point,    fulfilling Conditions~1 and 3. 
In Case (2), once the execution becomes an alternation of activations,  the robots  move alternately,  achieving, at each movement, a $45^\circ$ rotation of the segment between them  decreasing its size precisely as required to fulfill  Condition~2 and 3. In fact, in  doing so, they move within a shrinking area where each new square is included in the previous (in a kind of fractal way, see Figure~\ref{fig:RNER}). 
%
% in a shrinking area where each new part is included in the previous (in a kind of fractal way, see Figure~\ref{fig:RNER}) fulfilling Condition~2. In both cases, Condition~3 is also satisfied.
Then SRO can be solved with $\OB$ in \RSY.
%Essentially, when they both move they just keep rotating clockwise on the same square (condition 1). Once they start alternating, they rotate, alternatively, in a shrinking area where each new part is included in the previous (in a kind of fractal way) fulfilling condition 2. I am attaching a figure showing where each point will go starting from  r_0 --> r_1,     r_0'-->r_1'     ,r_1-->r_2,     r_1'-->r_2'   ...)
%The third condition is to prevent further full 90 degree rotations  (i.e., both robots moving), when a 45 degree   rotation has happened Also, the condition is crucial to prevent  a robot moving alone twice consecutively (say at step i and i+1); in fact, in such a case, it will "exit" square Q_{i-1}.
\end{proof}
 
\begin{lemma}[\cite{FSW19}]\label{lem:SRO-FCSFSSimpo}
 $\exists \mathit{R} \in \mathcal{R}_2, \text{SRO} \not \in \FC^{S}(R) \cup \FS^{S}(R)$.
 This result holds even in presence of   chirality and rigidity of movement.
\end{lemma}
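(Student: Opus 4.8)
Since the statement is attributed to \cite{FSW19}, I would reconstruct the argument there. Fix a two‑robot team $R=\{a,b\}$ (with chirality and rigid moves) and suppose, for contradiction, that a protocol $A$ solves $\text{SRO}$ for $R$ in $\FC^{S}$; the $\FS^{S}$ case is handled by the same scheme at the end. The decisive feature of the general \SSY\ scheduler that \RSY\ lacks for two robots is that it may activate the \emph{same} robot in two consecutive rounds (or overlap a single activation with a double one). The plan is to exhibit a fair \SSY\ schedule built mostly from repeated solo activations of one robot on which $A$ is forced either to make no progress (so the sequence $C_0,C_1,\dots$ of distinct configurations is finite) or to violate Condition~3.

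First I would analyze a ``solo phase''. If the adversary activates $a$ alone for a stretch of rounds, then since in \FC\ the robot is oblivious and the unactivated robot $b$ keeps a fixed color, $a$'s destination in each such round is a deterministic function of the relative position of $b$ only; hence the configuration evolves as $C\mapsto g(C)\mapsto g(g(C))\mapsto\cdots$ for a fixed map $g$, so it is eventually constant (a \emph{freeze}) or eventually periodic of period $\ge 2$ (an \emph{oscillation}). The heart of the proof is a geometric lemma: \emph{if $C\to C'$ and $C'\to C''$ are two consecutive configuration changes of a valid $\text{SRO}$ execution, both produced by one robot moving while the other stays at a point $\beta$, then either $C''=C$ and both steps are $90^\circ$ rotations, or Condition~3 fails for $C'\to C''$}. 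The proof is a case analysis: a valid step with $b$ fixed rotates the segment $\beta a$ either about $\beta$ or by interchanging its two endpoints, and composing two such rotations while the length changes by a factor in $\{1,1/\sqrt2\}$ and the direction turns by $45^\circ$ or $90^\circ$ carries the moving robot out of the square $Q$ of the configuration two steps back, except when the configuration returns. Consequently a solo'd robot can never perform two consecutive ``shrinking'' changes.

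Given this, I would finish as follows. If $A$'s solo behavior is a freeze, then under the fair schedule ``$a$ alone for a long stretch, $b$ alone once, repeat'' each stretch yields at most one configuration change and all changes are of the ``one robot moves, the other fixed'' type; to produce an infinite sequence of distinct configurations the changes by $a$ and $b$ must interleave one‑for‑one, but the adversary can instead activate $a$ alone \emph{twice} before touching $b$, forcing either a missing change in that stretch (so consecutive changes become same‑robot ones, contradicting the lemma) or a second consecutive change by $a$ (again contradicting the lemma). If instead $A$'s solo behavior is an oscillation $C\leftrightarrow C'$ via $90^\circ$ rotations, I would let the adversary drive the system to $C$ by solo activations of $a$ and then activate \emph{both} robots: two robots always have mutually symmetric views, so a simultaneous step is forced to be a rotation/shrink about the common midpoint, whereas the preceding solo‑oscillation kept $b$ fixed while only $a$ moved; checking the square referenced by Condition~3 right across this junction shows it cannot contain the post‑switch configuration. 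In every branch Condition~3 is violated or no progress is made, so $\text{SRO}\notin\FC^{S}(R)$. For $\FS^{S}$ the only change is that during a solo stretch the configuration evolves under the robot's internal‑state sequence, which is finite and hence ultimately periodic; within a period the per‑round behavior is fixed, so the robot either moves periodically (two consecutive same‑robot changes, lemma applies) or never moves (freeze, previous case), and persistent memory cannot help because Condition~3 constrains configurations, which do not advance while a robot merely updates its state; hence $\text{SRO}\notin\FS^{S}(R)$ as well.

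The step I expect to be the real work is the geometric lemma together with plugging the oscillation loophole: one must rule out every placement of the rotation center and every Condition‑1/Condition‑2 combination for two consecutive solo moves, and then use the forced midpoint‑preservation of simultaneous two‑robot moves to show the oscillating solo strategy cannot be reconciled with what happens when the adversary switches to activating both robots. Everything else — the schedule constructions, fairness bookkeeping, and the reduction of the \FS\ case to the \FC\ case — is routine once the lemma is in hand.
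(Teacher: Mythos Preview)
The paper does not actually prove this lemma; it merely restates the result and cites \cite{FSW19}. So there is no in-paper argument to compare your proposal against, only the outline you would have to match in the original reference.

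Viewed on its own merits, your sketch has the right shape---exploit that \SSY, unlike \RSY, allows the same robot to be activated twice in a row, and derive a Condition~3 violation from two consecutive solo moves---but several steps do not go through as written. First, the claim that a solo-activated $\FC$ robot's orbit is ``eventually constant or eventually periodic'' is unsupported: the map $g$ acts on a continuum of relative positions, not a finite set, so periodicity must be \emph{derived} from the SRO constraints (direction discretises mod $45^{\circ}$, length by powers of $\sqrt 2$), not assumed. Second, the assertion that ``a simultaneous step is forced to be a rotation/shrink about the common midpoint'' is false in $\FC$: the two robots may carry different external colors set in earlier rounds, and even with chirality their local frames can differ (fixed disorientation), so their views need not be symmetric and their moves need not preserve the midpoint. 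Third, your ``freeze'' branch does not reach a contradiction: if $a$ makes one change and then stays, the schedule $a,a,b,a,a,b,\ldots$ simply yields an alternating $a$-change/$b$-change sequence, which is exactly what a correct SRO execution could look like. Finally, the $\FS$ reduction is too quick: periodicity of the internal state does not make ``the per-round behavior fixed'', because the computed destination also depends on the (evolving) relative position. The geometric lemma you isolate is the right fulcrum, but the surrounding case analysis needs substantial repair before the argument closes.
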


We have seen that SRO can be solved in $\OB^{RS}$ but cannot be solved in $\FC^S$ and $\FS^S$.
On the other hand, RDV can be solved in $\FC^{S}$ and $\FS^S$~\cite{FSVY}  but cannot be solved in $\OB^{RS}$ by Lemma \ref{lem:RDV}.
Thus $\OB^{RS}$ and $\FC^{S}$ ($\FS^S$) are orthogonal.
Since $\FC^{RS}$ and $\FS^{RS}$ includes $\OB^{RS}$, we can show that ${\cal X}^{RS}>{\cal X}^{S}$ for ${\cal X}\in\{\FC,\allowbreak\FS,\OB\}$
by Lemmas~\ref{lem:SRO-OBRS} and~\ref{lem:SRO-FCSFSSimpo}.

\begin{theorem}\label{th:RSvsS1}~
 \begin{enumerate}
    \item\label{OBRSneqFCS} $\OB^{RS} \bot$ $\FC^{S}$,
   \item\label{OBRSneqFSS} $\OB^{RS} \bot$ $\FS^{S}$,
   \item\label{FCRS>FCS} $\FC^{RS} > \FC^{S}$, %and
   \item\label{FSRS>FSS} $\FS^{RS} > \FS^{S}$,  
   \item\label{OBRS>OBS} $\OB^{RS} > \OB^{S}$.    

%  \item\label{FCRS<LUS} $\FS^{RS} < \LU^{S}(\equiv \LU^{RS} \equiv \FC^{RS})$
%   \item\label{FCSneqFSRS} $\FS^{RS} \bot$  $\FC^{S}$

   %   \item\label{OBRS<LUS} $\OB^{RS} <\LU^{S}(\equiv \LU^{RS} \equiv \FC^{RS})$ 
%   \item\label{OBRS<LUS} $\OB^{RS} < \LU^{RS}$ 
%   \item\label{FSRS<LUS} $\FS^{RS} <  \LU^{RS}$
\end{enumerate}
\end{theorem}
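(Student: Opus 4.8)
The plan is to assemble the five claims from the two problems SRO and RDV together with the trivial monotonicity facts already recorded in the excerpt. First, recall the relevant ingredients: Lemma~\ref{lem:SRO-OBRS} gives $\mathrm{SRO}\in\OB^{RS}(R)$ for every two-robot team $R$ (with chirality and rigid moves); Lemma~\ref{lem:SRO-FCSFSSimpo} gives the existence of a two-robot team $R$ with $\mathrm{SRO}\notin\FC^{S}(R)\cup\FS^{S}(R)$; Lemma~\ref{lem:RDV} gives a two-robot team $R$ with $\mathrm{RDV}\notin\OB^{RS}(R)$; and from \cite{FSVY} we have $\mathrm{RDV}\in\FC^{S}(R)\cap\FS^{S}(R)$ for every two-robot team. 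Finally, from the global inclusions stated just before Theorem~\ref{th:InFandS}, for any scheduler $P$ we have $\FC^{P}\ge\OB^{P}$ and $\FS^{P}\ge\OB^{P}$, and for any model $M$ we have $M^{RS}\ge M^{S}$; so $\FC^{RS}\ge\OB^{RS}$ and $\FS^{RS}\ge\OB^{RS}$, and hence $\mathrm{SRO}\in\FC^{RS}(R)$ and $\mathrm{SRO}\in\FS^{RS}(R)$ for every two-robot $R$ by Lemma~\ref{lem:SRO-OBRS}.

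For parts~\ref{OBRSneqFCS} and~\ref{OBRSneqFSS}, I would simply observe that orthogonality requires two separating teams in the two directions. In one direction, take the team $R_1$ supplied by Lemma~\ref{lem:SRO-FCSFSSimpo}: then $\mathrm{SRO}\in\OB^{RS}(R_1)\setminus\bigl(\FC^{S}(R_1)\cup\FS^{S}(R_1)\bigr)$, so $\mathrm{SRO}$ witnesses $\OB^{RS}(R_1)\setminus\FC^{S}(R_1)\neq\emptyset$ and $\OB^{RS}(R_1)\setminus\FS^{S}(R_1)\neq\emptyset$. In the other direction, take the team $R_2$ supplied by Lemma~\ref{lem:RDV}: $\mathrm{RDV}\in\FC^{S}(R_2)\cap\FS^{S}(R_2)$ but $\mathrm{RDV}\notin\OB^{RS}(R_2)$, so $\mathrm{RDV}$ witnesses $\FC^{S}(R_2)\setminus\OB^{RS}(R_2)\neq\emptyset$ and $\FS^{S}(R_2)\setminus\OB^{RS}(R_2)\neq\emptyset$. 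By the definition of $\bot$ in Section~\ref{sec:Relation}, this gives $\OB^{RS}\bot\FC^{S}$ and $\OB^{RS}\bot\FS^{S}$.

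For parts~\ref{FCRS>FCS}, \ref{FSRS>FSS}, \ref{OBRS>OBS}: the $\ge$ half is the trivial monotonicity $M^{RS}\ge M^{S}$. For strictness we need one task solvable under $\RSY$ but not under $\SSY$ for each model. Using the team $R_1$ from Lemma~\ref{lem:SRO-FCSFSSimpo}: for $\OB$ we have $\mathrm{SRO}\in\OB^{RS}(R_1)$ (Lemma~\ref{lem:SRO-OBRS}) while $\mathrm{SRO}\notin\OB^{S}(R_1)$ (since $\OB^{S}\le\FS^{S}$ and $\mathrm{SRO}\notin\FS^{S}(R_1)$, or directly from $\OB^{S}\le\FC^{S}$), giving $\OB^{RS}>\OB^{S}$. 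For $\FC$: $\mathrm{SRO}\in\FC^{RS}(R_1)$ (from $\FC^{RS}\ge\OB^{RS}$ and Lemma~\ref{lem:SRO-OBRS}) but $\mathrm{SRO}\notin\FC^{S}(R_1)$ by Lemma~\ref{lem:SRO-FCSFSSimpo}, giving $\FC^{RS}>\FC^{S}$. Symmetrically for $\FS$: $\mathrm{SRO}\in\FS^{RS}(R_1)\setminus\FS^{S}(R_1)$, giving $\FS^{RS}>\FS^{S}$. There is no real obstacle here beyond bookkeeping; the only subtlety worth a sentence is that $\mathrm{SRO}\notin\OB^{S}(R_1)$ must be deduced from $\mathrm{SRO}\notin\FC^{S}(R_1)$ (or $\FS^{S}(R_1)$) together with $\OB^{S}\le\FC^{S}$ — the genuine hardness content was already done in Lemma~\ref{lem:SRO-FCSFSSimpo} and in the construction of Lemma~\ref{lem:SRO-OBRS}, so the theorem itself is purely a matter of combining these facts with the definitions of $>$ and $\bot$.
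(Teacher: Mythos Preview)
Your proposal is correct and follows essentially the same approach as the paper: the paper's argument (given in the paragraph immediately preceding the theorem) combines Lemmas~\ref{lem:RDV}, \ref{lem:SRO-OBRS}, and \ref{lem:SRO-FCSFSSimpo} together with the solvability of RDV in $\FC^{S}$ and $\FS^{S}$ from \cite{FSVY} and the trivial monotonicities, exactly as you do. Your write-up is in fact more explicit than the paper's in spelling out the bookkeeping (e.g., deducing $\mathrm{SRO}\notin\OB^{S}(R_1)$ via $\OB^{S}\le\FC^{S}$), but there is no substantive difference.
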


It remains to determine the relationship between $\FC^{RS}$ and $\LU^{S}$, and between $\FS^{RS}$ and $\LU^{S}$ or $\FC^{S}$.
The equality $\FC^{RS} \equiv$ $\LU^{S}$ follows from the equalities $\LU^{S} \equiv$ $\LU^{RS}$ and $\LU^{S} \equiv$ $\FC^{S}$, which we show constructively in Sections~\ref{LUS=LURS} and~\ref{LUS=FCRS}.
We also show the dominance of $\LU^{S}$ over $\FS^{RS}$  and  the orthogonality of $\FS^{RS}$ with $\FC^{S}$ in the following.

In order to show these results, 
we use the following problem called Cyclic circles (CYC).

\begin{figure}[H]
%    \begin{tabular}{cc}
      %---- ÊúÄÂàù„ÅÆÂõ≥ ---------------------------
%      \begin{minipage}[t]{0.45\hsize}
  %   \centering\includegraphics[width=6cm]{Pattern-Formation-NOTILl.pdf}
%\centering\includegraphics[width=6cm]{Pattern-Formation-NOTILl.pdf}
%\centering\includegraphics[keepaspectratio, width=4cm]{Pattern-Formation-NOTILl.pdf}
%\centering\includegraphics[keepaspectratio, width=3cm, angle=270]{Pattern-Formation-NOTILl.pdf}
%    \caption {The configurations of problem $\lnot$IL}
%    \label{Fig:NOTILl}
\centering
\includegraphics{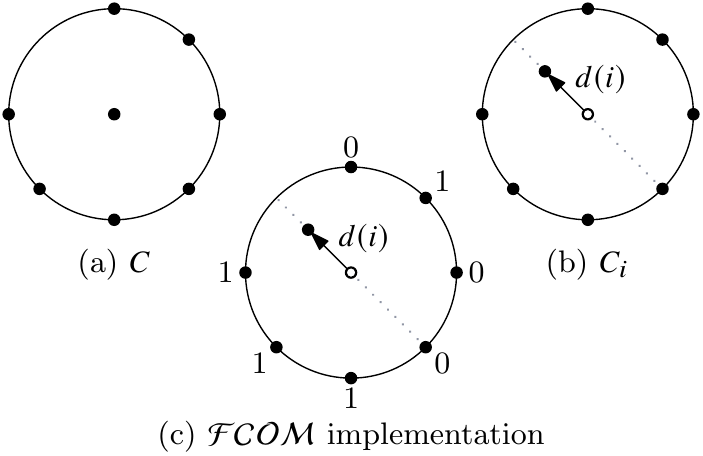}
%    \centering\includegraphics[keepaspectratio, width=8.5cm]{CyclicCircles.pdf}
    \caption {The configurations of problem CYC.}
    \label{Fig:CC}
%        \centering
%        \includegraphics[keepaspectratio, scale=Á∏ÆÂ∞∫]{„Éï„Ç°„Ç§„É´1}
%        \caption{„Ç≠„É£„Éó„Ç∑„Éß„É≥1}
%        \label{„É©„Éô„É´1}
%      \end{minipage} &
      %---- 2Áï™ÁõÆ„ÅÆÂõ≥ --------------------------
%     \begin{minipage}[t]{0.45\hsize}
 %     \centering\includegraphics[keepaspectratio, width=7cm]{Trinangle-Rotation.pdf}
%\centering\includegraphics[keepaspectratio, width=10cm]{Trinangle-Rotation.pdf}
%    \caption{Illustration of TRIANGLE-ROTATION (TAR(d))}
%    \label{fig:TAR}
%        \centering
%        \includegraphics[keepaspectratio, scale=Á∏ÆÂ∞∫]{„Éï„Ç°„Ç§„É´2}
%        \caption{„Ç≠„É£„Éó„Ç∑„Éß„É≥2}
%        \label{„É©„Éô„É´2}
%      \end{minipage}
      %---- Âõ≥„ÅØ„Åì„Åì„Åæ„Åß ----------------------
%    \end{tabular}
  \end{figure}
%  \begin{definition}
%\noindent  {\bf Problem $\lnot$IL}: 
% Three robots $a$, $b$, and $c$,  starting  from the initial 
% configuration shown in Figure~\ref{Fig:NOTILl} (a), must form first the pattern 
% of Figure~\ref{Fig:NOTILl} (b) and then  move to form the pattern of Figure~\ref{Fig:NOTILl} (c).
%  of forming the sequence of patterns described in  Figure~\ref{Fig:NOTILl} assuming   in Figure~\ref{Fig:NOTILl} (a).
%\end{definition}
 \begin{definition}
\noindent  {\bf CYCLIC CIRCLES (CYC)}: 
Let $n \geq 3$, $k=2^{n-1}$, and $d: \mathbb{N} \rightarrow \mathbb{R}$ is a non-invertible function.
The problem is to form a cyclic  sequence of patterns $C, C_0, C, C_1, C, C_2,\ldots,\allowbreak C, C_{k-1}$  where $C$ is a pattern of $n$ robots forming a ``circle'' occupied by $n-1$ robots with one in the center (see   Figure~\ref{Fig:CC} (a)), and $C_i$ (for $0 \leq i \leq k-1$) is  a configuration where the $n-1$ circle robots are in the exact same position, but the center robot occupies a point at distance $d(i)$ from the center on  the radius connecting the center to the missing robot position on the circle (see   Figure~\ref{Fig:CC}~(b)). In other words, the central robot moves to the designed position at distance $d(i)$ and comes back to the center, and the process repeats after the $2^{n-1}$  configurations $C_i$ have been formed.

\end{definition}

\begin{lemma}\label{lem:CYC-FSFimpo}
Let $n \geq 3$. $\exists \mathit{R} \in \mathcal{R}_n, \text{CYC} \not \in \FS^{F}(R)$.
This result holds even in presence of chirality and rigid movements.
%\footnote{Although in \cite{FSW19}, this role was taken by $\lnot IL$ problem, global Disorientation is necessary 
%to show it. However, internal Disorientation is enough to show this lemma.}
\end{lemma}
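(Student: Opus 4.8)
The plan is to show that \FS robots under the fully-synchronous scheduler cannot count to $k=2^{n-1}$ using only their finite internal state, since the snapshot never distinguishes the $k$ distinct target configurations from one another. Concretely, fix the team $R\in\mathcal{R}_n$ to be one in which all robots share a common coordinate system (or at least, place the initial configuration $C$ so that every $C_i$ is geometrically distinct only through the position of the central robot, which lies at distance $d(i)$ along a fixed radius). In \FS the \Look operation returns exactly the same data as in \OB: the set of positions of the other robots, with no colors visible. So the only persistent information a robot carries across rounds is its own internal color, drawn from the fixed finite palette $C_1$ of the protocol; the whole system's persistent memory is a tuple in $C_1^{\,n}$, of constant size independent of $n$ but with $|C_1|^n$ possible values.

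First I would pin down what ``solving CYC'' forces. The required behaviour is a periodic sequence of global configurations $C, C_0, C, C_1, C, C_2,\dots,C,C_{k-1},C,C_0,\dots$; in particular the protocol must, starting from the canonical ``circle'' configuration $C$, eventually produce each of the $k$ distinct displaced configurations $C_i$ and return to $C$, and the order in which the $C_i$ appear is immaterial (the spec only demands all $2^{n-1}$ of them occur cyclically), but each must occur, so at least $k$ distinct ``next target'' decisions must be encoded. Since $d$ is non-invertible, the value $d(i)$ seen in $C_i$ does not by itself reveal $i$; and the circle robots are in identical positions in every $C_i$, so their snapshots in configuration $C$ are literally identical across all $k$ visits to $C$. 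Hence whatever distinguishes ``this is the visit to $C$ after which we must go to $C_i$'' from ``after which we must go to $C_j$'' must reside entirely in the robots' internal colors.

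Next I would run the standard pigeonhole / indistinguishability argument. Consider the (necessarily infinite, eventually periodic) execution of a putative solution protocol $A$ from the initial configuration. Look at the rounds $t_0<t_1<t_2<\dots$ at which the global configuration equals $C$ (these recur because the sequence is cyclic). At each $t_j$ the visible snapshot is the same, so the behaviour of $A$ over the next two rounds (go out to some $C_{i_j}$, come back to $C$) is determined solely by the internal-color tuple $\gamma_j\in C_1^{\,n}$. If the protocol is correct, the map $j\mapsto i_j$ must be surjective onto $\{0,1,\dots,k-1\}$ over any window of $k$ consecutive such visits, hence in particular takes at least $k$ distinct values. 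But $\gamma_j$ determines $i_j$, so the $\gamma_j$ must take at least $k=2^{n-1}$ distinct values in $C_1^{\,n}$. This is not yet a contradiction, since $|C_1|^n$ can exceed $2^{n-1}$. The fix is to exploit fully-synchronous anonymity: in \FSY all robots are activated every round, and the circle robots are mutually indistinguishable in every visit to $C$ (same positions, and by an inductive symmetry argument one shows they can be kept in the same color), so the state tuple $\gamma_j$ is really determined by just the pair (color of the central robot, common color of the circle robots), i.e.\ by an element of a set of size $|C_1|^2$ — still constant. To drive it home one observes that, in \FSY with everyone activated, the transition $\gamma_j\mapsto\gamma_{j+1}$ is a fixed function (a finite automaton on a constant-size state space), so the sequence $(\gamma_j)$ is eventually periodic with period $p\le |C_1|^2$, a constant; therefore $j\mapsto i_j$ is eventually periodic with the same constant period, and for $n$ large enough $2^{n-1}>p$, so $j\mapsto i_j$ cannot be surjective onto a $2^{n-1}$-element set. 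Contradiction.

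The main obstacle is the symmetry-maintenance step: I must argue that the protocol cannot secretly ``break symmetry'' among the $n-1$ circle robots to get a larger effective state space (e.g.\ using the identity of the missing circle position as a landmark to individuate circle robots, thereby giving the system $\Theta(n)$ worth of circle-robot state rather than $O(1)$). Here I would use that the circle robots all have identical snapshots and identical local coordinate frames (this is where fixing $R$ with common chirality, and choosing the configuration $C$ to be symmetric under the rotation cycling the circle positions, matters), so an adversary can force them into a configuration where the would-be landmark rotates along with them; by the standard argument for symmetric configurations of oblivious/finite-state robots (analogous to Lemma~\ref{lem:RDV}), any execution has a counterpart in which all circle robots act identically and keep identical colors, and in that counterpart the effective persistent state is $O(1)$. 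Once this is in hand, the counting argument above closes the proof; the claim then follows since $n\ge 3$ can be taken large enough that $2^{n-1}$ exceeds the constant period $p$ determined by the palette size.
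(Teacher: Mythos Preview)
You identify the right ingredient early (``in $\FS$ the Look operation returns \dots\ no colors visible, so the only persistent information a robot carries across rounds is its own internal color''), but then throw it away by passing to the global tuple $\gamma_j\in C_1^{\,n}$ and trying to cut its size via symmetry. That symmetry step is the gap, and it does not close. The configuration $C$ in CYC is \emph{not} rotationally symmetric: there is a distinguished ``missing'' position on the circle, and with chirality each of the $n-1$ circle robots has a unique index relative to that gap. A protocol is therefore free to treat the circle robots as individually addressable bits (the paper's $\FC^S$ algorithm for CYC does exactly this), so nothing forces them to share a common colour. Your ``adversary rotates the landmark'' idea also has no traction under $\FSY$, where there is no scheduling adversary and the initial pattern is prescribed by the problem.

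The paper's argument avoids this detour entirely by staying with the observation you made and then abandoned: at the moment the pattern $C$ is formed, the central robot $r_0$ sees the same positions every time and, being an $\FS$ robot, cannot see any of the circle robots' lights; hence the destination $r_0$ computes is a function of \emph{its own colour alone}. That gives at most $|C_1|$ distinct destinations for $r_0$ out of $C$, whereas CYC requires $2^{n-1}$ distinct ones. No analysis of the other robots' colours, and no symmetry or periodicity argument, is needed --- the whole proof is this single pigeonhole on $r_0$'s light.
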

\begin{proof}
Since the locations of the robots $r_1, \ldots, r_{n-1}$ are the same in the  $2^{n-1}$ configurations $C_i$ and $C$, $\FS$ robot $r_0$ in   configuration $C$ must determine which $C_i$ must be formed   only on the basis of its own color.  Since the number of available colors   is constant with respect to the number of robots $n$, $r_0$ cannot recognize, using its light, the correct distance $d(i)$ from the center of the circle.
\end{proof}

We now show that $\FC$ robots can solve $\text{CYC}$ under \SSY.   The proof is  by construction and the problem is solved by  Algorithm~\ref{algo:Cyclic-Cycles}. Intuitively, the $n-1$ robots on the circle act as a distributed counter using their lights to display the binary representation of the index $i$ of the next configuration to be formed. The increment of the counter is done appropriately changing the bits and maintaining the carry   like in a full-adder addition. 
Whenever activated, robot $r_0$ ``reads'' the information and understands when it is its time to move and what is its destination.

%\input{CYC-algorithm-two-column}
%For CC-algorithm
\Newcodeline
\begin{algorithm}[H]
\caption{{\tt Cyclic-Cycles} for robot $r_i$.}
\label{algo:Cyclic-Cycles}
\small
\begin{tabbing}
111 \= 11 \= 11 \= 11 \= 11 \= 11 \= 11 \= \kill
{\em Assumptions}\crm
\> \FC, \SSY\crm
\> \parbox[t]{\columnwidth-2em}{Let $r_0$ be located at the center of the circle and $r_1, \ldots, r_{n-1}$ be located on the circle in clockwise, and let $\CoS(r_i)=r_{i+1\bmod n}$ and $\CoP(r_i)= r_{i-1\bmod n}$. Robot $r_i$ has the following lights:}\crm
\>\> $\status \in \{ \cntr, \final\}$, initially set to $\cntr$;\crm
\>\>$b, c, \mathit{suc}_b$ $\in \{0,1\}$, initially set to $0,0,0$, respectively.
\crm
\crm
{\em Phase Look}\crm
\> Robot $r_i$ observes the positions of the other robots and their lights.\crm
\> Note that $r_i$ cannot see its own lights due to $\FC$.
\crm
\crm
{\em Phase Compute}\crm
\Cl {\bf if} $i=0$ {\bf then} \` // case for robot $r_0$ \crm
\Cl \> {\bf if} $\forall (1 \leq j \leq n-1) : (r_j.\status=\cntr)$\crm
\>\>\>\> {\bf and} {\bf not} ($r_0$ is at the final position) {\bf then}\crm
\Cl \>\> $r_0.\status \leftarrow \final$\crm
\Cl \>\> \parbox[t]{\columnwidth-4em}{$r_0.\mathit{des} \leftarrow$ the position at distance $d((r_{n-1}.b)(r_{n-2}.b)\ldots(r_1.b))$ from the center of the circle}  \\[1ex]
%\Cl \>\>{\bf else if} $\forall r_j(1 \leq j \leq n-1) (r_j.status=fixed)$ {\bf and}  $r_0$ is located at the final position  {\bf then}\crm
%\Cl \>\> \>$r_0.\status \leftarrow movingback$\crm
%\Cl \>\> \>$r_0.\mathit{des} \leftarrow$ the center of the circle \crm
\Cl \>{\bf else if} $\forall (1 \leq j \leq n-1) : (r_j.\status=\final)$\crm
\>\>\>\> {\bf and}  {\bf not} ($r_0$ is at the center of the circle) {\bf then}\crm
\Cl \>\> $r_0.\status$ $\leftarrow \cntr$\crm
\Cl \>\> $r_0.b \leftarrow 0$\crm
\Cl \>\> $r_0.c \leftarrow 1$ \` // increment by one\crm
\Cl \>\> $r_0.\mathit{suc}_b$ $\leftarrow r_1.b$ \` // copy of $r_1.b$ \crm 
%\Cl \>\> $r_0.$suc-status $\leftarrow r_1.$status \` // copy of $r_1.$status(=final) \crm 
\Cl \>\> $r_0.\mathit{des} \leftarrow$ the center of the circle \crm

\Cl {\bf else} \` // case for robot $r_i$ where $i\not=0$\crm
\Cl \> {\bf if} $r_0$ is located at the final position\crm
\>\>\>\>  {\bf and} $r_0.\status=\final$ {\bf then} \crm
\Cl \>\> $r_i.\status \leftarrow \final$ \crm
%\Cl \>\> $r_i.$suc-status $\leftarrow final$ \crm
\Cl \> {\bf else if} $r_0$ is located at the center of the circle  \crm
\>\>\>\> {\bf and} $\forall (0 \leq j \leq i-1) : (r_j.\status=\cntr)$  \crm
\>\>\>\> {\bf and} $\forall (i+1 \leq j \leq n-1) : (r_j.\status=\final)$ {\bf then} \crm
\Cl \>\> $r_i$.b $\leftarrow \CoP(r_{i}).c \oplus \CoP(r_{i}).\mathit{suc}_b$\crm
\Cl \>\> $r_i$.c $\leftarrow \CoP(r_{i}).c \cdot \CoP(r_{i}).\mathit{suc}_b$\crm
\Cl \>\> $r_i.\mathit{suc}_b \leftarrow \CoS(r_{i}).b$ \crm
\Cl \>\> $r_i.\status \leftarrow \cntr$ \crm
\crm

{\em Phase Move}\crm
\> Move to $r_{i}.\mathit{des}$
\end{tabbing}
%}
\end{algorithm}

\begin{lemma}\label{lem:CYC-FCS}
Let $n \geq 3$.
 $\forall \mathit{R} \in \mathcal{R}_n$, $\text{CYC}  \in \FC^{S}(R)$, assuming chirality.
% \textcolor{blue}{assuming non-rigid movement.}
\end{lemma}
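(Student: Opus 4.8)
The plan is to prove correctness of Algorithm~\ref{algo:Cyclic-Cycles} by exhibiting an invariant that ties the lights of the circle robots $r_1,\dots,r_{n-1}$ to the binary representation of the index $i$ of the configuration $C_i$ that is to be formed next, and showing that the algorithm cycles through all $k=2^{n-1}$ values of this counter in order, each time bringing $r_0$ out to distance $d(i)$ and back. Since we only have \SSY\ (not \RSY) here, the adversary may activate arbitrary non-empty subsets in each round, so the argument must be robust to robots ``lagging'': the $\status$ lights ($\cntr$/$\final$) are precisely the synchronization device that lets each robot detect, from the configuration alone, whether it is its turn to act, and guarantees that a robot that has not yet been activated does nothing harmful when it finally is.

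First I would set up the two macro-phases of one ``super-round''. Phase A (``go out''): starting from configuration $C$ with every $r_j.\status=\cntr$ ($1\le j\le n-1$) and $r_0$ at the center, $r_0$ is the only robot whose guard (lines with $i=0$, first branch) is enabled; whenever the scheduler activates it, it reads the binary number $(r_{n-1}.b)\cdots(r_1.b)=i$, sets its own $\status\leftarrow\final$, and moves to distance $d(i)$ --- this is configuration $C_i$. Now the circle robots' second branch ($r_0$ at final position, $r_0.\status=\final$) is enabled and each, when activated, sets its $\status\leftarrow\final$; meanwhile $r_0$'s guards are all disabled (it is neither at a ``fresh'' center-with-all-$\cntr$ state nor, until all circle robots are $\final$, at the center), so $r_0$ idles. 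Phase B (``come back and increment''): once all $r_j.\status=\final$, $r_0$'s second branch fires: it resets $\status\leftarrow\cntr$, initializes the carry-propagation variables ($b\leftarrow 0$, $c\leftarrow 1$, $\suc_b\leftarrow r_1.b$), and returns to the center --- restoring positions to $C$. Then the circle robots' first branch (the ripple-carry step, lines with the $\CoP,\CoS$ updates) becomes enabled one robot at a time, in the order $r_1, r_2,\dots,r_{n-1}$, because the guard for $r_i$ requires $\forall j<i: r_j.\status=\cntr$ and $\forall j>i: r_j.\status=\final$; robot $r_i$ computes its new bit $r_i.b\leftarrow c'\oplus \suc_b'$ and new carry-out $r_i.c\leftarrow c'\cdot \suc_b'$ from its predecessor's carry and cached bit, caches $\CoS(r_i).b$, and flips to $\cntr$. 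When $r_{n-1}$ finishes we are back at configuration $C$ with all $\status=\cntr$, and the $b$-lights now encode $i+1 \bmod 2^{n-1}$.

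The heart of the proof is then a single lemma: \emph{the ripple-carry pass correctly computes $i+1 \bmod 2^{n-1}$}, i.e. if before Phase B the values $(r_{n-1}.b,\dots,r_1.b)$ are the base-2 digits of $i$ (LSB $=r_1.b$), then after the pass they are the digits of $i+1 \bmod 2^{n-1}$. This is just verifying that the recurrence $b_j' = c_{j-1}\oplus b_j$, $c_j = c_{j-1}\wedge b_j$ with $c_0 = 1$ is exactly binary increment, together with noting that the $\suc_b$ caching is needed only because in \FC\ a robot cannot see its own light, so $r_{i-1}$ must pass $r_i$ the value $r_i.b$ that $r_i$ will need \emph{after} $r_{i-1}$ may have already changed --- here the ordering $r_1\to\dots\to r_{n-1}$ of the updates, forced by the $\status$ guards, guarantees each $\CoP(r_i).\suc_b$ still holds the \emph{old} $r_i.b$ at the moment $r_i$ reads it. Overflow ($i=2^{n-1}-1$, carry out of the top bit) is harmlessly discarded since $r_{n-1}.c$ is never read, giving the ``$\bmod\,2^{n-1}$'' wraparound, which matches the cyclic requirement of CYC.

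The main obstacle --- and the part needing the most care --- is the asynchrony tolerated by \SSY: I must argue that \emph{no matter which non-empty subset the fair scheduler picks each round}, the configuration only ever advances through the legitimate sequence of sub-states and never reaches an ``ambiguous'' configuration in which a robot acts wrongly. The key structural claim is that in every reachable configuration at most one ``kind'' of action is enabled and the enabled robots' moves are idempotent-in-effect: activating a subset and activating that subset again in a later round (because it lagged) produces the same configuration, and activating a robot whose guard is disabled is a no-op. Concretely I would check: (i) during Phase A only $r_0$ can move and the circle robots that have already turned $\final$ have their guard disabled, so re-activation does nothing; (ii) during the return-and-carry Phase B, the conjunction of $\status$ conditions makes the enabled circle robot unique, and $r_0$'s guards are all false until the pass completes; (iii) $r_0$'s two guards are mutually exclusive and each also tests its geometric position (``not at the final position'' / ``not at the center'') so that an already-moved $r_0$ that gets re-activated stays put. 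Establishing this ``at most one action-type enabled, and all enabled actions confluent'' invariant, and hence that fairness forces the system through $C, C_i, C, C_{i+1},\dots$ exactly, is where the real work lies; the arithmetic of the distributed adder is routine by comparison. Non-invertibility of $d$ plays no role in feasibility --- it is only what makes CYC hard for weaker models, via Lemma~\ref{lem:CYC-FSFimpo} --- so under \FC\ with chirality the construction solves CYC, proving the lemma.
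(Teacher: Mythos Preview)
Your approach mirrors the paper's: both prove correctness of Algorithm~\ref{algo:Cyclic-Cycles} by identifying the outbound/inbound phases, the ripple-carry increment, and the $\status$-based synchronization, and both reduce correctness under \SSY\ to a confluence/idempotence argument. However, your claim (ii) that ``the conjunction of $\status$ conditions makes the enabled circle robot unique'' is false, and this is exactly where $\FC$ bites. When the ripple has reached $r_i$ (so $r_0,\dots,r_{i-1}$ have $\status=\cntr$ and $r_i,\dots,r_{n-1}$ have $\status=\final$), robot $r_{i-1}$ \emph{cannot see its own} $\status$ light; what it sees is $r_0,\dots,r_{i-2}$ with $\cntr$ and $r_i,\dots,r_{n-1}$ with $\final$, which is precisely the guard for index $i-1$. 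Hence \emph{both} $r_{i-1}$ and $r_i$ have their guards satisfied and may execute in the same round. The paper notes this explicitly and closes the gap by checking that $r_{i-1}$'s re-execution is a no-op: it recomputes $r_{i-1}.b$ and $r_{i-1}.c$ from the unchanged $r_{i-2}.c$ and $r_{i-2}.\mathit{suc}_b$, re-copies $r_{i-1}.\mathit{suc}_b\leftarrow r_i.b$ (reading $r_i$'s \emph{old} value, since all Looks precede all Computes in a round), and re-sets $\status\leftarrow\cntr$ --- all idempotent. Your general ``idempotent-in-effect'' slogan is the right instinct, but your specific uniqueness claim (ii) contradicts it; you need the paper's per-step idempotence check for the already-advanced $r_{i-1}$.

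A smaller omission: you treat $r_0$'s outbound move as atomic (``moves to distance $d(i)$ --- this is configuration $C_i$''), but the lemma does not assume rigidity. The paper handles this by observing that $r_0$'s first guard (``not at the final position'') remains enabled until $r_0$ actually reaches the target, so repeated activations drive it there; your item (iii), which says an already-moved $r_0$ ``stays put'', is not quite right in the non-rigid case and should be replaced by this progress argument.
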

\begin{proof}
Each robot $r_i$ has 4 lights, $r_i.\status$, $r_i.b$, $r_i.c$, and $r_i.\suc_b$. Let
\[
\begin{aligned}
\Status(t) & = (r_0.\status(t), \ldots, r_{n-1}.\status(t))\,,\\ 
B_\mathit{value}(t) & = (r_0.b(t), r_1.b(t), \ldots, r_{n-1}.b(t))\,,\\ 
B_\mathit{carry}(t) & = (r_0.c(t), r_1.c(t), \ldots, r_{n-1}.c(t))\,, \text{ and }\\
B_\mathit{suc}(t) & = (r_0.\mathit{suc}_b(t), r_1.\mathit{suc}_b(t), \ldots, r_{n-1}.\mathit{suc}_b(t))\,.
\end{aligned}
\] 
%Let $Loc(t)$ be the locations of $n$ robots at time $t$, that is, $(r_0.location(t), \ldots, r_{n-1}.location(t))$. 
The configuration $K(t)$ at time $t$ is defined by $(X(t), \Status(t),$  $B_\mathit{value}(t),$ $B_\mathit{carry}(t), B_\mathit{suc})$, where $X(t)$ is the set of locations of $n$ robots at time $t$.
$\Status(t)$ differentiates among  $C$, $C_i$ or intermediate states between $C$ and $C_i$.
$B_\mathit{value}(t)$ and $B_\mathit{carry}(t)$ encode the binary representation of $i$.
Finally, values in $B_\mathit{suc}(t)$ are copied from the lights $b$ of the successor robots along the circle, thus they can be used by the robots to derive the value of their light $b$ by observing the color of the predecessor's light $\suc_b$.

The initial colors of $\Status(0)$, $B_\mathit{value}(0)$, $B_\mathit{carry}(0)$, $B_\mathit{suc}(0)$ are $(\cntr, \ldots, \cntr)$,\\ $(0, 0, \ldots, 0)$, $(0, 0, \ldots, 0)$, $(0, 0, \ldots, 0)$,  and $(0,\allowbreak 0, \ldots, 0)$, respectively.
The initial locations of robots is shown in Figure~\ref{Fig:CC} (a).
Algorithm~\ref{algo:Cyclic-Cycles}  works as follows:
\begin{itemize}
\item When robot $r_0$ is activated at $t$, if $r_0$ observes
\[\Status(t)[1..n-1] = (\cntr,  \ldots,\cntr)\,,\]
and $r_0$ is located at the center of the circle (the pattern $C$ is formed),  $r_0$ changes $r_0.\status$ to $\final$ and moves to the ``final'' position computed by $d(B_{value}(t)[1..n-1])$. Note that since   rigidity is not assumed, as long as $r_0$ does not reach the final position $r_0$ keeps moving towards the final position.  When $r_0$ reaches the final position,   pattern $C_{p}$ is formed, where $p$ is the integer represented in binary by $B_{value}(t)[1..n-1]$.
\item When a robot $r_i$ (for $i \neq 0$) is activated at $t$, if $r_0$ is located at the final position determined by $d(B_{value}(t)[1..n-1])$,
$r_i$ changes $r_i.\status$ to $\final$. Note that although an $\FC$ robot $r_i$ cannot observe its own $r_i.b$, since $r_{i-1}.\suc_b$ holds $r_i.b$, $r_i$ can compute $d(B_{value}(t)[1..n-1])$. Then there is a time $t'>t$ %\textcolor{red}{$t'>t$?} 
such that $\Status(t')=(\final, \ldots, \final)$ while $C_{p}$ is not changed. 
\item When   robot $r_0$ is activated at $t$, if $r_0$ observes 
\[\Status(t)[1..n-1]=(\final,  \ldots ,\final)\,,\]
and the current pattern $C_p$ corresponds to $B_{value}(t)[1..n-1]$, $r_0$ prepares the formation of $C$. That is, it changes $r_0.\status$ to $\cntr$ and it sets $(r_0.b, r_0.c, r_0.\suc_b)$ to 
$(0,1,r_1.b)$, where $r_0.c$ means ``carry'' to the least significant bit,  and setting $r_0.c=1$ is a preparation for the increment of the binary value $B_{value}(t)[1..n-1]$. Then $r_0$   moves to the center of the circle.
\item When robot $r_0$ reaches the center of the circle and 
\[\Status(t)=(\cntr, \final,  \ldots ,\final)\,,\]
the increment of the binary value $B_{value}(t)[1..n-1]$ is performed sequentially, like in a full-adder addition. 
Firstly,  $r_1$ (when activated) changes $r_1.\status$ to $\cntr$ and it  computes the least significant bit ($r_1.b$) and the carry to the next bit ($r_1.c$) by using $r_0.c$ and $r_0.\suc_b(=r_1.b)$. Robot $r_1$ also copies $r_2.b$ into $r_1.\suc_b$, which will be used  so that $r_2$ recognizes its own $b$. In general, when $r_0$   is located at the center of the circle and
\[
\begin{split}
\qquad\Status(t)=(r_0.\status=\cntr, \ldots, r_{i-1}.\status=\cntr,\\
r_i.\status=\final, \ldots, r_{n-1}.\status=\final),
\end{split}
\]
robot $r_i$, upon activation, changes $r_i.\status$ to $\cntr$ and the computation of the increment ($r_{i}.b$ and $r_i.c$) by using $r_{i-1}.c$ and $r_{i-1}.\suc_b(=r_i.b)$ and copy of the next bit ($r_i.\suc_b$).
Lastly, $r_{n-1}$ changes $r_{n-1}.\status$ to $\cntr$ and computes the most significant bit. Then the increment is completed. 

\hspace{10pt} Note that when this configuration occurs and $r_i$ executes the algorithm, also $r_{i-1}$  may execute the algorithm, if activated (because it is a $\FC$ robot). However, even if $r_{i-1}$ executes the algorithm in the same round, the configuration stays unchanged and there is no effect on the increment of $r_i$. Then,  there is a time $t$ such that $\Status(t)=(\cntr, \ldots, \cntr)$ and $B_{value}(t)[1..n-1]$ has been increment and this configuration forms $C$ and the next movement of $r_0$ will be performed.

\hspace{10pt} It can be easily verified that this algorithm works in \SSY.
\end{itemize}
\end{proof}

The orthogonality of $\FS^{RS}$ and $\FC^{S}$ follows from Lemmas~\ref{lem:SRO-OBRS}, \ref{lem:SRO-FCSFSSimpo}, \ref{lem:CYC-FSFimpo}, and \ref{lem:CYC-FCS}, and the
  dominance of $\LU^{S}$ over $\FS^{RS}$ follows from Lemmas~\ref{lem:CYC-FSFimpo} and \ref{lem:CYC-FCS}. 

%It follows that:

\begin{theorem}\label{th:RSvsS2}~
 \begin{enumerate}
%   \item\label{FCRS>FCS} $\FC^{RS} > \FC^{S}$%, and
%   \item\label{FSRS>FSS} $\FS^{RS} > \FS^{S}$  
%   \item\label{OBRS>OBS} $\OB^{RS} > \OB^{S}$     

%  \item\label{FCRS<LUS} $\FS^{RS} < \LU^{S}(\equiv \LU^{RS} \equiv \FC^{RS})$
  \item\label{FCRS<LUS} $\FS^{RS} < \LU^{S}$,
%  \color{green} (\equiv \LU^{RS}) \color{black}$
   \item\label{FCSneqFSRS} $\FS^{RS} \bot$  $\FC^{S}$.
%   \item\label{OBRSneqFCS} $\OB^{RS} \bot$ $\FC^{S}$ 
%   \item\label{OBRSneqFSS} $\OB^{RS} \bot$ $\FS^{S}$
%   \item\label{OBRS<LUS} $\OB^{RS} <\LU^{S}(\equiv \LU^{RS} \equiv \FC^{RS})$ 
%   \item\label{OBRS<LUS} $\OB^{RS} < \LU^{RS}$ 
%   \item\label{FSRS<LUS} $\FS^{RS} <  \LU^{RS}$
\end{enumerate}
\end{theorem}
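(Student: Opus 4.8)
The plan is to assemble Theorem~\ref{th:RSvsS2} from the separation results already proved for the two benchmark problems \text{SRO} and \text{CYC}, together with the trivial monotonicity $M^{F}\ge M^{RS}\ge M^{S}$ and the equivalence $\LU^{S}\equiv\LU^{RS}$ (established constructively in Section~\ref{LUS=LURS}, cf.\ Theorem~\ref{th:FCRSLURSLUS}). Both parts reduce to exhibiting, for a single fixed team $R$, a task that one side solves and the other does not; the only care needed is to choose that team so that the ``$\exists R$'' lemmas and the ``$\forall R$'' lemmas refer to the \emph{same} $R$, and to lift the impossibility results from the fully synchronous scheduler down to \RSY.

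For the first claim, $\FS^{RS}<\LU^{S}$, I would treat the two directions separately. The inequality $\LU^{S}\ge\FS^{RS}$ is not delivered by the \text{CYC} lemmas; it follows instead from $\LU^{S}\equiv\LU^{RS}$ (Theorem~\ref{th:FCRSLURSLUS}) combined with the trivial $\LU^{RS}\ge\FS^{RS}$, giving $\LU^{S}\ge\LU^{RS}\ge\FS^{RS}$. For strictness, fix the team $R\in\mathcal{R}_{n}$ ($n\ge 3$) supplied by Lemma~\ref{lem:CYC-FSFimpo}, so that $\text{CYC}\notin\FS^{F}(R)$; since $\FS^{F}\ge\FS^{RS}$, i.e.\ $Task(\FS,\FSY;R)\supseteq Task(\FS,\RSY;R)$, we obtain $\text{CYC}\notin\FS^{RS}(R)$, while Lemma~\ref{lem:CYC-FCS} gives $\text{CYC}\in\FC^{S}(R)\subseteq\LU^{S}(R)$. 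Hence $Task(\LU,\SSY;R)\setminus Task(\FS,\RSY;R)\neq\emptyset$, which together with $\LU^{S}\ge\FS^{RS}$ yields $\FS^{RS}<\LU^{S}$.

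For the second claim, orthogonality of $\FS^{RS}$ and $\FC^{S}$, I would produce a separating task in each direction. In one direction, fix the team $R\in\mathcal{R}_{2}$ given by Lemma~\ref{lem:SRO-FCSFSSimpo}, so that $\text{SRO}\notin\FC^{S}(R)$; since Lemma~\ref{lem:SRO-OBRS} applies to \emph{every} team of $\mathcal{R}_{2}$, we get $\text{SRO}\in\OB^{RS}(R)\subseteq\FS^{RS}(R)$, so $\text{SRO}\in Task(\FS,\RSY;R)\setminus Task(\FC,\SSY;R)$. In the other direction, fix the team $R'\in\mathcal{R}_{n}$ given by Lemma~\ref{lem:CYC-FSFimpo}, so $\text{CYC}\notin\FS^{F}(R')$ and hence, again by $\FS^{F}\ge\FS^{RS}$, $\text{CYC}\notin\FS^{RS}(R')$, while Lemma~\ref{lem:CYC-FCS} gives $\text{CYC}\in\FC^{S}(R')$; thus $\text{CYC}\in Task(\FC,\SSY;R')\setminus Task(\FS,\RSY;R')$. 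Combining the two directions gives $\FS^{RS}\bot\FC^{S}$.

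There is no deep obstacle once the problem-specific lemmas are available; the proof is a short assembly. The two subtle points I expect to matter are: (i) the ``$\le$'' half of $\FS^{RS}<\LU^{S}$ must be routed through $\LU^{S}\equiv\LU^{RS}$ rather than through the \text{CYC} lemmas, so the argument genuinely depends on the constructive simulation of Section~\ref{LUS=LURS}; and (ii) every impossibility for \text{CYC} is stated under \FSY, and one must invoke $\FS^{F}\ge\FS^{RS}$ (a special case of $M^{F}\ge M^{RS}$) to transport it to \RSY, taking care that the existentially chosen team from each impossibility lemma is the same team on which the corresponding positive algorithm (Algorithm~\ref{alg:RNER} or Algorithm~\ref{algo:Cyclic-Cycles}) is run.
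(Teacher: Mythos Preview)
Your proposal is correct and follows essentially the same approach as the paper, which likewise derives the orthogonality from Lemmas~\ref{lem:SRO-OBRS}, \ref{lem:SRO-FCSFSSimpo}, \ref{lem:CYC-FSFimpo}, \ref{lem:CYC-FCS} and the strictness of $\FS^{RS}<\LU^{S}$ from Lemmas~\ref{lem:CYC-FSFimpo} and \ref{lem:CYC-FCS}. You are in fact more explicit than the paper on the one nontrivial point: the inclusion $\LU^{S}\ge\FS^{RS}$ indeed requires the simulation $\LU^{S}\equiv\LU^{RS}$ of Section~\ref{LUS=LURS} (Theorem~\ref{th:LUMISeqLUMIRS}; the equality portion of Theorem~\ref{th:FCRSLURSLUS} rests on it and not on Theorem~\ref{th:RSvsS2}, so there is no circularity), and your observation that the $\FS^{F}$ impossibility must be transported to $\FS^{RS}$ via $\FS^{F}\ge\FS^{RS}$ is precisely what the paper is using implicitly.
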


We show next the equivalence of $\LU^S$ and $\LU^{RS}$.
 
%\input{AlgorithmSimRSbyS}
%\subsection{Power of  \texorpdfstring{\RSY\  in $\mathcal{LUMI}$}{}}\label{LUS=LURS}
%----------------------
%----------------------
\subsection{Power of  \texorpdfstring{\RSY  in $\mathcal{LUMI}$}{}}\label{LUS=LURS}
%----------------------

%assumption of a simulated algorithm in RSSYNC; no assumption
%sim-RS-by-S(A)
We now show that, in spite of their power for $\FC$ and $\FS$,  \RSY robots  with full lights ($\LU$)
are {\em not} more powerful than \SSY robots with full lights. That is, 
$\mathcal{LUMI}^{S}$ is computationally equivalent to $\mathcal{LUMI}^{RS}$.

\begin{theorem}\label{th:sim-DS-by-S}
$\forall R \in \mathcal{R}, \LU^{RS}(R) \leq \LU^S(R)$.
\end{theorem}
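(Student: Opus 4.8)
The plan is to establish $\LU^{RS}(R)\leq\LU^{S}(R)$ by a simulation argument: given any $\LU$ algorithm $A$ that solves a task $T$ for the fixed team $R$ under \RSY, I would build a $\LU$ algorithm $A'$ for $R$ that solves $T$ under the more adversarial \SSY scheduler. The guiding principle is that a \SSY adversary differs from a \RSY one only in being allowed to reactivate a robot before it would have ``recharged'', so $A'$ equips every robot with $O(1)$ auxiliary colours, on top of $A$'s lights, that make it behave as if it were energy constrained. Concretely, when a robot is activated it is in one of two internal modes. If it is \emph{ready} it performs a \emph{real step} of $A$: it runs $A$ on the observed snapshot with the auxiliary colours projected away, updates $A$'s lights, and moves to $A$'s destination; it then becomes \emph{exhausted}. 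If it is \emph{exhausted} it leaves its position and its $A$-lights untouched, merely updating its auxiliary colours, and it returns to \emph{ready} only once those colours certify that the simulated \RSY round in which it last took a real step has ended, i.e.\ that some other robot has performed a real step since then.

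The second step is to make the correspondence precise. Fix a fair \SSY schedule; call a round \emph{effective} if at least one robot performs a real step of $A$ in it, and let $e_i$ be the set of such robots in the $i$-th effective round. Deleting the non-effective rounds changes neither positions nor $A$-lights, so the restriction of the run of $A'$ to the effective rounds is a bona fide run of $A$ with activation sequence $\langle e_1,e_2,\dots\rangle$. I would then check that this sequence is a legal \RSY activation sequence: each $e_i$ is nonempty by definition; after the (possibly empty) initial stretch of rounds in which all of $R$ acts, every $e_i$ is a proper subset of $R$; a robot in $e_i$ becomes exhausted and its certification cannot be met during the $(i+1)$-th effective round, whence $e_i\cap e_{i+1}=\emptyset$; and the sequence is fair, because by \SSY-fairness every robot is activated infinitely often and the certification awaited by an exhausted robot is eventually granted, so no robot is permanently starved of real steps. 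Since $A$ is correct under every legal \RSY sequence, $T$ is achieved on this run, hence $A'$ solves $T$ under \SSY.

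The hard part is realising the certification ``some other robot has performed a real step since my last one'' with only $O(1)$ persistent colours per robot. A single ``just acted'' bit is not enough: if the \SSY adversary repeatedly activates one robot (or a fixed small subset) while starving the rest, that robot flips back to \emph{ready} and fires a burst of real steps which, after the interleaved non-effective rounds are deleted, collapses into several consecutive activations of the same robot --- an illegal \RSY sequence, on which $A$ offers no guarantee. The remedy is to let the lights carry a bounded ``simulated-round phase'' that robots advance only in a controlled, handshake-like fashion, so that an exhausted robot re-arms itself precisely when it witnesses a coherent advance of this phase produced by others; one then argues, by a case analysis over all \SSY interleavings, that the scheme neither deadlocks --- in particular at the transition out of the all-active prefix, and in the symmetric situation where two exhausted robots might each wait for the other --- nor ever admits a robot into two consecutive effective rounds. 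Proving that invariant against an arbitrary \SSY adversary is where essentially all of the work resides; once it is in place, the reduction to $A$ is immediate.
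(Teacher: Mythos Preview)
Your overall strategy---simulate any $\LU$ algorithm $A$ for \RSY by an algorithm $A'$ for \SSY that uses $O(1)$ extra colours to suppress ``too-early'' re-activations---is exactly the paper's approach, and your observation that the only obstruction is the restricted-repetition constraint is the right starting point.

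Where your plan diverges from the paper is in the concrete re-arming mechanism. You propose that an exhausted robot waits for the certification ``some other robot has performed a real step since my last one'', realised via a loosely described ``simulated-round phase'' handshake. The paper does not attempt any such local certification. Instead it organises the simulation into \emph{mega-cycles}: a mega-cycle ends only when \emph{every} robot has performed exactly one real step of $A$. Each robot carries a flag $\executed\in\{\true,\false\}$ (reset at the mega-cycle boundary) together with a three-valued flag $\charged\in\{C,E,M\}$ distinguishing ``eligible to act'', ``acted in the previous simulated round'', and ``acted in the current simulated round''. A shared $\step$ light with six values drives the robots through a fixed transition diagram, several of whose states are pure bookkeeping (collectively turn all $E$ into $C$, then all $M$ into $E$) with no execution of $A$. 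The simulated \RSY schedule is thus deliberately very specific: a prefix of all-$R$ rounds (detected by everyone having $\charged=E$), then a suffix of mega-cycles in which each robot acts exactly once. Fairness and disjointness are read off from invariants maintained across the transition diagram, not from any robot locally witnessing another's activity.

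Your certification idea, as stated, has a real gap exactly where you flag it. If the \SSY adversary activates all of $R$ in the first round, every robot becomes exhausted and your criterion ``some other robot has performed a real step since my last one'' can never be satisfied, so nobody ever re-arms. You note the danger but do not say how the ``simulated-round phase'' actually resolves it; and any fix must also survive the next step, since after a global re-arm the \SSY adversary may activate only a subset $S$, and you must then prevent the robots in $S$ from appearing in the immediately following simulated round as well. The paper's two-phase structure---with a dedicated ``all moved'' branch that loops back to the prefix, and the two-stage $E\!\to\!C$ then $M\!\to\!E$ recharging between executions of $A$---is engineered precisely to survive all such interleavings. To complete your proof you should either adopt that mega-cycle structure or make your phase-handshake concrete enough that the non-deadlock and disjointness invariants can actually be verified against an arbitrary \SSY adversary.
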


We start by describing a simulation algorithm  ({\tt  sim-RS-by-S($A$)}) that works in  \SSY with full lights simulating an algorithm  $A$ working in \RSY with full lights.
%Let $r.color $ be the colors robot $r$ has in   Algorithm \textit{A}. 
Let algorithm $A$ use light with $\ell$ colors: $C=\{c_0,c_1, \ldots, c_{\ell-1}\}$. 
 A robot $r$ has available the following sets of  colors  for its full lights in {\tt  sim-RS-by-S($A$)}: 

%he simulation algorithm is presented in Algorithm~\ref{algo:sim-RS-by-S},
%and Figure~\ref{fig:transition-sim-RS-by-S} shows the transition diagram as the robots change phase's values. 
%A robot $r$ has available the following sets of  colors  for its   lights: 
%of $\mathcal{FCOM}$   

% 
%ALTRO
% The simulation algorithm is presented in Algorithm~\ref{algo:simA-FSYNC},  %where a robot $r$ uses two lights, whose colors are from a set ${\cal C'}   =  2^{\cal C} $: 
%$r.color$,   indicating its own color, 
%and  $r.suc.color$, indicating the colors seen at  $suc(r)$, both initially set to $c_0$.
%It also uses the following variables: $r.color.here $,
% initially set to $c_0$, and 
% $r.\step \in  \{1,2 \}$, initially set to   $ 1$. 
% 
 
\begin{enumerate}
\item $r.\light  \in C$ indicating its own light used in algorithm~$A$, initially set to  $r.\light=c_0$;
\item $r.\step  \in \{1,2,\all, \fst, \snd, m \}$ indicating the step currently under execution. Step~$i$ begins when $\forall \rho \in R \, (\rho.\step =i)$ for $i \in \{1,2,\all, \fst, \snd, m \}$.
Initially $r.\step$ is set to $1$ for all robots, and thus initially the simulation is in Step~$1$;
\item $r.\executed   \in \{\true , \false \}$ 
  indicating whether $r$ has executed algorithm~$A$ in the current mega-cycle (see below).
  Initially $r.\executed$ is set to $\false$ for all robots; and

\item $r.\charged \in \{ C, E, M \}$, where $C$, $E$, and $M$ stand for ``charged'', ``empty'', and ``just moved'' respectively.
The flag is used to ensure the validity of the simulated \RSY activation sequence, it indicates whether $r$ is charged and can execute the algorithm $A$.
Initially $r.\charged$ is set to $C$ for all robots.
\end{enumerate}

\begin{figure}
\centering
\includegraphics[page=2]{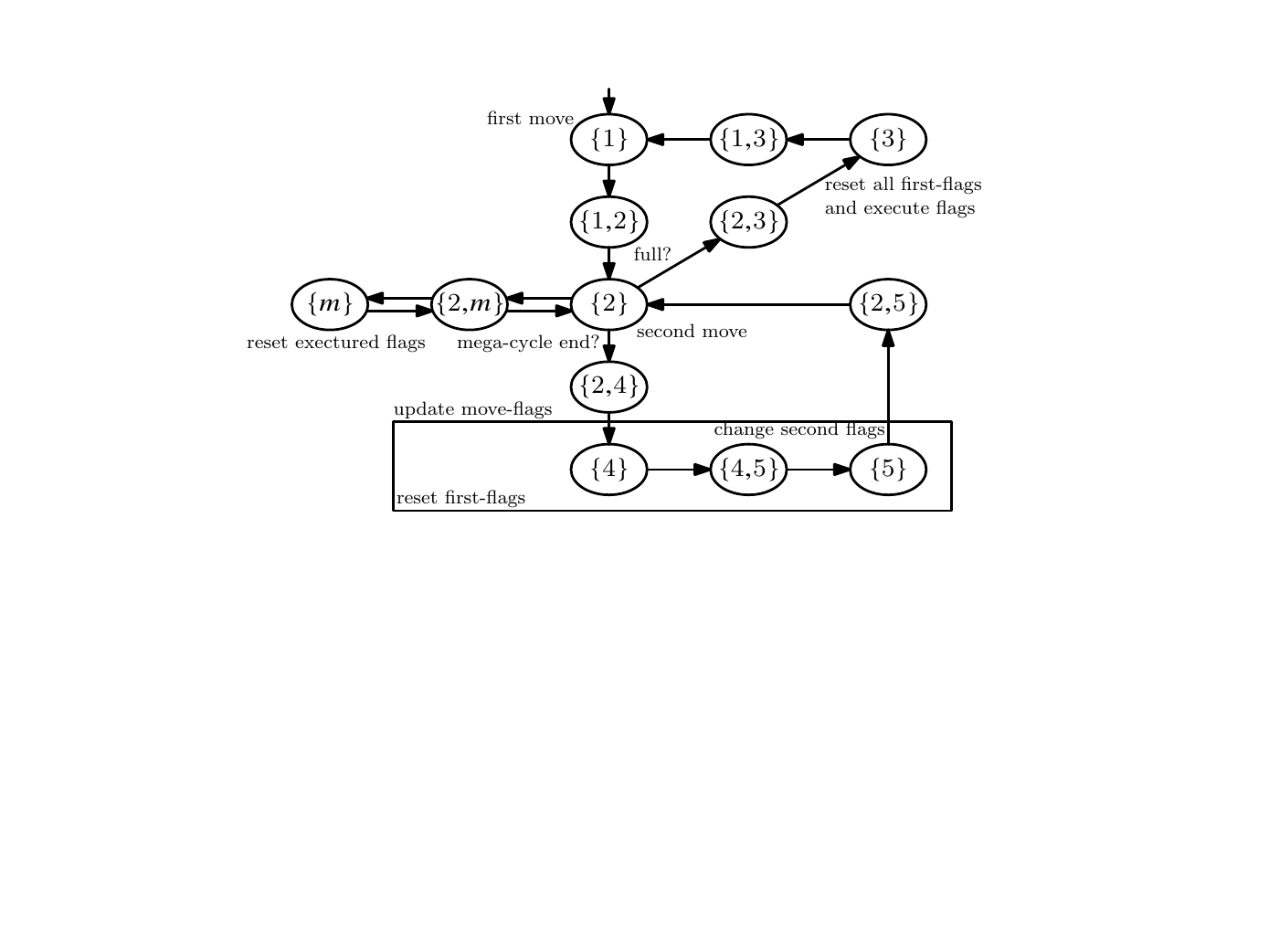}\\
%\irina{I moved the labels describing the steps into the caption. For the labels change to page=1.}
%\centering\includegraphics[keepaspectratio, width=9cm]{transition-sim-RS-by-S.pdf}
    \caption{Transition Diagram of {\tt sim-RS-by-S($A$)}. Step 1: first phase; Step 2: second phase; Step 3: all robots moved in Step 1, reset $\charged$ and $\executed$ flags; Step 4: empty robots are charged; Step 5: robots moved are discharged; Step $\bm m$: reset executed flags.}
    \label{fig:transition-sim-RS-by-S}
\end{figure}

%\input{sim-RS-by-s-two-column}
%Phse 1-3
\Newcodeline
%\begin{algorithm}[tb]
\begin{algorithm}[H]
%\begin{algorithm}[tbH]
\caption{{\tt sim-RS-by-S(\textit{A})}: predicates and subroutines for robot $r$.}
\label{algo:sim-RS-by-S}
%\small
\begin{tabbing}
111 \= 11 \= 11 \= 11 \= 11 \= 11 \= 11 \= \kill%must be changed
%{\em Assumptions}:Let $r_0, r_1, \ldots, r_{n-1}$ be a total ordering on this configuration, \crm
%and let define $suc(r_i) = r_{i+1 \mod n}$ and $pred(r_i)=r_{i-1 \mod n}$.\crm
%
%{\em Input}:

\textbf{\em predicate} {\em all-robots-executed} \crm
\> $\forall \rho \, (\rho.\executed = True )$ \crm
\crm
\textbf{\em predicate} {\em all-robots-reset}\crm
\> $\forall \rho \, (\rho.\executed = \false )$ \crm
\crm

\textbf{\em subroutine} {\em Reset-All-Flags} \` // in Step~$\all$\crm
\> {\bf if} $\exists \rho \, (\rho.\executed = \true$ {\bf and} $\rho.\charged = E$) {\bf then}\crm
\>\> {\bf if} ($r.\executed = \true$ {\bf and} $r.\charged = E$)  {\bf then}\crm
\>\>\> $r.\executed \leftarrow \false$ \crm
\>\>\> $r.\charged \leftarrow C$ \crm
\> {\bf else}  $r.\step \leftarrow 1$ \crm
\crm

%\Newcodeline
\textbf{\em subroutine} {\em Reset-Execution} \` // in Step~$m$\crm
\> {\bf if} {\bf not} $\forall \rho \, (\rho.\executed = \false)$ {\bf then}\crm
\>\>  {\bf if} $r.\executed \neq \false$ {\bf then} $r.\executed \leftarrow \false$ \crm
\>  {\bf else}  $r.\step \leftarrow 2$ \crm
\crm

\textbf{\em subroutine} {\em Charge-Empty-Flags} \` // in Step~$\fst$\crm
\> {\bf if} $\exists \rho \, (\rho.\charged = E)$ {\bf then}\crm
\>\> {\bf if} $r.\charged = E$ {\bf then} $r.\charged \leftarrow C$ \crm
\> {\bf else}  $r.\step \leftarrow \snd$ \crm
\crm

\textbf{\em subroutine} {\em Empty-Moved-Flags} \` // in Step~$\snd$\crm
\> {\bf if} $\exists \rho \, (\rho.\charged = M)$ {\bf then}\crm
\>\> {\bf if} $r.\charged = M$ {\bf then} $r.\charged \leftarrow E$ \crm
\> {\bf else}  $r.\step \leftarrow 2$
\end{tabbing}
%}
\end{algorithm}

\Newcodeline
%\begin{algorithm}[tb]
\begin{algorithm}[H]
%\begin{algorithm}[tbH]
\caption{{\tt sim-RS-by-S(\textit{A})}:  for robot $r$.}
\label{algo:sim-RS-by-main}
%\small
\begin{tabbing}
111 \= 11 \= 11 \= 11 \= 11 \= 11 \= 11 \= \kill%must be changed
{\em Assumptions}:\crm
\> $\LU$, \SSY
\crm\crm
%{\em Assumptions}: Let $r_0, r_1, \ldots, r_{n-1}$ be a total ordering on this configuration, \crm
%and let define $suc(r_i) = r_{i+1 \mod n}$ and $pred(r_i)=r_{i-1 \mod n}$.\crm
%
%{\em Input}:
%\Newcodeline
{\em State Look}\crm
%\> \tabfill{$r.\pos$, the position on the plane of robot $r$ (according to my coordinate system)\irina{Is $r.pos=(0,0)$?yes}; the colors of the lights of robot $r$ and other robots.}\crm
\> \tabfill{Robot $r$ takes the snapshot of the configuration, including the colors of the lights of all the robots.}\crm
%(Note I am robot $x$)\crm

\crm
{\em State Compute}\crm
%\Cl $r.\des \leftarrow r.\pos$ \` // $r$ does not move except first or second-move\crm\crm
\Cl $r.\des \leftarrow (0,0)$ \` // $r$ does not move\crm\crm
\Cl {\bf if} $\forall \rho \, (\rho.\step =1)$ {\bf then} \` // Step 1 \crm
%\Cl \> {\bf if} $r.\executed=\false$ {\bf then} \` // first actual move\crm
%\textcolor{green}{this is not necessary due to Lemma~\ref{lem:all-1}}\crm
\Cl \> Set $r.\lcolor$ and $r.\des$ by the simulated algorithm \textit{A}\crm
\Cl \> $r.\executed \leftarrow \true$\crm
\Cl \> $r.\charged \leftarrow E$\crm
\Cl \> $r.\step \leftarrow 2$ \crm\crm

\Cl  {\bf else if} $\forall \rho \, (\rho.\step =2)$ {\bf then} \` // Step 2 \crm
\Cl \> {\bf if} $\forall \rho \, (\rho.\charged= E)$ {\bf then} $r.\step \leftarrow \all$ \` // mega-cycle continues
\crm
\Cl \>{\bf else if} {\em all-robots-executed} {\bf then} $r.\step \leftarrow m$ \` // mega-cycle ends\crm

\Cl {\bf else if}   $(r.\executed=\false)$ {\bf and} $(r.\charged=C)$ {\bf then}\crm

\Cl \>\> Set $r.\lcolor$ and $r.\des$ by the simulated algorithm \textit{A}\crm
\Cl \>\> $r.\executed \leftarrow \true$\crm
\Cl \>\> $r.\charged \leftarrow M$\crm
\Cl \>\> $r.\step \leftarrow \fst$ \crm\crm % \` // to Step~$\fst$ (update $\charged$ flags)%if  (r.move-flag=F) then no-action

\Cl  {\bf else if}  $\forall  \rho \, (\rho.\step =\all$) {\bf then}  {\bf call} {\em Reset-All-Flags} \` // to Step~$1$\crm
\Cl {\bf else if}  $\forall \rho \, (\rho.\step =\fst)$ {\bf then} {\bf call} {\em Charge-Empty-Flags} \` // to Step~5\crm
\Cl {\bf else if}  $\forall \rho \, (\rho.\step =\snd$) {\bf then} {\bf call} {\em Empty-Moved-Flags} \` // to Step 2\crm
\Cl  {\bf else if}  $\forall  \rho \, (\rho.\step =m$) {\bf then}  {\bf call} {\em Reset-Execution} \` // to Step~$2$\crm
\crm

\Cl {\bf else if} $\forall \rho \, (\rho.\step=1$ {\bf or} $\rho.\step=2)$ {\bf then} $r.\step \leftarrow 2$ \crm %\` // to Step~$2$
\Cl  {\bf else if} $\forall \rho \, (\rho.\step=2$ {\bf or} $\rho.\step=\all)$  {\bf then} $r.\step \leftarrow \all$  \crm
\Cl  {\bf else if} $\forall \rho \, (\rho.\step=2$ {\bf or} $\rho.\step=\fst)$ {\bf then} $r.\step \leftarrow \fst$ \crm
\Cl  {\bf else if} $\forall \rho \, (\rho.\step=\fst$ {\bf or} $\rho.\step=\snd$) {\bf then} $r.\step \leftarrow \snd$ \crm % \` // to Step~$\snd$
\Cl  {\bf else if} $\forall \rho \, (\rho.\step=\snd$ {\bf or} $\rho.\step=2)$ {\bf then} $r.\step \leftarrow 2$ \crm % \` // to Step~$2$
\Cl  {\bf else if} $\forall \rho \, (\rho.\step=\all$ {\bf or} $\rho.\step=1)$ {\bf then} $r.\step \leftarrow 1$\crm

\Cl  {\bf else if} $\forall \rho \, (\rho.\step=2$ {\bf or} $\rho.\step=m)$ {\bf and} {\em all-robots-executed} {\bf then}\crm
\Cl \> $r.\step \leftarrow m$ \crm
\Cl  {\bf else if} $\forall \rho \, (\rho.\step=m$ {\bf or} $\rho.\step=2)$ {\bf and} {\em all-robots-reset} {\bf then} \crm
\Cl \> $r.\step \leftarrow 2$\crm\crm

{\em State Move}\crm
\> Move to $r.\des$
\end{tabbing}
\end{algorithm}

The simulating algorithm is presented in Algorithm~\ref{algo:sim-RS-by-S} (predicates and subroutines used in the algorithm) and \ref{algo:sim-RS-by-main} (the main algorithm), and Figure~\ref{fig:transition-sim-RS-by-S} shows the transition diagram as the robots change step's value. Since {\tt sim-RS-by-S($A$)}  works in \SSY, it must take care of  excluding  the prohibited patterns of  \RSY,  if they occur.

The simulation proceeds in two phases.
The first phase corresponds to the first $p\ge 0$ (where $p$ can be $\infty$) activations in a simulated \RSY activation schedule where all robots in $R$ are activated at each round.
The second phase corresponds to the remaining activation cycles where a strict subset of $R$ is activated at each round.
The robots execute $A$ in Step~$1$ in the first phase, and in Step~$2$ in the second phase.
The remaining steps serve for bookkeeping of flags $\executed$ and $\charged$.

The algorithm is a sequence of {\em mega-cycles}, each of which lasts the time it takes for  all robots to execute the simulated algorithm exactly once.
The cycle $\{1\}\rightarrow\{2\}\rightarrow\{3\}\rightarrow\{1\}$ of the transition diagram of {\tt sim-RS-by-S($A$)} corresponds to one mega-cycle in the first phase.
One mega-cycle in the second phase consists of several cycles $\{2\}\rightarrow\{4\}\rightarrow\{5\}\rightarrow\{2\}$ (until all robots have executed $A$) with one cycle $\{2\}\rightarrow\{m\}\rightarrow\{2\}$ to reset the flags and start the new mega-cycle.
Specifically, the states of the simulation are (see Figure~\ref{fig:transition-sim-RS-by-S}):
\begin{itemize}
\item State~$\{i,j\}$ (lines 19--28 of Algorithm~\ref{algo:sim-RS-by-main}) are the transition states between Step $i$ and Step $j$. Activated robots do not execute $A$ nor change the values of the $\charged$ and $\executed$ flags, but only set their $r.\step$ to either $i$ or to $j$ (depending on the specific case).

\item Step~$1$ (lines 2--6) executes one cycle of the simulated algorithm $A$.
Note that when Step~$1$ begins execution, it holds that $\forall \rho \in R \, (\rho.\executed = \false$ and  $\rho.\charged = C)$.

\item Step~$2$ (lines 7--14) checks if all robots have their $\charged$ flag set to $E$, that is, if all robots were activated in Step~$1$.
If so, the first condition of \RSY is satisfied, and the simulation proceeds to Step~$\all$ resetting all the $\charged$ and $\executed$ flags to prepare for the next move in Step~$1$.
If there are robots with their $\charged$ flag set to $C$, there are two cases:
    
(1)~If all robots have their $\executed$ flag set to $\true$, the mega-cycle has finished.
In this case, the simulation proceeds to Step~$m$ resetting all the executed flags.
When Step~$m$ ends, it returns to Step~$2$.
    
(2)~If some robots have their $\executed$ flag set to $\false$, the mega-cycle has not finished yet.
In this case, the simulation proceeds executing $A$.
As soon as among the activated robots there is a non-empty subset $R'$ with $\charged=C$ and $\executed=\false$, they execute algorithm $A$, set $\charged$ to $M$ and $\executed$ to $\true$, and proceed to Step~$4$.
Afterwards, the remaining robots proceed to Step~$4$ without executing $A$ or changing their $\charged$ and $\executed$ flags.

\item Step~$\all$ (line 15) resets all the $\charged$ and $\executed$ flags to prepare the next move in Step~$1$.
Note that as long as Step~$\all$ is executed, full activation continues.

\item Step~$\fst$ (line 16) updates the $\charged$ flag of the robots that did not execute $A$ in preceding Step~$2$ from $E$ to $C$ (the robots that did not execute are recharged).
    
\item Step~$\snd$ (line 17) updates the $\charged$ flag of the robots that executed $A$ in preceding Step~$2$ from $M$ to $E$ (the robots that executed are discharged).
    
\item Step~$m$ (line 18) executes when the mega-cycle is completed at the beginning of Step~$2$.
It resets the $\executed$ flags of all robots and goes back to Step~$2$.
Note that Step~$m$ does not affect the $\charged$ flag, thus the robots which executed $A$ in the last activation cycle of the preceding mega-cycle are discharged in the first activation cycle of the new mega-cycle and cannot be activated. 
\end{itemize}

%The initial configuration of {\tt  sim-RS-by-S($A$)}  satisfies  $r.\step=1$, $r.\mf=  N$ and $r.\executed =\false$ for any robot $r$.
%For a configuration $C$, if the set of values of $r.\step$ appearing in $C$ is $V$,  we say that the {\em step configuration} of $C$ is $V$. %The initial  step configuration  is $\{1\}$. 
The initial configuration of {\tt  sim-RS-by-S($A$)}  satisfies  $r.\step=1$, $r.\charged=  C$ and $r.\executed =\false$ for any robot $r$.
For a configuration $K$, if the set of values of $r.\step$ appearing in $K$ is $V$, we say that the {\em step configuration} of $K$ is $V$.

Observe that at any moment in time of {\tt  sim-RS-by-S($A$)} the step configuration $V\in \{\{1\}, \{2\}, \{3\}, \{4\}, \{5\}, \{m\}, \{1,2\}, \{2,3\}, \{3,1\},\allowbreak \{2,4\}, \{4,5\}, \{2,5\}, \{2,m\} \}$.
Indeed, the case when $|V|=1$ corresponds to a (sub)set of activated robots performing the same action (executing $A$, or modifying the flags $\charged$ and $\executed$).
These robots may update their $\step$ value leading to a new step configuration $V'$ of a size at most two.
In the case when $|V|=2$ no actions are performed by the robots except of updating their $\step$ to some value $i\in V$ (remaining robots transition to Step~$i$), eventually leading to the next step configuration $V'=\{i\}$.

To prove the correctness of {\tt  sim-RS-by-S($A$)} we start with proving the following properties of the step configurations.
\begin{lemma}~
\vspace{-\topsep}
\begin{description}
\item[P$1$] In Step~$1$, all the robots have $(\charged,\executed)=(C,\false)$.\item[P$2$] In Step~$2$, (1)~all the robots have flags $(\charged,\executed)\in$\\ $\{(C,\false),(C,\true),(E,\true)\}$, (2)~a non-empty subset of robots $R'$ have $(\charged,\allowbreak\executed)=(E,\true)$, and (3)~if $|R'|=|R|$ then the preceding Step was~$1$.
\item[P$3$] In Step~$3$, for all the robots $(\charged,\executed)\in\{(C,\false),$ \\ $(E,\true)\}$.
\item[P$4$] In Step~$4$, (1)~none of the robots have $(\charged,\executed) \in \{(M,\false),(E,\false)\}$, (2)~a non-empty subset of robots $R_m$ have flags $(\charged,\executed)=(M,\true)$ with $|R_m|<|R|$; these and only these robots executed a round of algorithm $A$ in the preceding Step~$2$, and (3)~a non-empty subset of robots $R_x$ have flags $(\charged,\executed)\in\{(E,\true),(C,\true)\}$ with $|R_x|<|R|$.
\item[P$5$] In Step~$5$, (1)~none of the robots have $(\charged,\executed) \in \{(M,\false),\allowbreak(E,\false)\}$, (2)~a non-empty subset of robots $R_m$ have flags $(\charged,\allowbreak\executed)\in\{(M,\true),(E,\true)\}$ with $|R_m|<|R|$; these and only these robots executed a round of algorithm $A$ in the preceding Step~$2$, and (3)~a non-empty subset of robots $R_x$ have flags $(\charged,\allowbreak\executed)=(C,\true)$ with $|R_x|<|R|$.
\item[P$m$] In Step~$m$, (1)~none of the robots have their flag $\charged=M$, and (2) a non-empty subset of robots $R''$ have $\charged=C$ with $|R''|<|R|$.
\end{description}
\end{lemma}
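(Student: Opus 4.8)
The plan is to prove P$1$--P$m$ simultaneously by induction on the round index $t$, piggy-backing on the already-noted fact that the step configuration $V(t)$ is always one of the thirteen listed sets. For the six \emph{pure} configurations $\{1\},\{2\},\{3\},\{4\},\{5\},\{m\}$ the invariant to maintain is exactly P$1$,\dots,P$m$; for each \emph{mixed} configuration $\{i,j\}$ I carry the auxiliary invariant that every robot still holds the $(\charged,\executed)$ values it had when the configuration was last pure, together with a record of which pure step that was. This auxiliary invariant is immediate, because in a mixed configuration the activated robots execute one of lines~19--28, which overwrite only $\step$ and touch neither $\charged$ nor $\executed$. Hence the whole inductive step reduces to (i)~the base case and (ii)~the rounds in which $V(t)$ is pure and the activated set $e_{t+1}$ runs flag-changing code.

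At $t=0$ we have $V(0)=\{1\}$ and $(\charged,\executed)=(C,\false)$ everywhere, so P$1$ holds and the rest are vacuous. For the inductive step one walks through the six active pure configurations. In Step~$1$ (lines~3--6) each activated robot runs $A$ and sets $(\charged,\executed)\leftarrow(E,\true)$ and $\step\leftarrow 2$; if all robots are activated we reach pure $\{2\}$ with everyone at $(E,\true)$, which is P$2$ with $R'=R$ and preceding step~$1$, otherwise $\{1,2\}$ with the auxiliary invariant. In Step~$2$ one splits on lines~8--14: if all robots are $\charged=E$ we pass to $\{3\}$ (possibly via $\{2,3\}$) and P$3$ is read off P$2$; if \emph{all-robots-executed} but not all are $\charged=E$ we pass to $\{m\}$ and P$m$ follows, the $(E,\true)$ robots guaranteed by P$2$(2) certifying $|R''|<|R|$; in the remaining case exactly one batch of $(\false,C)$ robots runs $A$, becomes $(M,\true)$ and moves to Step~$4$, and the intermediate state $\{2,4\}$ blocks any further robot from running $A$, so upon reaching pure $\{4\}$ we obtain P$4$ with $R_m$ that batch and $R_x\supseteq R'$. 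The four bookkeeping steps are analogous: Step~$3$ runs \emph{Reset-All-Flags}, resetting $(E,\true)$ to $(C,\false)$, and once everybody is $(C,\false)$ hands Step~$1$ the flag pattern asserted by P$1$; Step~$4$ runs \emph{Charge-Empty-Flags}, recharging $\charged$ from $E$ to $C$, and hands Step~$5$ the pattern of P$5$; Step~$5$ runs \emph{Empty-Moved-Flags}, moving $\charged$ from $M$ to $E$, and hands Step~$2$ the pattern of P$2$ (moved batch at $(E,\true)$, the rest at $(C,\true)$ or $(C,\false)$); Step~$m$ runs \emph{Reset-Execution}, resetting $\executed$ to $\false$, and returns to Step~$2$. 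In each case one also checks that the partially updated intermediate configurations (while only some of the activated robots have yet acted) still satisfy that step's own P-property, which is routine.

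The non-emptiness and properness claims ($R_m,R_x,R''$ non-empty and of size $<|R|$) all reduce to one fact threaded through the induction: every pure Step~$2$ reached from Step~$1$ or from Step~$5$ contains a non-empty set $R'$ of robots at $(E,\true)$ that is disjoint both from the robots which may next run $A$ (necessarily $\charged=C$) and from those which will next be discharged (necessarily $\charged=M$); hence each of $R_m,R_x,R''$ lies in the complement of a non-empty set and is therefore proper, while it is non-empty because its generating action (a batch running $A$, a recharge, a discharge, a reset) has produced at least one robot of the required flag type. The main obstacle is not any single deduction but the combinatorics of the case split: the configurations $\{2\}$, $\{2,4\}$ and $\{2,m\}$ each have several outgoing transitions selected by the current flag pattern, and one must verify that no choice of $e_{t+1}$ --- in particular none made inside a mixed configuration or at the junction between two mega-cycles --- can give a robot a $(\charged,\executed)$ combination outside the lists of P$1$--P$m$, nor let a second batch begin executing $A$ before the configuration has returned to a pure Step~$2$ (precisely what the immediate jump to Step~$4$ on line~14 prevents). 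The single most delicate point is the first Step~$2$ of each new mega-cycle, where the batch discharged at the end of the previous mega-cycle must be tracked carefully (it still carries $\charged=E$, and is a proper subset because it is disjoint from a non-empty batch that ran $A$ earlier in that mega-cycle with $\charged=C$); checking this transition, and all the others, against the frozen-flag auxiliary invariant is where the bulk of the work lies.
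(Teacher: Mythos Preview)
Your proposal follows essentially the same inductive scheme as the paper's proof: check the base case at $t=0$ and then verify, transition by transition, that if P$i$ holds before a change of step then P$j$ holds afterwards. Your explicit ``frozen-flags'' auxiliary invariant for the mixed configurations $\{i,j\}$ is a more careful articulation of what the paper leaves implicit (the paper simply asserts, before the lemma, that in a mixed configuration only $\step$ is updated), and your organisation of the non-emptiness and properness arguments around the single set $R'$ is somewhat tidier than the paper's case-by-case checks.

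One point worth noting: you correctly single out the $m\to 2$ transition at the start of a new mega-cycle as ``the single most delicate point,'' observing that the last-executed batch still carries $\charged=E$, but you do not actually carry out the verification that P$2$ is restored there. The paper's proof handles this the same way---it simply omits $\text{P}m\!\to\!\text{P}2$ from the list of transitions it checks---so your proposal is no less complete than the paper's own argument; but if you want the induction to close you should spell out what flag pattern the robots carry into the new Step~$2$ (after \emph{Reset-Execution} every robot has $\executed=\false$ while the last batch retains $\charged=E$) and reconcile it with the statement of P$2$.
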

\begin{proof}
Initially property P$1$ holds. Below we show that for any transition from Step~$i$ to Step~$j$ of {\tt sim-RS-by-S($A$)}, if the corresponding property P$i$ held before the transition, the property P$j$ holds after the transition.
Thus, by induction the lemma holds.

$\text{P}1\rightarrow\text{P}2$: If property P$1$ holds in Step~$1$, in the succeeding Step~$2$ property P$2$ also holds.
Indeed, when transiting from Step~$1$ to Step~$2$, the robots either set their flags to $\charged = E$ and $\executed=\true$, or do not update these flags at all.
And a non-empty set of robots does the former.
The last condition trivially holds.
Thus P$2$ holds.

$\text{P}2\rightarrow\text{P}2$: If after activation of a set of robots $R'$ in Step~$2$ the step configuration remains $\{2\}$, then (1) not all robots in $R$ had their flag $\charged=E$, (2) not all robots in $R$ had their flag $\executed=\true$, and (3) none of the robots in $R'$ had their flags $\charged=C$ and $\executed = \false$.
Thus, no actions are taken, and no flags are changed; P$2$ holds.

$\text{P}2\rightarrow\text{P}3$: If property P$2$ holds in Step~$2$ and the succeeding Step is~$3$, then in it property P$3$ also holds.
From Step~$2$ robots can transition to Step~$3$ only if all the robots have $\charged=E$.
Thus, there were only robots with $\charged = E$ and $\executed=\true$ in Step~$2$.
They proceed to Step~$3$ without changing their flags.
Thus P$3$ holds.

$\text{P}2\rightarrow\text{P}4$: If property P$2$ holds in Step~$2$ and the succeeding Step is~$4$, then in it property P$4$ also holds.
From Step~$2$ robots can transition to Step~$4$ only if among the activated robots there is a non-empty subset with $(\charged,\executed)=(C,\false)$.
These robots update their flags to $(\charged,\executed)=(M,\true)$, and proceed to Step~$4$.
The remaining robots proceed to Step~$4$ with no changes to the $\charged$ and $\executed$ flags.
Furthermore, there were no robots in Step~$2$ with $\charged=M$ or $(\charged,\executed)=(E,\false)$, thus P$4$ holds.

$\text{P}3\rightarrow\text{P}3$: In Step~$3$ robots with $(\charged,\executed) = (E,\true)$ update these flags to $(C,\false)$. Thus P$3$ continues to hold.

$\text{P}3\rightarrow\text{P}1$: If property P$3$ holds in Step~$3$ and the succeeding Step is~$1$, in it property P$1$ also holds.
Robots only transition from Step~$3$ to Step~$1$ when none of the robots have $(\charged, \executed) = (E,\true)$.
Thus P$1$ holds.

$\text{P}4\rightarrow\text{P}4$: In Step~$4$ robots with $\charged = E$ update the value of the flag to $C$.
Thus P$4$ continues to hold.

$\text{P}4\rightarrow\text{P}5$: If property P$4$ holds in Step~$4$ and the succeeding Step is~$5$, then in it property P$5$ also holds.
Robots only transition from Step~$4$ to Step~$5$ when none of the robots have $\charged = E$.
Thus all three conditions of P$5$ hold.

$\text{P}5\rightarrow\text{P}5$: In Step~$5$ robots with $\charged = M$ update their flag to $E$.
Thus P$5$ continues to hold.

$\text{P}5\rightarrow\text{P}2$:  If property P$5$ holds in Step~$5$, in the succeeding Step~$2$ property P$2$ also holds.
Robots only transition from Step~$5$ to Step~$2$ when none of the robots have $\charged = M$.
The last condition trivially holds.
Thus P$2$ holds.

$\text{P}2\rightarrow\text{P}m$: If property P$2$ holds in Step~$2$ and the succeeding Step is~$m$, then in it property P$m$ also holds.
From Step~$2$ robots can transition to Step~$m$ only if all the robots have $\executed=\true$ and not all the robots have $\charged=E$.
In this case none of the robots update their flags $\charged$ and $\executed$, and both conditions of P$3$ hold.

$\text{P}m\rightarrow\text{P}m$: In Step~$m$ robots with $\executed = \false$ update their flag to $\true$.
Thus P$m$ continues to hold.
\end{proof}

As mentioned above, {\tt sim-RS-by-S(A)} executes Step~$1$ in the first phase when all the robots are activated at every simulated activation cycle of \RSY.
Each mega-cycle then consists of one execution of Step~$1$.
As soon as a subset of robots is activated in Step~$1$, a second phase begins.
In it, each mega-cycle consists of multiple executions of Step~$2$.
From property P$2$ it follows that at every simulated activation cycle of \RSY a non-empty subset of robots is activated.
From properties P$2$, P$4$, and P$5$, it follows that when a robot executes algorithm $A$ in a mega-cycle, it will not execute in the same mega-cycle again.
The mega-cycle lasts until all the robots have executed $A$.
Furthermore, from properties P$2$, P$4$, P$5$, and P$m$, it follows that in the beginning of the new mega-cycle in Step~$2$, the robots that executed the algorithm in the last simulated activation cycle of \RSY have their flag $\charged=E$, and otherwise the robots have $\charged=C$.
Thus, simulated activation cycle of \RSY is valid and is fair.

Note that, if   algorithm~$A$ uses $\ell$ colors, the simulating algorithm~{\tt sim-RS-by-S(A)} uses $O(\ell)$ colors.

\begin{lemma}
Algorithm {\tt sim-RS-by-S(A)} correctly simulates execution of algorithm $A$ run on $\LU^\RS$ robots by $\LU^S$ robots. 
\end{lemma}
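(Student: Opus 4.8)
The plan is to prove Theorem~\ref{th:sim-DS-by-S} by extracting, from every fair \SSY\ execution of {\tt sim-RS-by-S($A$)} on a team $R$, a fair \RSY\ activation sequence $\hat E=\langle\hat e_1,\hat e_2,\dots\rangle$ of $R$ such that the positions and the values of the light $\light$ produced by the simulation, read off at the rounds in which $A$ is actually run, form exactly the execution of $A$ on $R$ under $\hat E$. Call a simulation round an \emph{$A$-round} if some robot sets its light $\light$ and its destination by the simulated algorithm (line~$3$, inside Step~$1$, or line~$11$, inside Step~$2$), and let $\hat e_t\subseteq R$ be the set of robots doing so in the $t$-th $A$-round. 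The first observation is that lines~$3$ and~$11$ are the only lines that ever change a robot's $r.\light$ or produce a non-zero $r.\des$; every other round leaves $r.\des=(0,0)$ (line~$1$) and touches only the bookkeeping lights $\step,\executed,\charged$. Hence positions and $\light$-values stay frozen between consecutive $A$-rounds, so the sequence of $A$-configurations observed at the $A$-rounds advances by exactly one round of $A$ at each step, driven by $\hat e_t$: up to this stuttering, {\tt sim-RS-by-S($A$)} reproduces the execution of $A$ under $\hat E$, and it remains only to check $\hat E\in{\cal E}(\RSY,R)$ and that $\hat E$ is fair.

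For \RSY-validity I would use the two-phase structure described before the lemma and the flag invariants P$1$--P$m$ of the preceding lemma. In the first phase the $A$-rounds occur in Step~$1$, once per mega-cycle $\{1\}\to\{2\}\to\{3\}\to\{1\}$, and $\hat e_t$ is the set the scheduler activates in that round; as long as the scheduler keeps choosing $R$ these $A$-rounds form a prefix of copies of $R$, and P$1$--P$3$ together with the guard of {\em Reset-All-Flags} guarantee that nothing moves in between. The second phase begins with the first $A$-round whose activated set $S$ is a proper subset of $R$ --- still an $A$-round, in Step~$1$, with $\hat e=S$; thereafter the step configuration never returns to $\{1\}$, because the only route from Step~$2$ back towards Step~$1$ needs all robots to have $\charged=E$, which P$2$--P$5$ forbid in the second phase (where only the robots of the most recent $A$-round carry $\charged\neq C$). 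So every later $A$-round takes place in Step~$2$ on a non-empty proper subset of $R$. The disjointness $\hat e_t\cap\hat e_{t+1}=\emptyset$ between two consecutive second-phase $A$-rounds is the crux: a robot that executes in Step~$2$ sets $\charged=M$ (line~$13$), is switched to $\charged=E$ in the ensuing Step~$5$ (or carried with $\charged\in\{M,E\}$ over a mega-cycle boundary through Step~$m$, which never recharges --- see the last bullet of the step description), and line~$10$ lets only $\charged=C$ robots re-execute; since by P$1$--P$m$ no robot has $\charged=M$ while Step~$2$ is live, the sets of two consecutive $A$-rounds are disjoint. The lone phase-boundary junction, where a full activation $R$ is immediately followed by a proper subset $S$, is harmless, since \RSY\ imposes the disjointness constraint only between two proper-subset rounds.

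For fairness I would show the simulation never stalls and that every robot re-executes $A$ infinitely often. Any maximal stretch of non-$A$-rounds spent in a single pure bookkeeping step ($3$, $4$, $5$, or $m$) or in one of the mixed step configurations $\{i,j\}$ terminates: in such a stretch the finite set of robots that still have to flip a flag, or to adopt the target $\step$ value, is non-increasing and strictly shrinks whenever one of its members is activated, so \SSY-fairness ends the stretch; a short check over the reachable step configurations $\{\{1\},\{2\},\{3\},\{4\},\{5\},\{m\},\{1,2\},\{2,3\},\{3,1\},\{2,4\},\{4,5\},\{2,5\},\{2,m\}\}$ then shows the simulation advances. In particular Step~$2$ cannot be stuck: either all robots have $\executed=\true$, so the guard of line~$9$ passes control to Step~$m$, or fairness eventually activates some robot with $\charged=C$ and $\executed=\false$, producing an $A$-round. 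Within a single second-phase mega-cycle the $\executed$ flags together with P$1$--P$m$ force every robot to execute $A$ exactly once before Step~$m$ resets the flags, and mega-cycles repeat forever; hence each robot lies in infinitely many $\hat e_t$, so $\hat E$ is fair. Since {\tt sim-RS-by-S($A$)} reproduces the $A$-execution along a fair $\hat E\in{\cal E}(\RSY,R)$ up to stuttering, every task solved by $A$ on $\LU^{RS}$ robots is solved by {\tt sim-RS-by-S($A$)} on $\LU^{S}$ robots, which gives Theorem~\ref{th:sim-DS-by-S}.

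The step I expect to be the main obstacle is not the trace extraction but the bookkeeping under an adversarial \SSY\ scheduler: the adversary may dwell arbitrarily long in a mixed step configuration and may keep activating only robots that cannot make progress, so one must certify both that the invariants P$1$--P$m$ survive \emph{every} transition --- already discharged by the preceding lemma --- and that each bookkeeping stretch nevertheless ends and is followed by exactly one correctly formed $A$-round obeying the $\charged$-mediated disjointness. Making the latter airtight amounts to a finite but somewhat delicate case analysis over the thirteen reachable step configurations and the flag patterns guaranteed by P$1$--P$m$.
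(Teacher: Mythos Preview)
Your proposal is correct and follows essentially the same approach as the paper: the paper's ``proof'' is the paragraph immediately preceding the lemma statement, which derives \RSY-validity and fairness of the extracted $A$-activation sequence from properties P$1$--P$m$ via exactly the two-phase/mega-cycle structure you spell out (Step~$1$ produces the $R$-prefix, Step~$2$ produces the proper-subset tail, the $\charged$ flag enforces consecutive disjointness including across the Step~$m$ boundary, and the $\executed$ flag plus mega-cycle completion gives fairness). You supply more detail than the paper does---notably the explicit liveness argument that each bookkeeping stretch terminates under a fair \SSY\ scheduler, which the paper leaves implicit---but the core argument is the same.
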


%%By Lemmas~\ref{lem:1-2}--\ref{lem:fst-2}, {\tt sim-RS-by-S(A)} executes Step~$1$ and Step~$2$ via  Step~$\all$, $m$, $\fst$, and $\snd$ in a sequence of  mega-cycles in \SSY and the execution of $A$ obeys \RSY. We can also show that every robot executing algorithm~$A$ is fair since any mega-cycle is completed.
%explanation is necessary!!
%Let $E$ be a sequence of the set of robots activated in the first and second moves in the algorithm {\tt sim-RS-by-S(A)} and let $E'$ be the sequence of the set of robots activated in $E$.
%It can be easily shown that if every robot is fair in $E$, then $E'$ is fair. Since  a robot $r_f$ activated in the first-move but not activated in the second-move is always activated  in the first-move, these robots are fair in $E$, then they become fair in $E'$.
%Therefore, %if the simulation-possible schedule condition is satisfied, 
%\color{red} since we assume the simulation-possible schedule condition, \color{black}
%every robot executing algorithm \textit{A} can become fair in $E'$ and \textit{A} works correctly in \RSY. 
We have then obtained Theorem~\ref{th:sim-DS-by-S} and since the reverse relation is trivial, the next theorem holds.%in the model part 

%the number of colors sim-RS-by-S: addition of the constant number of colors!!

\begin{theorem}\label{th:LUMISeqLUMIRS}
$\mathcal{LUMI}^{S} \equiv$  $\mathcal{LUMI}^{RS}$.
\end{theorem}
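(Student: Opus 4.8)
The plan is to prove the equivalence by establishing the two inclusions $\LU^{RS}\geq\LU^{S}$ and $\LU^{S}\geq\LU^{RS}$ separately. The first inclusion is immediate and needs no new work: since $\RSY$ is obtained from $\SSY$ by \emph{adding} the restricted‑repetition condition to the admissible activation sequences (both schedulers being fair), we have ${\cal E}(\RSY,R)\subseteq{\cal E}(\SSY,R)$ for every team $R$. Hence any protocol correct against every $\SSY$ activation sequence is in particular correct against every $\RSY$ one, so $Task(\LU,\SSY;R)\subseteq Task(\LU,\RSY;R)$ for all $R$, i.e. $\LU^{S}\leq\LU^{RS}$. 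This is exactly the relation $M^{S}\leq M^{RS}$ already recorded in Section~\ref{sec:Relation}.

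For the reverse inclusion $\LU^{RS}\leq\LU^{S}$ I would invoke Theorem~\ref{th:sim-DS-by-S}, which does the real work. The argument there is constructive: given an $\LU^{RS}$ protocol $A$ using $\ell$ colors, one builds an $\LU^{S}$ protocol ({\tt sim-RS-by-S}$(A)$) using $O(\ell)$ colors whose $\SSY$ executions correspond exactly to legal $\RSY$ executions of $A$. Each robot keeps, besides its simulated light $r.\light\in C$, a $\step$ light (current simulation phase), an $\executed$ flag (whether it already performed a round of $A$ in the current ``mega‑cycle''), and a $\charged\in\{C,E,M\}$ flag (whether it is currently allowed to act). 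A mega‑cycle is organized so that every robot performs exactly one round of $A$; the $\charged$ flag, cycled through the auxiliary steps, guarantees that a robot that just executed $A$ is ``discharged'' and cannot execute again in the next simulated round, enforcing the $e_i\cap e_{i+1}=\emptyset$ constraint, while the split into a first phase (all of $R$ activated) and a second phase (proper subsets only) reproduces the ``prefix of $R$'s followed by proper subsets'' shape of $\RSY$ sequences. One then checks that the simulated activation sequence is always non‑empty, fair, and $\RSY$‑legal, so $A$ is solved under $\SSY$ by {\tt sim-RS-by-S}$(A)$, giving $\LU^{RS}\leq\LU^{S}$.

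The only delicate point — the main obstacle — lies inside Theorem~\ref{th:sim-DS-by-S}: because {\tt sim-RS-by-S}$(A)$ runs under the \emph{full} $\SSY$ adversary, an arbitrary non‑empty subset (possibly all of $R$, possibly overlapping the previous one) may be activated at each real round, so the algorithm must never be driven into an illegal simulated step configuration. This is why one needs the transient two‑valued step configurations $\{i,j\}$, in which activated robots merely migrate towards a common $\step$ value without touching $A$'s colour or the $\charged/\executed$ flags, together with the step invariants $\text{P}1,\dots,\text{P}m$ on the $(\charged,\executed)$ values; once these are proved by induction over the step transitions, validity and fairness of the simulated $\RSY$ schedule follow. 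Combining the two inclusions yields $\LU^{S}\equiv\LU^{RS}$, which is Theorem~\ref{th:LUMISeqLUMIRS}.
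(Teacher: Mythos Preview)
Your proposal is correct and follows exactly the paper's approach: the trivial inclusion $\LU^{RS}\geq\LU^{S}$ from ${\cal E}(\RSY,R)\subseteq{\cal E}(\SSY,R)$, combined with the non-trivial inclusion $\LU^{S}\geq\LU^{RS}$ obtained by invoking Theorem~\ref{th:sim-DS-by-S} and its simulation algorithm {\tt sim-RS-by-S}$(A)$. Your summary of the simulator (mega-cycles, the $\step$/$\executed$/$\charged$ lights, the two phases, the transient $\{i,j\}$ configurations, and the invariants P$1$--P$m$) accurately reflects the paper's construction and proof.
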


%\section{Relationship between \FSY\ and \RSY}

 \section{Relationship between \FSY\ and \RSY}

We have seen that \RSY\ is more powerful than  \SSY  in  \FC, \FS,  \OB, 
and it has the same computational power in \LU.
 To better understand  the power of  \RSY\ among the classical synchronous schedulers,
 we now turn our attention to  the  relationship between \RSY\ and \FSY.
 
\subsection{Dominance of \FSY\ over \RSY}
%\textcolor{green}{
%{\bf Paola: Note that we are using a common coordinate system in the CGE impossibility. It might be perhaps simpler to prove it making explicit use of different coordinate systems for the robots Koichi: Which is better, the current version or the version using fixed Disorientation?}
%}

The problem Center of Gravity Expansion (CGE) was used to show dominance of \FSY over \SSY, that is, CGE is solvable in $\FC^{F}$  and $\FS^{F}$ but is not solvable in $\LU^{S}$ \cite{FSW19}. This problem can also be used to show dominance of \FSY\ over \RSY.  In fact, we can obtain a stronger result  showing that CGE is not solvable in $\LU^{F'}$, where $F'$ is any scheduler such that the first activation does not contain all robots.

%\subsubsection{\FSY\ is stronger than non-\FSY}
 Let $P = \{(x_1, y_1),(x_2, y_2), \ldots, (x_n, y_n)\}$ be the set of positions occupied by
robots $R =\{ r_1, r_2,\ldots,  r_n\}$ and let $c = (c_x, c_y)$ be the coordinates of the CoG (Center of Gravity)
of $P$ at time $t = 0$. All these values are expressed according to a global
coordinate system not available to the robots, which have their own local
ones.
The following problem prescribes the robots to perform a sort of ``expansion'' of the initial configuration with respect to their center of gravity;
specifically, each robot must move away from $(c_x, c_y)$ to the closest 
%integral\irina{what does integral here mean?}
position with lattice point corresponding to doubling its distance from it. More precisely:

    \begin{definition} {\bf CENTER OF GRAVITY EXPANSION (CGE)}:
Let $R$ be a set of robots with $|R|\geq 2$. The CGE problem requires each robot $r_i \in R$ to move 
from its initial position $(x_i,y_i)$ directly to $(f(x_i,c_x),f(y_i,c_y))$ where $f(a,b)=\lfloor 2a-b \rfloor$, 
away from the Center of Gravity of the initial configuration so that each robot doubles its initial distance from it and no longer moves. 
\end{definition}

Note that, if $f(a, b) = 2a - b$, the result would be a perfect expansion of
the initial configuration, doubling the distance of each robot from the center
of gravity; this expansion would clearly preserve the center of gravity. The
presence of the floor makes the configuration expand in a more irregular
fashion, resulting in a new configuration with a different center of gravity.
Also note that each robot $r_i$ perceives itself as the center of its own
coordinate system. Expressed in $r_i$'s local coordinate system, the center
of gravity is located at point $(c_{i,x}, c_{i,y}) = (c_x - x_i
, c_y - y_i)$. Applying the
appropriate translations, upon looking, the destination of $r_i$
is then given by
the local function $f_i(c_{i,x}, c_{i,y})) = (\lfloor -c_{i,x}\rfloor, \lfloor-c_{i,y}\rfloor)$ (which is the equivalent
of the global $f$ indicated in the problem definition). For ease of discussion,
in the following we use the global coordinates.

\begin{lemma}\label{short-lem:infinite}
%expansion of Lemma1.1
For a set of points $P =\{(x_i,y_i)| 1 \leq i \leq n \}$ there exist infinite sets of points $Q_i(1 \leq i \leq n)$
such that $(x_i,y_i) \in Q_i(1 \leq i \leq n)$ and $\forall (x_i',y_i') \in Q_i(1 \leq i \leq n)$:
$f(x_i, c_x) = f(x', c_x')$, and  $f(y_i, c_y) = f(y', c_y')$, where $(c_x', c_y')$ is the coordinates of the CoG of $\{(x_i', y_i')| 1 \leq i \leq n\}$.
\end{lemma}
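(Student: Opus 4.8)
The plan is to produce the family explicitly, by translating $P$ rigidly by a small vector and then letting $Q_i$ be the trajectory of robot~$i$ across that family. The guiding fact is that $f(a,b)=\lfloor 2a-b\rfloor$ is constant as long as $2a-b$ stays inside one half-open unit interval $[m,m+1)$, so it is enough to control the quantities $2x_i-c_x$ and $2y_i-c_y$. First I would record the effect of a common translation: if $x_i'=x_i+s$ and $y_i'=y_i+t$ for all $i$, then the center of gravity shifts the same way, $c_x'=c_x+s$ and $c_y'=c_y+t$, so that $2x_i'-c_x'=(2x_i-c_x)+s$ and $2y_i'-c_y'=(2y_i-c_y)+t$ for every $i$ simultaneously.

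Next I would fix the size of the translation. For each $i$ set $\alpha_i=\lfloor 2x_i-c_x\rfloor+1-(2x_i-c_x)$ and $\beta_i=\lfloor 2y_i-c_y\rfloor+1-(2y_i-c_y)$; each is strictly positive (and at most $1$), so $\sigma=\min_i\alpha_i>0$ and $\tau=\min_i\beta_i>0$. For any $(s,t)$ with $0\le s<\sigma$ and $0\le t<\tau$ one has $\lfloor(2x_i-c_x)+s\rfloor=\lfloor 2x_i-c_x\rfloor$, since $s\ge 0$ keeps the argument $\ge\lfloor 2x_i-c_x\rfloor$ while $s<\sigma\le\alpha_i$ keeps it $<\lfloor 2x_i-c_x\rfloor+1$, and symmetrically for $y$. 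Hence every translate of $P$ by such an $(s,t)$ has exactly the same per-robot CGE destinations as $P$, and $(s,t)=(0,0)$ returns $P$ itself. I would then take $Q_i=\{(x_i+s,\,y_i+t):0\le s<\sigma,\ 0\le t<\tau\}$, an infinite half-open rectangle containing $(x_i,y_i)$; for each configuration of the family, robot~$i$'s position lies in $Q_i$ and the required equalities $f(x_i,c_x)=f(x_i',c_x')$ and $f(y_i,c_y)=f(y_i',c_y')$ hold.

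The one delicate point — which I expect to be essentially the only obstacle — is that the translation must be taken in a fixed non-negative direction: if some $2x_i-c_x$ (or $2y_i-c_y$) happens to be an integer, then moving it downward by any amount drops $\lfloor\cdot\rfloor$ by one, so a two-sided perturbation (for instance an arbitrarily small ball around each $(x_i,y_i)$) would fail, and even a product $\prod_i Q_i$ of small per-robot sets can then be forced to degenerate. Keeping $s,t\ge 0$ sidesteps this, at the cost that the configurations in the statement should be understood to range over this translated sub-family of $\prod_i Q_i$ (all robots displaced by the same $(s,t)$) rather than over arbitrary tuples. Modulo this sign bookkeeping, everything follows from the single fact that a uniform translation shifts the center of gravity identically.
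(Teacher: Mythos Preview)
The paper states this lemma without proof, so there is nothing to compare your argument against directly. Your uniform-translation construction is correct and is the natural idea: shifting every point by the same $(s,t)$ shifts the center of gravity by the same vector, so $2x_i'-c_x'=(2x_i-c_x)+s$ and $2y_i'-c_y'=(2y_i-c_y)+t$, and your choice of $\sigma,\tau$ keeps every floor unchanged. This already yields an infinite family of configurations with identical CGE destinations, which is what the subsequent impossibility argument actually needs.

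Your caveat about the product reading is not mere bookkeeping; it is essential. Under the literal interpretation---for \emph{every} independent choice $(x_i',y_i')\in Q_i$ all $f$-values must match---the lemma is false. Take $n=2$ with $(x_1,y_1)=(0,0)$ and $(x_2,y_2)=(2,2)$, so that all four quantities $2x_i-c_x$, $2y_i-c_y$ are integers. Fixing $(x_1',y_1')=(0,0)$ and letting $(x_2',y_2')=(a,b)\in Q_2$, the conditions $\lfloor -a/2\rfloor=-1$ and $\lfloor 3a/2\rfloor=3$ force $a=2$, and symmetrically $b=2$; hence $Q_2=\{(2,2)\}$ cannot be infinite. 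So the lemma must be read as producing an infinite family of \emph{configurations} (your diagonal), not a product of per-robot sets.

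For context, the paper's own use of the lemma is informal: in the next proof, scenarios $A$ and $B$ differ by a \emph{non}-uniform shift (robots $1,\dots,n-1$ move by $\epsilon$, robot $n$ does not), and robot $n$'s $f$-value actually changes between them. What is really invoked is the qualitative fact that the floor makes the map from initial configuration to destinations many-to-one; your construction establishes precisely that.
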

%\begin{proof}
%Given $(x_i,y_i) \in P$, let $A_i=f(x_i,c_x)=\lfloor 2x_i-c_x \rfloor$ and $B_i=f(y_i,c_y)=\lfloor 2y_i-c_y \rfloor$; that is: 
%$(2x_i-c_x)=A_i+1-\delta$ for some $0 < \delta \leq 1$, and $(2y_i-c_x)=B_i+1-\delta'$ for some $0 < \delta' \leq 1$.
%
%Consider the points $(x_i',y_i')(1 \leq i \leq n)$ with $x_i'=x_i+\epsilon, y_i'=y_i+\epsilon'$, where $0 < \epsilon  < \delta$ and
%$0 < \epsilon'  < \delta'$. We now show that $f(x_i, c_x) = f(x', c_x')$, and  $f(y_i, c_y) = f(y', c_y')$.
%
%Let us first compute $c_x'$:
%\begin{equation}
%c_x' = \frac{1}{n}\sum_{i=1}^{n}+\epsilon =c_x + \epsilon
%\end{equation}
%Then
%\begin{equation}
%f(x_i',c_x')= \lfloor %(2x_i+2\epsilon- c_x- %\epsilon)\rfloor= \lfloor (2x_i- c_x+\epsilon )\rfloor.
%\end{equation}
%Since by definition, $(2x_i-c_x)= A_i+1-\delta $ and $\epsilon  < \delta$, we have that \\$2x_i-c_x+\epsilon< A_i+1$. It follows that :
%\begin{equation}
%A_i \leq 2x_i-c_x+\epsilon< A_i+1
%\end{equation}
%and thus
%\begin{equation}
%f(x_i',c_x')=A_i=f(x_i,c_x).
%\end{equation}
%With a similar reasoning, it follows that
%\begin{equation}
%f(y_i',c_y')=B_i=f(y_i,c_y).
%\end{equation}
%Since this holds an infinite number of values of $\epsilon$ and $\epsilon'$, the lemma follows.
%\end{proof}

\begin{lemma}
Let $n \geq 2$. $\exists \mathit{R} \in \mathcal{R}_n$: $\text{CGE} \not \in \LU^{F'}(R)$,
 where $F'$ is any scheduler such that  the first activation does not contain all robots.  
 %\textcolor{blue}{This lemma holds with non-chirality and consistent local coordinate systems.}
\end{lemma}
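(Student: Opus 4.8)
The plan is to prove the strengthened impossibility ``$\mathrm{CGE}\notin\LU^{F'}(R)$'' by the same style of indistinguishability argument that gives $\mathrm{CGE}\notin\LU^{S}$ in \cite{FSW19}, but squeezing the contradiction out of the single hypothesis on $F'$: the very first activation misses at least one robot. Fix any team $R$ of $n\ge 2$ robots (we may even grant it common chirality and rigid movements, which only makes the statement stronger), and suppose for contradiction that some algorithm $A$, using a fixed finite palette of $c$ colours, solves $\mathrm{CGE}$ for $R$ under every $F'$-schedule. Since the first activation does not contain all robots, I fix the schedule whose first round activates a fixed strict non-empty subset $S\subsetneq R$ missing a distinguished robot $r_n$ (if $n=2$ take $S=\{r_1\}$), and whose remaining rounds activate every robot; this is fair and consistent with $F'$. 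The $\RSY$-schedules whose all-$R$ prefix is empty are of exactly this form, so the same argument simultaneously yields $\mathrm{CGE}\notin\LU^{RS}(R)$, which is the ingredient the later tables need for the dominance of $\FSY$ over $\RSY$.

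The engine is Lemma~\ref{short-lem:infinite}. Applying it to a carefully chosen base configuration $P$ produces an infinite family $\mathcal{F}$ of initial configurations of $\mathrm{CGE}$, all of which prescribe the \emph{same} per-robot destinations $D_1,\dots,D_n$; by further restricting $\mathcal{F}$ I arrange that $r_n$ occupies the same point in every member and that the members differ only in the positions of the robots of $S$ (which still range over infinite sets). Now run $A$ on the members of $\mathcal{F}$ under the schedule above. In round~$1$ every robot of $S$ sees a snapshot depending only on the chosen member, moves to the fixed point $D_i$ by the invariance property, and sets some colour; since there are only $c^{|S|}$ colour tuples while $\mathcal{F}$ is infinite, the pigeonhole principle yields an infinite sub-family $\mathcal{F}'\subseteq\mathcal{F}$ along which the entire configuration after round~$1$ — positions and colours of all $n$ robots, including the not-yet-activated $r_n$ with its initial colour — is literally one and the same configuration $C_1$. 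From $C_1$ on, every $F'$-execution started from a member of $\mathcal{F}'$ evolves identically.

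The contradiction then comes from pairing such an execution against an execution on a configuration \emph{outside} this invariant family that is indistinguishable to the robots of $S$ at the moment they act in round~$1$ (so that it also passes through $C_1$) yet legitimately demands a different destination for $r_n$; here one exploits that the floor in $f$ makes the map ``initial configuration $\mapsto$ destinations'' many-to-one and not translation-equivariant, so that a small, suitably chosen perturbation of $P$ that is invisible to $S$ when it computes nonetheless moves $D_n$. I expect this to be the hard part, and it is exactly the step where the $\LU$ model — as opposed to the easy $\OB$ case — has to be worked out carefully: one must set up $P$ and the perturbation so that the residual information about the original centre of gravity that $r_n$ would need in order to recompute its destination from $C_1$ provably cannot have been handed to it through the $c^{|S|}$-bounded colours of the robots that already moved. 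Once that is arranged, the remaining pieces — the pigeonhole collapse onto $C_1$, the deterministic propagation from $C_1$ onward, and the specialization to $\RSY$ — are routine.
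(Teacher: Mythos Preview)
Your plan diverges from the paper's in a way that leaves the heart of the argument unfinished. The paper does \emph{not} build an infinite family of configurations sharing the entire destination vector $(D_1,\dots,D_n)$ and then hunt for a conflicting outside configuration; it directly constructs two explicit scenarios $A$ and $B$ whose destinations agree on the robots activated in round~$1$ but \emph{disagree on $D_n$}. In $A$ the robots sit at integer points $(c_i,0)$ with centre of gravity at the origin; in $B$ the robots $r_1,\dots,r_{n-1}$ are shifted by a small $\epsilon>0$ while $r_n$ stays put. The floor in $f$ makes the round-$1$ movers land at the same points $(2c_i,0)$ in both scenarios, yet the centre of gravity has moved enough that $r_n$'s prescribed target drops from $2c_n$ to $2c_n-1$. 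The paper then argues (invoking Lemma~\ref{short-lem:infinite} to say the moved robots' pre-images under $f$ are infinite) that $r_n$ cannot recover which scenario it is in, hence cannot choose correctly.

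By contrast, your use of Lemma~\ref{short-lem:infinite} yields a family $\mathcal F$ in which \emph{all} $D_i$, including $D_n$, coincide. That is precisely what makes your pigeonhole collapse to a single $C_1$ go through --- and precisely what kills the contradiction: from $C_1$ a single move to the common $D_n$ is correct for every member of $\mathcal F'$. You see this and defer the real work to an ``outside'' configuration $P'$ reaching the same $C_1$ with a different $D_n$, but you never construct it, and the constraints fight each other. For the robots of $S$ to reproduce $C_1$ exactly (positions \emph{and} colours) they must see in $P'$ the same snapshots as in some $P\in\mathcal F'$; under fixed disorientation that forces $P'$ to be essentially a rigid motion of $P$, and then, with the destination read via the local floor as the problem intends, $D_n$ moves with it. The paper's scenarios avoid this trap because $A$ and $B$ give the activated robots genuinely different snapshots --- the coincidence of landing points comes from the many-to-one nature of $f$, not from snapshot identity. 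If you want to run a pigeonhole on colours, it should be over the paper's one-parameter family where the movers' targets are pinned but $D_n$ is not, not over a family on which $D_n$ is already fixed. A smaller issue: your chosen schedule (round~$1$ activates $S\subsetneq R$, all later rounds activate $R$) is not an $\RSY$ schedule, since once a strict subset is activated $\RSY$ forbids $R$ thereafter and requires consecutive sets to be disjoint; so your specialization to the corollary also needs adjustment.
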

\begin{proof}
Let $k$ be the number of activated robots in the first round. Note that $1 \leq k< n$ by assumption.
By contradiction, suppose that the problem is solvable from an arbitrary initial configuration. Consider two scenarios with $R=\{r_i| 1\leq i \leq n\}$.

In scenario $A$ the global coordinates of the robots are:
$r_i = (c_i,0)(1 \leq i \leq n)$, where $c_i(1 \leq i \leq n-1)$ are different negative integers, $c_n$ is a natural number, and $\Sigma_{i=1}^{n}c_i=0$; thus, CoG is at $(0,0)$.

In scenario $B$ the global coordinates of the robots are:
$r_i = (c_i+\epsilon,0)(1 \leq i \leq n-1), r_{n}=(c_n,0)$, where $c_i(1 \leq i \leq n-1)$ are different negative integers, $c_n$ is a natural number, $\Sigma_{i=1}^{n}c_i=0$ and $1>\epsilon>0$; thus CoG is at $((n-1)\epsilon/n,0)$. 

Consider an execution ${\mathcal E}$ where the scheduler activates $r_1, r_2,\ldots,r_k$ at time $t_0$, where $1 \leq k \leq n-1$ by assumption. In this execution, in scenario $A$, $r_i(1 \leq i \leq k)$ moves to $f((c_i,0),(0,0))=(2c_i,0)$, possibly changing color; the new configuration at time $t_0+1$ is $r_i=(2c_i,0)(1 \leq i \leq k), r_i=(c_i+\epsilon,0)(k+1 \leq i \leq n-1)$, and $r_{n}=(c_n,0)$.
 Observe that the same configuration would have been obtained by the same execution also in scenario $B$ where CoG would be at $((n-1)\epsilon/n,0)$ and $r_i(1 \leq i \leq k)$ would have moved to $f(c_i+\epsilon,(n-1)\epsilon/n),(0,0))=(c_i,0)$ (if $(n+1)\epsilon/n <1$) as well.
 
 In execution ${\mathcal E}$, let the scheduler activate $R_{t_0+1}$ (it can be $R$) at time $t_0+1$ and let $r_n \in R_{t_0+1}$.
 Robot $r_n$ may know that $r_i(1 \leq i \leq k)$ have already reached their destination by looking at their colors; that is, they may know that their positions have been calculated according to the target function $f$.  However, by Lemma~\ref{short-lem:infinite}, we know that $f^{-1}$ corresponds to an infinite set of coordinates which contains, in particular, both $(c_i,0)$ and $(c_i+\epsilon,0)(1 \leq i \leq k)$.
 
 In scenario $A$ the unique correct destination for $r_n$ would be $f((c_n,0),(0,0))=(2c_n,0)$. In scenario $B$, on the other hand, to solve the problem, $r_n$ should move to $f((c_n, (n-1)\epsilon/n),(0,0))=(2c_n-1,0)$. 
Since robot $r_n$  observes the same configuration at time $t_0+1$, $r_n$ cannot distinguish the two scenarios.
We have a contradiction. %\qed
% Given the impossibility for $r_n$ to distinguish the scenarios, we have a contradiction.
\end{proof}

Since \RSY\ contains patterns in $F'$, we have the following.

\begin{corollary}\label{co:CGE}
Let $n \geq 2$. $\exists \mathit{R} \in \mathcal{R}_n$, $\text{CGE} \not \in \LU^{RS}(R)$.
\end{corollary}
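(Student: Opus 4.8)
The plan is to obtain the corollary directly from the preceding lemma by exhibiting a concrete \RSY\ activation sequence that realizes the adversarial behaviour exploited in that lemma's proof. Recall that every $E\in{\cal E}(\RSY,R)$ either activates all of $R$ in every round, or has a finite prefix of rounds equal to $R$ (of some length $p\ge 0$) followed by rounds that are non-empty proper subsets of $R$ satisfying $e_i\cap e_{i+1}=\emptyset$. Taking $p=0$, we get sequences in which \emph{every} round activates a non-empty proper subset of $R$; in particular the first round $e_1$ does not contain all robots. Hence the restriction of \RSY\ to empty-prefix sequences is a scheduler of the form $F'$ from the lemma, and so \RSY\ ``contains patterns in $F'$''.

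First I would verify that such a fair \RSY\ sequence exists for every $n\ge 2$ and, crucially, that it can reproduce the specific execution ${\mathcal E}$ used in the lemma: there, in the first round a set $\{r_1,\dots,r_k\}$ with $1\le k\le n-1$ is activated, and in the next round a set containing $r_n$ is activated. Under \RSY\ with empty prefix this is immediate: put $e_1=\{r_1,\dots,r_k\}$ (non-empty and proper since $k\le n-1$), then $e_2=\{r_n\}$ (non-empty, proper since $n\ge 2$, and disjoint from $e_1$ because $r_n\notin e_1$), and extend to a fair sequence arbitrarily, for instance by cycling through the singletons $\{r_1\},\{r_2\},\dots,\{r_n\}$ from round~$3$ onward — consecutive rounds are disjoint and every robot is activated infinitely often, so the sequence is a legitimate \RSY\ schedule.

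Then I would close the argument by contradiction exactly as in the lemma. Let $R\in\mathcal{R}_n$ be the $n$-robot team constructed in the lemma's proof, and suppose some $\LU$ algorithm $A$ solves CGE under \RSY. Then $A$ must be correct on the \RSY\ sequence built above; but on the two scenarios $A$ and $B$ of the lemma this sequence produces, at the end of the second round, identical configurations (positions and colours), while the unique correct destinations of $r_n$ differ ($(2c_n,0)$ versus $(2c_n-1,0)$). Since $r_n$ is activated in that round and cannot distinguish the two scenarios, $A$ cannot be correct — a contradiction. Hence $\text{CGE}\notin\LU^{RS}(R)$.

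There is essentially no difficulty here beyond this bookkeeping: the lemma does all the real work, and the only point that genuinely needs care is matching the \RSY\ definition to the lemma's execution — namely checking that the empty-prefix case $p=0$ is admissible, that the singleton choices $e_1,e_2$ respect the disjointness constraint, and that the schedule extends to a fair infinite one. Once these are in place, the corollary follows.
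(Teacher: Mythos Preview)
Your proposal is correct and follows essentially the same approach as the paper: the paper's entire proof is the single sentence ``Since \RSY\ contains patterns in $F'$, we have the following,'' and what you have written is exactly the unpacking of that sentence --- verifying that the empty-prefix case $p=0$ yields a legitimate \RSY\ schedule whose first two rounds realise the adversarial execution from the lemma, and that this schedule can be extended fairly. Your care in checking disjointness, properness, and fairness is appropriate and fills in details the paper leaves implicit.
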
  
 %Since CGE is solvable in $\FC^{F}$ and $\FS^{F}$, we have;
As a consequence, we have dominance of \FSY\ over \RSY\ as follows;

\begin{theorem}\label{th:FvsRS} 
 $\forall \mathcal{X}\in\{\OB,\FC,\FS,\LU\}$\\   $\mathcal{X}^{F} > \mathcal{X}^{RS}$.
%\begin{enumerate}
%\item\label{short-LUF>LURS} $\LU^{F} > \LU^{RS}$,
%\item\label{short-FCF>FCRS} $\FC^{F} > \FC^{RS}$,
%\item\label{short-FSF>FSRS} $\FS^{F} > \FS^{RS}$,
%\item\label{short-OBF>OBRS} $\OB^{F} > \OB^{RS}$.
%\end{enumerate}
\end{theorem}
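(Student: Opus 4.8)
The claim is that $\mathcal{X}^{F} > \mathcal{X}^{RS}$ for every $\mathcal{X}\in\{\OB,\FC,\FS,\LU\}$. Recall that ``$>$'' means ``$\geq$ and $\not\leq$''. The $\geq$ direction is already free: since $\FSY$ is a special case of $\RSY$ (the activation sequence $e_i=R$ for all $i$ is in $\mathcal{E}(\RSY,R)$), every protocol that works under $\RSY$ in particular works under $\FSY$, so $\mathcal{X}^{F}\geq\mathcal{X}^{RS}$ holds trivially for all four models; this was already noted in the preliminaries ($M^F\geq M^{RS}\geq M^S$). So the entire content is the strict separation: exhibiting, for each model, some team $R$ and some problem solvable in $\mathcal{X}^{F}(R)$ but not in $\mathcal{X}^{RS}(R)$.

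The strategy is to produce a \emph{single} separating problem that works uniformly across all four models, and the natural candidate is CGE, since Corollary~\ref{co:CGE} already gives us half of what we need: $\exists R\in\mathcal{R}_n$ with $\text{CGE}\notin\LU^{RS}(R)$, hence (because $\LU$ is the strongest model, $\LU^{RS}\geq\mathcal{X}^{RS}$ for all $\mathcal{X}$) we get $\text{CGE}\notin\mathcal{X}^{RS}(R)$ for every $\mathcal{X}\in\{\OB,\FC,\FS,\LU\}$ on that same team. So the remaining obligation is purely the positive side: to show $\text{CGE}\in\OB^{F}(R)$ for the relevant team $R$ (and hence, since $\OB$ is the weakest model and $\mathcal{X}^{F}\geq\OB^{F}$, also $\text{CGE}\in\mathcal{X}^{F}(R)$ for all four $\mathcal{X}$). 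This is where I would lean on the fact, cited in the excerpt when CGE was introduced, that CGE is solvable in $\FC^{F}$ and $\FS^{F}$; in fact CGE should be solvable even in $\OB^{F}$, because under full synchrony every robot is activated in the very first round, so every robot simultaneously sees the initial configuration, computes its own local version $f_i(c_{i,x},c_{i,y})=(\lfloor -c_{i,x}\rfloor,\lfloor -c_{i,y}\rfloor)$ of the target function from that snapshot, and moves there in one shot; obliviousness costs nothing because a robot only ever needs the first snapshot, and after the single synchronous move the configuration is a fixed point (every robot is already at its destination and recomputes the same point, so no one moves again). I would phrase this as a one-paragraph lemma ``$\forall R$, $\text{CGE}\in\OB^{F}(R)$ assuming rigid movements,'' proved by exactly this one-round argument.

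Putting the pieces together: fix $n\geq 2$ and let $R\in\mathcal{R}_n$ be the team furnished by Corollary~\ref{co:CGE}. Then $\text{CGE}\in\OB^{F}(R)\subseteq\mathcal{X}^{F}(R)$ for every $\mathcal{X}\in\{\OB,\FC,\FS,\LU\}$, while $\text{CGE}\notin\LU^{RS}(R)\supseteq\mathcal{X}^{RS}(R)$, so $\text{CGE}\in\mathcal{X}^{F}(R)\setminus\mathcal{X}^{RS}(R)$. Combined with the trivial $\mathcal{X}^{F}\geq\mathcal{X}^{RS}$, this yields $\mathcal{X}^{F}>\mathcal{X}^{RS}$ for all four models, as claimed.

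\textbf{Anticipated main obstacle.} The genuinely delicate point is \emph{not} the easy full-synchronous solvability but making sure the impossibility transfers cleanly. Corollary~\ref{co:CGE} is stated for $\LU^{RS}$, and I must invoke $\LU^{RS}\geq\FC^{RS},\FS^{RS},\OB^{RS}$ (which follows from $\LU^{P}\geq\FS^{P}\geq\OB^{P}$ and $\LU^{P}\geq\FC^{P}$ for any scheduler $P$, as recorded in Section~\ref{sec:Relation}) to conclude that a problem not solvable in $\LU^{RS}(R)$ is a fortiori not solvable in the weaker models on the \emph{same} $R$. That is routine but must be invoked explicitly. A secondary subtlety is the rigidity assumption: the one-round $\OB^{F}$ algorithm for CGE genuinely needs rigid movement (a non-rigid adversary could stop a robot short of $(\lfloor -c_{i,x}\rfloor,\lfloor -c_{i,y}\rfloor)$, and then on re-activation the robot would recompute relative to a \emph{changed} center of gravity and aim somewhere new), so the theorem as I would prove it carries the rigidity hypothesis, consistent with how the earlier separations in the paper are stated. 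If one wants the non-rigid version one would appeal to the extension remarked after Theorem~\ref{th:FSYNCvsSSYNC}.
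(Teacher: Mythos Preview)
Your overall structure matches the paper's: invoke the trivial inclusion $\mathcal{X}^{F}\geq\mathcal{X}^{RS}$, use CGE as the separating problem via Corollary~\ref{co:CGE} for the impossibility side (CGE $\notin\LU^{RS}(R)$, hence $\notin\mathcal{X}^{RS}(R)$ for all four $\mathcal{X}$), and appeal to \cite{FSW19} for the solvability side. The paper's own proof is just the line ``As a consequence'' following Corollary~\ref{co:CGE}.

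There is, however, a genuine gap in your treatment of the $\OB$ case. You assert that after one fully-synchronous round ``the configuration is a fixed point (every robot is already at its destination and recomputes the same point, so no one moves again).'' This is false, and the paper itself warns against it: the floor in $f$ makes the expansion irregular, ``resulting in a new configuration with a different center of gravity.'' In round~$2$ an oblivious robot therefore sees a \emph{new} CoG $c'$ and computes $f(\cdot,c')$, which is in general a different point. Concretely, with two robots at $0$ and $3$ (one dimension), the CoG is $1.5$, the round-$1$ destinations are $\lfloor-1.5\rfloor=-2$ and $\lfloor 4.5\rfloor=4$; the new CoG is $1$, and the round-$2$ destinations become $-5$ and $7$, not $-2$ and $4$. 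So your one-shot $\OB$ algorithm keeps moving and does not solve CGE (whose specification requires the robot to reach its target and ``no longer move''). Since an $\OB$ robot cannot distinguish round $1$ from round $2$, this is not repairable by a local tweak. Note that the paper only cites CGE $\in\FC^{F}\cap\FS^{F}$, never CGE $\in\OB^{F}$.

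For $\LU$, $\FC$, $\FS$ your argument is fine (CGE $\in\FS^{F}\cap\FC^{F}$ from \cite{FSW19}, hence $\in\LU^{F}$). For $\OB^{F}>\OB^{RS}$ you need a different witness; the natural candidate already present in the paper is CGE$^*$, which is cited as ``trivially solved in $\OB^{F}$, but not in $\LU^{S}$,'' and since $\LU^{S}\equiv\LU^{RS}$ (Theorem~\ref{th:LUMISeqLUMIRS}) and $\OB^{RS}\leq\LU^{RS}$, this yields CGE$^*\in\OB^{F}(R)\setminus\OB^{RS}(R)$.
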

%\begin{proof}
%1)-3) follow from Corollary~\ref{short-co:CGE} and from the fact that CGE is solvable in $\FC^{F}$ and $\FS^{F}$ (and so $\LU^{F}$).
%4) follows from Lemma~\ref{short-lem:RDV} and from the fact  that RDV  can be trivially solved in $\OB^{F}$.
%\end{proof}

\subsection{Orthogonality of \FSY\ with \RSY}
 
In this section, we show the orthogonality of $\FS^{F}$ with $\mathcal{X}^{RS}$ for $\mathcal{X} \in \{\LU, \FC \}$, and of $\OB^{F}$ with $\mathcal{X}^{RS}$ for $\mathcal{X} \in \{\LU, \FC, \FS\}$. % by Theorem~\ref{short-th:FCRSLURSLUS}  and Theorem~\ref{short-th:FSYNCvsSSYNC} \ref{short-FSFneqLUS}. and \ref{short-OBFneqLUSFCSFSS}.
The orthogonality of $\FS^{F}$ with $\mathcal{X}^{RS}$, for $\mathcal{X} \in \{\LU, \FC \}$, can be obtained by
(1) CYC is not in $\FS^{F}$ (Lemma~\ref{lem:CYC-FSFimpo}) and is in $\FC^{S}$ (Lemma~\ref{lem:CYC-FCS}), and thus CYC is in $\FC^{RS}$ and $\LU^{RS}$, and
(2) CGE is in $\FS^{F}$ (\cite{FSW19}) and is not in $\LU^{RS}$ (Corollary~\ref{co:CGE}),  and thus CGE is not in $\FC^{RS}$ .

%CYC is not (solvable) in FSTA^{F} (Le15) but is (solvable) in FCOM^{S} (\substeq FCOM^{RS})(Le16)
%CGE is in FSTA^{F}[17] but not in LUMI^{RS}(Cor21)
\begin{theorem}\label{th:FSFneqLURSFCRS}~
\begin{enumerate}
\item \label{FSFneqLURS}$\FS^{F} \bot$ $\LU^{RS}$,
\item \label{FSFneqFCRS}$\FS^{F} \bot$ $\FC^{RS}$.
\end{enumerate}
\end{theorem}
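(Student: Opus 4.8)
The plan is to supply, for each orthogonality claim, two witness problems separating the two sides in opposite directions, and then exploit the already-established inclusions $\FC^{S}\le\FC^{RS}\le\LU^{RS}$ and $\FS^{S}\le\FS^{F}$ so that it suffices to argue against the strongest (resp.\ weakest) member of each chain. I will use CGE for the direction ``$\FS^{F}$ solves something that $\LU^{RS}$, and hence $\FC^{RS}$, cannot,'' and CYC for the direction ``$\FC^{RS}$, and hence $\LU^{RS}$, solves something that $\FS^{F}$ cannot.''

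For the first direction I would recall from \cite{FSW19} that $\text{CGE}\in\FS^{F}(R)$ for every team $R$, and then, using Corollary~\ref{co:CGE}, fix a team $R_{1}\in\mathcal{R}_{n}$ with $\text{CGE}\notin\LU^{RS}(R_{1})$; the intuition is that $\RSY$ admits activation sequences whose first round is a proper subset of $R$, and against such a schedule a robot activated later cannot tell which of the infinitely many preimages of the target function $f$ (Lemma~\ref{short-lem:infinite}) produced the configuration it observes. Since $\FC^{RS}(R_{1})\subseteq\LU^{RS}(R_{1})$, the same $R_{1}$ also witnesses $\text{CGE}\notin\FC^{RS}(R_{1})$, so $Task(\FS,\FSY;R_{1})\setminus Task(\LU,\RSY;R_{1})\ne\emptyset$ and likewise with $\FC^{RS}$ in place of $\LU^{RS}$.

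For the second direction, Lemma~\ref{lem:CYC-FCS} gives $\text{CYC}\in\FC^{S}(R)$ for every $R\in\mathcal{R}_{n}$ with $n\ge 3$ under chirality, so the chain $\FC^{S}\le\FC^{RS}\le\LU^{RS}$ yields $\text{CYC}\in\FC^{RS}(R)$ and $\text{CYC}\in\LU^{RS}(R)$ for all such $R$; and Lemma~\ref{lem:CYC-FSFimpo} provides a team $R_{2}\in\mathcal{R}_{n}$ with $\text{CYC}\notin\FS^{F}(R_{2})$ --- the obstruction being that the lone central robot must pick one of $2^{n-1}$ destinations using only its constant-size internal light. Hence $Task(\FC,\RSY;R_{2})\setminus Task(\FS,\FSY;R_{2})\ne\emptyset$ and similarly for $\LU^{RS}$. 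Combining the two directions gives $\FS^{F}\bot\LU^{RS}$ and $\FS^{F}\bot\FC^{RS}$, as claimed.

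I do not expect a real obstacle here: all the substance is already packaged in Lemmas~\ref{lem:CYC-FSFimpo}, \ref{lem:CYC-FCS}, \ref{short-lem:infinite} and Corollary~\ref{co:CGE}, and what remains is an inclusion chase together with the observation that CGE and CYC were engineered precisely to separate ``internal finite memory under the benign $\FSY$ scheduler'' from ``finite communication under the restricted $\RSY$ scheduler.'' The only point I would double-check is that the mismatch in quantifiers --- $\text{CGE}\in\FS^{F}(R)$ for \emph{all} $R$ versus $\text{CGE}\notin\LU^{RS}(R)$ for \emph{some} $R$ --- causes no trouble, which is immediate since orthogonality only requires one separating team on each side.
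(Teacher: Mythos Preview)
Your proposal is correct and follows essentially the same approach as the paper: both use CGE (via \cite{FSW19} and Corollary~\ref{co:CGE}) to witness $\FS^{F}\setminus\LU^{RS}\neq\emptyset$, and CYC (via Lemmas~\ref{lem:CYC-FSFimpo} and~\ref{lem:CYC-FCS}) to witness $\FC^{RS}\setminus\FS^{F}\neq\emptyset$, then push through the trivial inclusions $\FC^{S}\le\FC^{RS}\le\LU^{RS}$. Your treatment is somewhat more explicit about the quantifier check, but the substance is identical.
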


The problems OSP and CGE* were used to show $\OB^{F} \bot$ $\LU^{S}$, that is,
problem OSP can be solved in $\LU^{S}$, but not in $\OB^{F}$, and
problem CoG* can be trivially solved  in $\OB^{F}$, but not in  $\LU^S$~\cite{FSW19}.
%The problems are defined as follows: 

%\begin{definition} {\bf OSCILLATING POINTS (OSP)~\cite{DFPSY}}: Two robots, $a$ and $b$, initially in distinct locations,   alternately come closer and move further from each other. More precisely,
%let $d(t)$ denote the distance of the two robots at time $t$. The OSP problem requires the two robots, starting from an arbitrary distance $d(t_0)>0$ at time $t_0$, to move so that there exists a monotonically increasing infinite sequence time instant $t_0,t_1,t_2,\ldots$ such that:
%\begin{enumerate}
%   \item $d(t_{2i+1})<d(t_{2i})$, and $\forall h',h'' \in [t_{2i},t_{2i+1}], h'<h'', d(h'')\leq d(h')$, and
%    \item $d(t_{2i})>d(t_{2i-1})$, and $\forall h',h'' \in [t_{2i-1},t_{2i}], h'<h'', d(h'')\geq d(h')$.
%\end{enumerate}
%\end{definition}

%\begin{definition}
%{\bf PERPETUAL CENTER of GRAVITY EXPANSION (CGE*)} This is the same as CGE, where however after each expansion, the robots have to repeat the same process from the new configuration.
%\end{definition}

These problems can be used to show $\OB^{F} \bot$ $\mathcal{X}^{RS}$ ($\mathcal{X} \in \{\LU, \FC, \FS \}$) by using the following lemma \ref{lem:OSP}, Corollary~\ref{co:CGE},   and the fact that CGE* can be solved in $\FS^{F}$~\cite{FSW19}.

%\begin{theorem}\label{short-th:OrthoOBF-LUMI}
%$\LU^{S} \bot   \OB^{F}$.
%\end{theorem}
%\begin{proof}
%Problem OSP can be solved in $\LU^{S}$ (Lemma \ref{short-lem:OSP2}), but not in $\OB^{F}$ (Lemma \ref{short-lem:OSP1}).
%Problem CoG* can be trivially solved  in $\OB^{F}$, but not in  $\LU^S$ (Lemma \ref{short-lm:CGE}).  
%\end{proof}

%$\LU^{RS}$  and since we will show $\LU^{RS} \equiv \FC^{RS}$ in section~\ref{short-LUS=FCRS}, the orthogonality of $\FS^{F}$ with $\FC^{RS}$ follows the orthogonality of $\FS^{F}$ with $\LU^{RS}$.%($\mathcal{X} \in \{\LU, \FC \}$)

%OSP
%In fact, from \cite{FSW19} %Lemma \ref{short-lem:OSP2} 
%we know that $\exists \mathit{R} \in \mathcal{R}_2$, $\text{OSP} \not \in \FC^{S}(R) \cup \FS^{S}(R)$.  The possibility in $ \FS^{RS}(R) \cup \FC^{RS}(R)$  is shown in the following Lemma.
%\input{OSP-algorithm-two-column}
%\input{OSP-algorithm-one-column}

 \begin{lemma}\label{lem:OSP}~
\begin{itemize}
    \item $\exists \mathit{R} \in \mathcal{R}_2$, $\text{OSP} \not \in \OB^{F}(R) $~{\em \cite{DFPSY}},
%    \item $\exists \mathit{R} \in \mathcal{R}_2$, $\text{OSP} \not \in \FC^{S}$.
%    \item $\exists \mathit{R} \in \mathcal{R}_2$, $\text{OSP} \not \in \FS^{S}$.
%    \item $\forall \mathit{R} \in \mathcal{R}_2$, $\text{OSP}  \in \FS^{F}$.
    \item $\forall \mathit{R} \in \mathcal{R}_2$, $\text{OSP}  \in \FS^{RS}(R) \cap \FC^{RS}(R)$.
\end{itemize}
%\textcolor{blue}{The latter case holds with non-rigid movement and non-chirality.}
\end{lemma}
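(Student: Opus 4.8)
The first item is quoted from \cite{DFPSY} and requires nothing new, so the work is entirely in the second item. Recall that in the solution of the oscillation task OSP given in \cite{DFPSY} the light serves, in essence, only as an $O(1)$-bit register recording which of the (finitely many) prescribed sub-configurations --- write $D_0$ for the initial one and $D_1,D_2,\dots$ for its prescribed transforms --- is to be produced next; this single persistent bit is precisely what oblivious robots lack, and is why OSP lies in $\LU^S$ but not in $\OB^F$. The plan is to re-implement that register with a constant-size \emph{internal} light (for $\FS$) and with a constant-size \emph{external}, i.e.\ broadcast, light (for $\FC$), and to show that each re-implementation works under every \RSY\ activation sequence of a two-robot team.

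The lever is the two-robot characterization of \RSY\ already exploited in Lemma~\ref{lem:SRO-OBRS}: every execution under \RSY\ on $R\in\mathcal{R}_2$ is either (1) a perpetual simultaneous activation of both robots, or (2) a finite (possibly empty) prefix of simultaneous activations followed by a perpetual \emph{strict} alternation of the two robots. I would build both algorithms as two coupled regimes. In the \emph{synchronous regime} (in force while the robots keep moving together) the two robots perform each transition $D_i\!\to\!D_{i+1}$ in a single round exactly as in the $\LU^S$ algorithm and keep their phase values equal; case~(1) is then immediate. In the \emph{alternation regime}, entered as soon as a robot is activated while the partner did not move in the previous round, the crucial point --- and the reason \RSY\ rather than \SSY\ is needed for $\FS$ --- is that strict alternation makes the schedule \emph{predictable}: an activated robot knows that its partner acted in the preceding round and will act in the next, so this lockstep substitutes for the inter-robot communication that $\FS$ does not have. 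Each robot, on its turn, performs the portion of the current transition that its phase bit (together with the snapshot) dictates, advancing the phase; transient desynchronizations created by the simultaneous prefix are absorbed within a bounded number of turns. Since OSP only constrains the infinitely-often behavior, the intermediate configurations traversed during a multi-round transition are admissible. For $\FC$ the same skeleton works with the phase bit broadcast: one maintains as an invariant that the two broadcast values agree at every clean configuration, so that an active robot recovers its own value from the partner's light and the snapshot; the remainder is a routine check that this invariant together with the synchronous and alternation transition rules is preserved.

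The main obstacle is exactly the alternation regime: one must verify that $O(1)$ colors --- internal in the $\FS$ case, broadcast-only (hence unreadable by the owner) in the $\FC$ case --- really do keep the two robots coordinated when they never move in the same round, bearing in mind that disorientation makes the inter-robot distance scale-ambiguous, so geometry alone never reveals the phase and the bookkeeping must be watertight; one must also check that the construction degrades correctly when the simultaneous prefix is empty or infinite. As a shortcut for the $\FC$ part only, one may instead invoke $\FC^{RS}\equiv\LU^{RS}$ (Theorem~\ref{th:FCRSLURSLUS}) and $\LU^{RS}\equiv\LU^{S}$ (Theorem~\ref{th:LUMISeqLUMIRS}) together with OSP~$\in\LU^{S}$~\cite{DFPSY}, which yields OSP~$\in\FC^{RS}$ at once; the membership OSP~$\in\FS^{RS}$, however, does not follow from any equivalence available at this point and genuinely requires the direct construction sketched above.
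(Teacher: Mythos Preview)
The paper states this lemma without proof, evidently treating the second item as immediate from the cited prior work. The intended argument is almost certainly the one-line observation you overlooked: the orthogonalities $\OB^F\bot\FS^S$ and $\OB^F\bot\FC^S$ recorded in Theorem~\ref{th:FSYNCvsSSYNC}(\ref{OBFneqLUSFCSFSS}) (from~\cite{FSW19}) are witnessed precisely by OSP, i.e., OSP $\in \FS^S(R)\cap\FC^S(R)$ is already established there. Since $\FS^{RS}\geq\FS^S$ and $\FC^{RS}\geq\FC^S$ hold trivially (\RSY\ is a restriction of \SSY), the second bullet follows at once. No new algorithm is required, and in particular nothing specific to the two-robot structure of \RSY\ is needed.

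Your proposal instead builds bespoke $\FS^{RS}$ and $\FC^{RS}$ protocols from scratch, exploiting the prefix/alternation dichotomy of two-robot \RSY\ schedules. This is a genuinely different route: it is self-contained rather than relying on the \SSY\ solvability imported from~\cite{FSW19}, and it would give an explicit low-color algorithm rather than one obtained by composition. The cost is that your sketch leaves real work undone --- the ``transient desynchronizations \dots absorbed within a bounded number of turns'' step in the $\FS$ alternation regime is exactly the delicate part, and without the precise definition of OSP (which this paper never states) one cannot verify that the intermediate configurations produced during a multi-round transition remain legal. Your forward reference to Theorem~\ref{th:FCRSLURSLUS} for the $\FC$ half is logically non-circular but textually awkward, and in any case unnecessary once you use the simpler $\FC^S\leq\FC^{RS}$ inclusion. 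In short: your plan is plausible but over-engineered; the paper's (implicit) proof is a direct consequence of results already on the table.
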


It follows that
\begin{theorem}\label{th:orthFSOBwithRS}~
\begin{enumerate}
%    \item\label{short-FSFneqLURS=FCRS} $\FS^{F} \bot$ $\LU^{RS} \equiv \FC^{RS}$
    \item\label{OBFneqLURS} $\OB^{F} \bot$ $\LU^{RS}$
    \item\label{OBFneqFCRS} $\OB^{F} \bot$ $\FC^{RS}$    
%    \item  $\OB^{F} \bot$ $\FC^{RS}$
    \item\label{OBFneqFSRS}  $\OB^{F} \bot$ $\FS^{RS}$
\end{enumerate}
\end{theorem}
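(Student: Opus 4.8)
The plan is to deduce all three orthogonalities from the two separating problems OSP and CGE* together with the lemmas already available. For each pair $\OB^{F} \bot \mathcal{X}^{RS}$ with $\mathcal{X}\in\{\LU,\FC,\FS\}$ we must exhibit one task solvable in $\mathcal{X}^{RS}$ but not in $\OB^{F}$, and one task solvable in $\OB^{F}$ but not in $\mathcal{X}^{RS}$.

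First I would handle the ``$\OB^{F}$ cannot, $\mathcal{X}^{RS}$ can'' direction using OSP. By the first item of Lemma~\ref{lem:OSP}, $\exists R\in\mathcal{R}_2$ with $\text{OSP}\notin\OB^{F}(R)$, and by the second item $\text{OSP}\in\FS^{RS}(R)\cap\FC^{RS}(R)$ for all $R\in\mathcal{R}_2$. Since $\LU^{RS}\geq\FC^{RS}$ (the trivial model inclusions recalled in Section~\ref{sec:Relation}), also $\text{OSP}\in\LU^{RS}(R)$. Hence for every $\mathcal{X}\in\{\LU,\FC,\FS\}$ there is a team $R$ with $\text{OSP}\in\mathcal{X}^{RS}(R)\setminus\OB^{F}(R)$, giving the first half of each orthogonality.

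For the reverse direction I would use CGE*. The problem CGE* is trivially solvable in $\OB^{F}$ (each robot, always activated, moves once to the prescribed destination), so $\text{CGE*}\in\OB^{F}(R)$ for every $R$. For the negative side, Corollary~\ref{co:CGE} gives $\exists R\in\mathcal{R}_n$ with $\text{CGE}\notin\LU^{RS}(R)$; the same argument (resting on Lemma~\ref{short-lem:infinite} and the fact that \RSY\ contains an activation sequence whose first round omits some robot) shows the starred variant CGE* is likewise not in $\LU^{RS}$, and since $\LU^{RS}\geq\FC^{RS}\geq\OB^{RS}\geq\cdots$ — more precisely $\LU^{RS}\geq\FS^{RS}$ and $\LU^{RS}\geq\FC^{RS}$ — any task not solvable in $\LU^{RS}$ is not solvable in $\FS^{RS}$ or $\FC^{RS}$ either. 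Thus $\text{CGE*}\in\OB^{F}(R)\setminus\mathcal{X}^{RS}(R)$ for the appropriate $R$ and every $\mathcal{X}\in\{\LU,\FC,\FS\}$. Combining the two directions yields $\OB^{F}\bot\mathcal{X}^{RS}$ in all three cases.

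The only delicate point is making sure the lower bound transfers correctly: CGE* must genuinely fail in $\LU^{RS}$ (not merely CGE), and OSP's positive result in $\FC^{RS}$ must be invoked so that the $\FC$ case is covered without appeal to $\FS$. Both are handled by the cited results — Corollary~\ref{co:CGE} together with the remark that CGE* can be solved in $\FS^{F}$~\cite{FSW19} (showing it is the right separating problem), and the $\FC^{RS}$ clause of Lemma~\ref{lem:OSP} — so no new construction is needed; the proof is purely a bookkeeping combination of prior lemmas.
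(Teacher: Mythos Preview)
Your proposal is correct and follows essentially the same approach as the paper: OSP (via Lemma~\ref{lem:OSP}) separates in one direction, CGE* together with Corollary~\ref{co:CGE} in the other, and the trivial inclusions $\LU^{RS}\geq\FC^{RS},\FS^{RS}$ propagate the $\LU^{RS}$ impossibility downward. One cosmetic remark: the fact that CGE* is solvable in $\FS^{F}$ plays no logical role in this argument (the paper's citation of it is incidental); what you actually need---and already correctly state earlier---is CGE* $\in\OB^{F}$ and CGE* $\notin\LU^{RS}$, the latter obtainable either by rerunning the impossibility argument behind Corollary~\ref{co:CGE} for the starred variant or, more directly, from CGE* $\notin\LU^{S}$~\cite{FSW19} combined with $\LU^{RS}\equiv\LU^{S}$ (Theorem~\ref{th:LUMISeqLUMIRS}).
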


%\subsection{\texorpdfstring{$\mathcal{LUMI}^{S} = \mathcal{LUMI}^{RS}$}{LUMI,s,LUMI,PS}}

%section:Power within RSYNCH

 %\documentclass{llncs}

%----------------------
\section{Power within  \RSY}\label{LUS=FCRS}
%----------------------

In this section we determine the relationships among  $\LU^{RS},\\ \FC^{RS}$, and $\FS^{RS}$, and we show  that:\\
 $\FC^{RS} = \LU^{RS}  > \FS^{RS}$.

\begin{theorem}\label{th:sim-LUMI-by-FCOM}
%\begin{itemize}
$\forall R \in \mathcal{R}, \LU^{S}(R) \subseteq \FC^{RS}(R)$.
%, and
%\item $\forall R \in \mathcal{R}, \LU^{S}(R) \subseteq \FC^{F}(R)$.
%\end{itemize}
\end{theorem}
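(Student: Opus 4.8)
The plan is to build a simulation algorithm, call it {\tt sim-LUMI-by-FCOM($A$)}, that runs on $\FC^{RS}$ robots and faithfully reproduces any protocol $A$ designed for $\LU^S$ robots. The central difficulty is a mismatch of resources: an $\LU$ robot has \emph{both} persistent memory (it can read its own light) \emph{and} communication (others read its light), whereas an $\FC$ robot is oblivious about its own light — it can only broadcast. So the first task is to recover ``self-memory'' from a purely broadcast primitive. The idea I would pursue is to store robot $r$'s simulated $\LU$-color not in $r$'s own light but encoded \emph{redundantly in the lights of the other robots}, or — more in the spirit of the paper's earlier {\tt Cyclic-Cycles} trick — to have $r$ reconstruct its own color by reading a designated ``mirror'' field that a neighbor maintains on its behalf. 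Since the configuration is arbitrary (no fixed circle structure as in CYC), the cleaner route is: each robot carries, in addition to the field broadcasting ``my current simulated color'', an auxiliary field; whenever $r$ is activated it writes its \emph{next} simulated color both where others expect to read it and, implicitly, in a form it can itself recover on its next activation by cross-referencing. Because $\FC$ lights are visible to all but oneself, and robots are anonymous, the honest solution is probably to note that $r$ always \emph{recomputes} what its color should be from the snapshot it took in its previous active round — but it is not oblivious of that snapshot only if we simulate memory, which is circular. Hence the real mechanism must be to exploit the $\RSY$ restriction, explained next.

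The key leverage is the restricted-repetition structure of $\RSY$ versus the free power of $\SSY$: under $\RSY$, once the ``prefix of full rounds'' ends, the activated set alternates so that $e_i\cap e_{i+1}=\emptyset$, which means a robot active in round $i$ is guaranteed \emph{inactive} in round $i+1$. I would use exactly that guaranteed gap to carry information forward: when $r$ executes a simulated step of $A$ in round $i$, it broadcasts its new $\LU$-color; in round $i+1$ it is asleep, so that color sits undisturbed on its light for everyone (and can be latched by a helper); when $r$ wakes again at some round $j>i+1$ it needs its own color, which by then may have been over-written — so before sleeping it additionally hands the value to a canonical neighbor (e.g.\ the one it can identify from the snapshot, using chirality / ordering when available, or a leader-election sub-protocol among the at-most-$n$ robots). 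The simulation therefore runs $A$ ``slowly'': each round of $A$ is simulated by a constant-length \emph{mega-cycle} of $\FC^{RS}$ rounds devoted to (a) letting a robot read back its stored color, (b) executing one $\Compute$ of $A$, (c) re-broadcasting and re-depositing the updated color, (d) bookkeeping flags so that the next $\RSY$ activation pattern is recognized as a legal $\SSY$ pattern of $A$. Structurally this mirrors the {\tt sim-RS-by-S} proof: define a $\step$ light cycling through a fixed diagram, prove a family of invariants P$1$,P$2$,$\dots$ on which $(\charged,\step,\ldots)$ combinations can co-occur, and show each simulated $A$-round is executed exactly once per robot per mega-cycle and that the induced $\SSY$-schedule for $A$ is fair.

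Concretely the steps in order: (1) fix notation — $A$ uses $\ell$ colors $C=\{c_0,\dots,c_{\ell-1}\}$; introduce $\FC$ lights $r.\light_{\mathit{bcast}}\in C$ (``the color others should see for $A$''), $r.\light_{\mathit{store}}$ or a neighbor-indexed store used to recover one's own color, $r.\step$ over a small finite index set driving the transition diagram, plus the auxiliary $\executed/\charged$-style flags that certify a valid $\RSY$-to-$\SSY$ translation; initialize everything to the $c_0$/$0$/$\false$ state. (2) Describe one mega-cycle: a short deterministic sequence of $\FC$ rounds in which the (possibly several) currently-activated robots first \emph{commit} their stored self-color, then each computes $f_A(\text{snapshot})$ and sets its new destination and new broadcast color, then re-deposits, then advances $\step$; non-participating activated robots only advance $\step$. (3) Handle the first phase (the all-robots prefix of $\RSY$) separately, exactly as in {\tt sim-RS-by-S}: while every round activates $R$, one mega-cycle = one synchronous step of $A$. (4) Prove the step-configuration invariants and conclude, as in the $\LU$-case, that (i) each robot executes $A$ at most/exactly once per mega-cycle, (ii) the sequence of sets ``robots that executed $A$ in this mega-cycle'' is a fair $\SSY$ activation sequence for $A$, and (iii) every robot correctly reconstructs its own $\LU$-color each time it executes, because between two consecutive self-reads the value was preserved either on its own light during a forced idle round or in a canonical neighbor's store-light. (5) Note the color blow-up is $O(\ell)$ (times a constant for $\step$, flags, and store), so it is still a valid $\FC$ robot. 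Then $\LU^S(R)\subseteq\FC^{RS}(R)$ follows, and combined with the already-established $\LU^{S}\equiv\LU^{RS}$ (Theorem~\ref{th:LUMISeqLUMIRS}) and the trivial $\FC^{RS}\le\LU^{RS}$ we get $\FC^{RS}\equiv\LU^{RS}\equiv\LU^{S}$.

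The main obstacle is step (4)(iii): establishing that a robot can \emph{always} recover its own simulated $\LU$-color despite being oblivious of its own light and despite an adversary who, under $\RSY$, may activate wildly different subsets. The forced-idle-round gap of $\RSY$ gives one round of safe storage on one's own light, but a robot may sleep for many rounds, during which its light might be overwritten by the simulation machinery; so the delicate design point is choosing and maintaining the ``canonical custodian'' neighbor(s) for each robot in an anonymous, possibly-symmetric configuration — this likely needs the snapshot's geometric information (positions, and chirality if assumed) to pin down a consistent cyclic order, together with an argument that the custodian relation is itself stable across the mega-cycles in which its holder is asleep. Getting that invariant airtight, and checking it survives every branch of the transition diagram, is where the real work lies; the rest is bookkeeping analogous to Theorem~\ref{th:sim-DS-by-S}.
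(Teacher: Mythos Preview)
Your high-level architecture --- a step-indexed transition diagram, mega-cycles with an $\executed$ flag so that each robot runs $A$ exactly once per mega-cycle, and fairness of the induced \SSY schedule --- matches the paper's. But the heart of the theorem, recovering one's own simulated $\LU$-color in $\FC$, is precisely the step you label ``the main obstacle'' and leave unresolved. Your sketched mechanism (a helper latches my color during my forced idle round; a canonical custodian robot stores it thereafter) inherits exactly the anonymity, multiplicity, and stability problems you yourself raise, and you give no way past them.

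The paper's actual device sidesteps these issues entirely. Using the circular ordering of the occupied \emph{locations} (via chirality, later removed by storing an unordered pair of neighbor-sets), each robot $r$ at location $x$ broadcasts, in a dedicated field $r.\CoS.\light\in 2^{C}$, the full \emph{set} of colors it sees at the successor location $\CoS(x)$. To recover its own color, $r$ reads $\rho.\CoS.\light$ from any $\rho$ at $\CoP(x)$ --- this is the complete set of colors present at $x$, including $r$'s own --- and subtracts from it the set of colors $r$ itself can see at $x$, which by $\FC$ semantics omits exactly $r$'s color; the difference is $\{r.\light\}$. An analogous field recovers $r.\executed$. Anonymity and co-location are handled automatically by the set encoding, and the ``custodian'' is a location rather than a robot, so there is no stability argument to make. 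The price is $O(\ell\,2^{\ell})$ colors, not the $O(\ell)$ you claim. Finally, the \RSY restriction is \emph{not} used for color storage (the copy step, Lemma~\ref{lem:step-1-2}, works even under \SSY) but only in the execute step (Lemma~\ref{lem:step-2}): when the first activated batch $\tilde S$ runs $A$ and sets $\step\gets 3$, the condition $e_i\cap e_{i+1}=\emptyset$ guarantees that every robot activated in the next round lies outside $\tilde S$ and therefore \emph{sees} some $\rho$ with $\rho.\step=3$, preventing a second inconsistent batch from also executing $A$ on a stale view before the reset. Under plain \SSY this step would fail.
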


%We show that \RSY\$\FC$ robots are at least as powerful SSYNC $\LU$ robots by constructing a simulation algorithm of \RSY\$\LU$ robots by \RSY\$\FC$ robots. 

%\input{old files/AlgorithmSimLUMIbyFCOM-one-column}
%Phse 1-3
\Newcodeline
%\begin{algorithm}[tbH]
%\begin{algorithm}[tb]
\begin{algorithm}[H]
\caption{{\tt sim-LUMI-by-FCOM(\textit{A})}: predicates and subroutines for robot $r$ at location $x$.}
\label{algo:simA}
\begin{minipage}{\columnwidth}
%\small
\medskip
\textbf{\em predicate}  {\em checked-all-neighbors-and-me}\\
\-\qquad $\forall \rho \neq r \ (\rho.\checked = \true$ {\bf and} $\rho.\CoS.\checked = \true)$

\bigskip

\textbf{\em predicate} {\em own-executed}\\
\-\qquad $\CoP(x).\executed \setminus \executed(x) = \{\true\}$,\\
\-\qquad\qquad \parbox{\linewidth - 4em}{where $\executed(x)$ is the set of $\executed$ colors seen by $r$ at its own location $x$ (note that, by definition, this set does not include $r$'s own color)}

\bigskip

\textbf{\em predicate}  {\em all-robots-executed}\\
\-\qquad $\forall \rho \neq r \ (\rho.\executed = \true) \textbf{ and } r.\executed$

\bigskip

\textbf{\em predicate}  {\em all-robots-reset} \\
\-\qquad $\forall \rho \neq r \ (\rho.\executed = \false)$

\bigskip

\textbf{\em predicate} {\em checked-flags-reset($\rho$)} \\
\-\qquad $\rho.\CoS.\lcolor = \emptyset$ {\bf and} $\rho.\CoS.\executed = \{\false\}$\\
\-\qquad\qquad {\bf and } $\rho.\CoS.\checked = \false$ {\bf and} $\rho.\checked = \false$

\bigskip

\textbf{\em subroutine} {\em Copy-Colors-of-Neighbors}\\
\-\qquad $r.\CoS.\lcolor \leftarrow \CoS(x).\lcolor$

\bigskip

\textbf{\em subroutine} {\em Copy-Execution-of-Neighbors}\\
\-\qquad $r.\CoS.\executed \leftarrow \CoS(x).\executed$

\bigskip

\textbf{\em subroutine} {\em Determine-own-color}\\
\-\qquad $r.\lcolor \leftarrow   \CoP(x).\CoS.\lcolor \setminus \lcolor(x)$,\\
\-\qquad\qquad \parbox{\columnwidth-4em}{where $\lcolor(x)$ is the set of $\lcolor$ colors seen by $r$ at its own location $x$ (note that, by definition, this set does not include $r$'s own color)}

\bigskip

\textbf{\em subroutine} {\em Reset-Checking-Flags}\\
\-\qquad $r.\CoS.\lcolor \leftarrow \emptyset$ \\
\-\qquad $r.\CoS.\executed \leftarrow \{ \false \}$ \\
\-\qquad $r.\CoS.\checked \leftarrow$ $\false$ \\
\-\qquad $r.\checked \leftarrow$ $\false$
\medskip
\end{minipage}
\end{algorithm}

%With subroutines and predicate
%\Newcodeline
%\begin{algorithm}[tbH]
%\begin{algorithm}[tb]
\begin{algorithm}[tbp]
\caption{{\tt sim-LUMI-by-FCOM(\textit{A})}: for robot $r$ at location $x$.}
\label{algo:simAmain}
\begin{algorithmic}[1]
%	\scriptsize
\small
\Statex \hspace{-2em}{\em Assumptions}: Let $x_0, x_1, \ldots, x_{m-1}$ be the circular arrangement on the configuration $C$, and let $\CoS(x_i)=x_{i+1\bmod m}$ and $\CoP(x_i)= x_{i-1\bmod m}$.

\bigskip

\Statex	\hspace{-2em}{\em State Look}
\Statex Observe, in particular, ${\tt pred}(x).\step$, ${\tt pred}(x).\lcolor(\executed)$, ${\tt suc}(x).\lcolor (\executed)$, as well as $r.\lcolor.\here$ (note that, for this, $r$ cannot see its own color).

\bigskip

\Statex	\hspace{-2em}{\em State Compute}
\State $r.\des \leftarrow r.\pos$
	
\medskip

\If{$\forall \rho \neq \ r(\rho.\step =1)$} \hfill // step 1 (CopyColors \& Execution)
\State {\bf call} {\em Copy-Colors-of-Neighbors}
\State {\bf call} {\em Copy-Execution-of-Neighbors}
\If{$\forall \rho  \in \CoS(x) (\rho.\checked =\true)$} $r.\CoS.\checked \leftarrow \true$
\EndIf
\State $r.\checked \leftarrow \true$
\If{{\bf not} {\em checked-all-neighbors-and-me}}
\State $r.\step \leftarrow  1$
\Else \State $r.\step \leftarrow  2$
\EndIf

\medskip

\ElsIf{$\forall \rho  \neq r \ (\rho.\step=\PS)$} \hfill // start step \PS (PerformSimulation)
\If{{\em all-robots-executed}} $r.\step \leftarrow \RA$ \hfill // if mega-cycle ends
\ElsIf{{\em own-executed}} $r.\step \leftarrow \PS$
\Else
\State{\bf call} {\em Determine-own-color}
\State Execute the {\tt Compute} of $A$ \hfill // with my color $r.\lcolor$, determining destination $r.\des$
%\Statex \hfill // determining destination $r.\des$
\State $r.\executed \leftarrow \true$
\State $r.\step \leftarrow \RC$
\EndIf

\medskip

\ElsIf{$\forall \rho \neq r \ (\rho.\step=\RC)$} \hfill // step \RC (ResetCheckingFlags)
\If{$\forall \rho \neq r \, (${\em checked-flags-reset$(\rho))$}}
\State $r.\step \leftarrow \CCA$
\State {\bf call} {\em Reset-Checking-Flags}
\ElsIf{$\exists \rho \neq r \, ($ {\bf not} {\em checked-flags-reset$(\rho))$})}
\State $r.\step \leftarrow \RC $
\State {\bf call} {\em Reset-Checking-Flags}
\EndIf

\medskip

\ElsIf{$\forall \rho \neq r \, (\rho.\step=\RA)$} \hfill // step \RA (ResetExecutionFlags)
\If{{\bf not} $\forall \rho \neq r \, ((\rho.\executed = \false)$\\\qquad\qquad {\bf and}  $(\rho.\suc.\executed =\{\false\}))$}
\State $r.\step \leftarrow \RA$
\State $r.\executed \leftarrow \false$
\State $r.\CoS.\executed  \leftarrow \{\false\}$
\Else
\State $r.\step \leftarrow \PS$
\State $r.\executed \leftarrow \false$
\State $r.\CoS.\executed  \leftarrow \{\false\}$
\EndIf

\medskip

\ElsIf{$\forall \rho  \neq r \, ((\rho.\step=\CCA)$  {\bf or} ($\rho.\step=\PS))$} $r.\step \leftarrow \PS$
\ElsIf{$\forall \rho  \neq r \, ((\rho.\step=\PS)$  {\bf or} ($\rho.\step=\RC))$} $r.\step \leftarrow \RC$ 
\ElsIf{$\forall \rho \neq r \, ((\rho.\step=\PS)$ {\bf or} ($\rho.\step=\RA))$ %\\\qquad\qquad
{\bf and}  {\em all-robots-executed}} $r.\step \leftarrow \RA$
\ElsIf{$\forall \rho \neq r \, ((\rho.\step=\RC)$ {\bf or} ($\rho.\step=\CCA))$} $r.\step \leftarrow \CCA$
\ElsIf{$\forall \rho \neq r \, ((\rho.\step=\RA)$ {\bf or} ($\rho.\step=\PS))$ %\\\qquad\qquad
{\bf and} {\em all-robots-reset}} $r.\step \leftarrow \PS$
\EndIf

\medskip

\Statex \hspace{-2em} {\em State Move}
\Statex Move to $r.\des$
\end{algorithmic}
\end{algorithm}

Let \textit{A} be an algorithm in \SSY\ by $\mathcal{LUMI}$ with Disorientation and non-rigid movement,
and let \textit{A} use light with $\ell$ colors: $C=\{c_0, c_1, \ldots c_{\ell-1} \}$.
We now extend the simulation algorithm  of $\LU$ robots by $\FC$ robots in \FSY\ described in \cite{FSW19},
designing a more complex simulation algorithm  {\tt sim-LUMI-by-FCOM(\textit{A})}%\footnote{The program version is shown in Appendix.} 
 (Algorithms~\ref{algo:simA} and~\ref{algo:simAmain}) by $\mathcal{FCOM}$ robots in \RSY. 
The main ideas will stay the same.
Robots first copy the lights of their neighbors.
This in turn allows them to look at their neighbors to gain information about the color of their own light.
Next, some robots activate and have enough information to execute a step in \textit{A}.
Lastly, the robots reset their states such that they can start copying the lights of their neighbors again.
This cycle repeats itself and every time some robots execute algorithm \textit{A}.
Contrary to the simulation algorithm in \FSY, we need to take extra care here to ensure that the resulting schedule is fair and every robot executes a step in \textit{A} infinitely often.
After all, the same subset of robots might execute algorithm \textit{A} each time the robots perform this cycle.
As with {\tt sim-RS-by-S(\textit{A})}, we consider the concept of {\em mega-cycles}, in which every robot executes a step in \textit{A} exactly once.
To facilitate this, each robot gets a flag indicating if the robot has executed \textit{A} already this mega-cycle.
To give a robot information about its own execute status, we let the robots copy this flag in the same way as they copy the other lights of their neighbors.
As soon as all robots have executed \textit{A} in a certain mega-cycle, these flags get reset and the next mega-cycle starts.

To facilitate copying the lights of neighbors, we define a circular ordering on the locations of the robots. When chirality is assumed, this defines for each location $x$ a successor location $\CoS(x)$ and a predecessor location $\CoP(x)$. The assumption of chirality will later be lifted.

%Chirality is assumed but it can be removed.
% 
%Like the simulation algorithm~ {\tt sim-LUMI-by-FCOM-in-FSYNC(\textit{A})},
%we firstly consider the case that all robots are located on the different positions during the execution of \textit{A} and the total ordering can be determined by the locations of robots. We can treat the case that some robots are occupied on the same locations similar to the FSYNC simulation case.

%The simulating algorithm is presented in Algorithm~\ref{algo:simA} (predicates and subroutines used in the algorithm) and \ref{algo:simAmain} (the main algorithm). 
Figure~\ref{fig:transition-sim-LUMI-by-FCOM} shows the transition diagram as the robots change steps' values.

%\input{AlgorithmSimLUMIbyFCOM-two-column}
%\input{AlgorithmSimLUMIbyFCOM-one-column}

%Modify on 2020.1.10
Robot $r$ at location $x$ has available the following colors of $\mathcal{FCOM}$.%, where $C'=C \cup \{\bot\}$.

\begin{figure}
\centering
\includegraphics{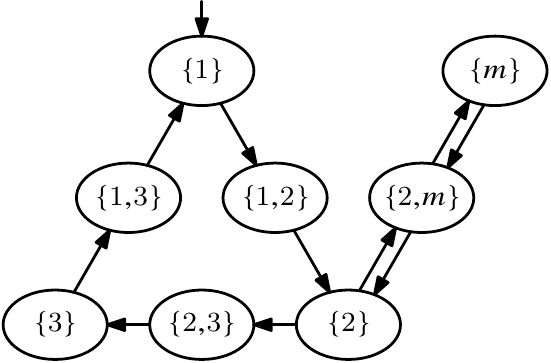}
%\centering\includegraphics[keepaspectratio, width=8.5cm]{Transition-Diagram-sim-LUMI-by-FCOM.pdf}
    \caption{Transition Diagram of {\tt sim-LUMI-by-FCOM(\textit{A})}. Step~1: copy colors and activation; step~2: check if the mega-cycle ends; step~3: reset check flags; step~$\bm m$: reset activation.}
    \label{fig:transition-sim-LUMI-by-FCOM}
\end{figure}

\begin{enumerate}
\item $r.\light \in C$ indicating its own light used in algorithm~$\textit{A}$, initially set to    $r.\light=c_0$,

\item $r.\CoS.\light$ indicating a set of colors from $2^{C}$ at \CoS(x), initially set to  $r.\CoS.\light= \emptyset$,
\item $r.\step  \in \{1,2,3,\RA\}$ indicating the step of the algorithm, where step~$1,2,3$ and $\RA$ are CopyColors\&Execution, PerformSimulation,  ResetCheckingFlags and ResetExecutionFlags, respectively, initially set to $r.phase=  1$,
%CC: CopyColors&activation,
%PS: PerformSimulation,
%RA: ResetActivationflags
%RC: ResetCopyflags
%\item r.succ.\lcolor and r.pred.\lcolor indicating colors from $C \cup \{\bot\}$ at \CoS(x) and at \CoP(x), initially set to r.succ.\lcolor= $\bot$ and r.pred.\lcolor=$\bot$, respectively.
%\item $r.\CoN.\light$ indicating a 2-element set of lights from $2^{C}$ at \CoS(x) and at \CoP(x), initially set to  $r.\CoN.\light= \{ \emptyset, \emptyset\}$,

\item $r.\executed  \in \{\true, \false \}$ indicating whether $r$ has executed the algorithm~$A$ or not in a mega-cycle, initially set to $r.\executed=\false$. 
\item $r.\CoS.\executed$ indicating a set from $2^{\{\true, \false\}}$ at   \CoS(x) and having the same role of $r.\CoS.\light$, initially set to
$ r.\CoS.\executed=\emptyset$,
\item$r.me.checked$, $r.\CoS.checked$, $\in \{\true, \false \}$ indicating  whether all `light's and `executed's  are correctly set, initially set to $r.me.checked$,   $r.\CoS.checked=\false$.
%\item $r.p1.suc.checked  \in \{D(one), \bot \}$, initially set to $  r.p1.suc.checked =\bot$.
%\item $r.active \in \{D, \bot \}$ indicating whether {\bf Step}~$2$ execute at least once,  initially set to  $r.active=\bot$.
%\textcolor{red}{$r.ready$ is not used.}
%\item $r.waiting , r.except1 \in \{\true, \false \}$ indicating whether actual move of \textit{A} is possible or not, initial set to  $r.waiting=\false$ and $ r.except1=\false$.
%\item $r.\CoN.Have$-$Activated \in \{D(one), \bot \}$ indicating whether all of $r$'s neighbor are activated or not in one Mega-cycle.
%\item $r.only$-$one$ \em{r.only-one} , \em{r.see-only-one}
%\item $r.$\OO, $r.$\SOO\ $\in \{D(one), \bot \}$ being used to show whether {\bf Step}~$2$ performs correct simulation of algorithm~\textit{A} or not in $\FC$, 
% initially set to $r.$\OO\ $=\bot$ and $r.$\SOO\ $=\bot$
%The details will be shown later.  
\end{enumerate}

%succ(r) and pred(r) have to be defined.
The algorithm is a sequence of mega-cycles, each of which lasts until all robots execute the simulated algorithm exactly once,  and the end of mega-cycles is checked at the beginning  of  Step~$2$.
During each mega-cycle, this algorithm is composed of the following three steps:
\begin{itemize}
\item In Step~$1$, every robot obtains the neighbor's information so that robots can recognize their own lights.
\item In Step~$2$, activated robots in it execute algorithm~$\textit{A}$ using the color of light obtained.
\item In Step~$3$, every robot resets check flags to prepare the next simulation of  algorithm~$\textit{A}$.
Repeating the three steps, the end of the current mega-cycle is checked at the beginning of  Step~$2$ and
if the megacycle is finished, each robot $r$ proceeds to Step~$\RA$ to reset.
\item In Step~$\RA$, every robot resets flag $r.\executed$ indicating whether $r$ has executed in this mega-cycle or not to $\false$ and the control returns to Step~$2$.
\end{itemize}

The details of these steps are as follows:

%Algorithm~sim-LUMI-by-FCOM(\textit{A})  consists of four steps.

\begin{itemize}
\item {\bf Step 1-Copy Colors  and Activation} ($\forall \rho\neq r(\rho.\step=1)$). 

In the Look phase, $r$ understands to be in Step~$1$ by detecting  $\rho.\step=1$ for any other robots $\rho$. 
Every robot $r$ stores (and displays) in variable  $r.\CoS.\light$ and $r.\CoS.\executed$, respectively,  its successors' sets of colors and also their sets of executed flags indicating whether the successors' robots have been activated in this mega-cycle.
%,  obtained from the snapshot  
None of the robots move in this step. 
Note that every robot can recognize  its own color of light and its activated flags in the configuration after Step~$\CCA$ is completed.
This step ends when all robots store  their neighbors' sets of colors and sets of executed flags.
To be able to detect this, every robot sets $r.checked$ to $\true$, indicating that it has executed Step~$1$.
Moreover, whenever all successors of $r$ have set their $checked$ flags to $\true$, $r$ also sets $r.\CoS.checked$ to $\true$.
Now every robot can determine the end of Step~$1$ when the two checked flags ($\rho.me.checked$, $\rho.\CoS.checked$) of every robot $\rho$ become $\true$. 
When this condition is satisfied, $r$ changes $r.\step$ to $2$. 

\item {\bf Step 2-Perform Simulation}  ($\forall \rho\neq r(\rho.\step=2)$). 

In the Look phase, $r$ recognises to be in Step~2 by observing $\rho.\step=2$ of any other robots $\rho$.
Robot $r$ can calculate $r.\light$ and $r.\executed$ by using its predecessor's $\CoS.\light$ and $\CoS.\executed$.  
The rest of Step~$2$ consists of two phases: checking the end of mega-cycles, and  execution of the simulation phase.
%Howver, due to $\FC$ robots

{\em Checking End of Mega-Cycles:}
After calculating $r.\light$ and $r.\executed$, $r$ will check if the current mega-cycle has ended. % and performs simulation of algorithm~\textit{A}.  
If all robots have executed algorithm~$\textit{A}$ in this mega-cycle (all robots have $r.\executed$ set to $\true$, then it moves to Step~$\RA$ (resetting all the executed flags) without changing colors of lights except for the executed flags.

{\em Execution of Simulation :} 
%After the Look phase, the actual execution of the algorithm~${\textit A}$ is possible upon detection of  $\rho.\step=2$ of any other robots $\rho$. 
%The determination of the color of light of $r$ at $x$ can be done 
%by utilizing the successor's set of colors ($\CoS(x).
%following the same method as for the case of the simulation of $\LU$ by $\FC$ in \FSY (\cite{FSW19}). explanation and its extension\color{black}
%{\em Execution.} 
%The detail of determining colors
%Since $r.\executed$ can be seen as a two-colored light ($D$ and $\bot$),
%robot $r$ can also determine $r.\executed$ similar to determining the own color.
If the mega-cycle has not ended yet, the robots activated in this round will perform a step in \textit{A}.
The activated robots (indicated by set $S_2$) at this time, actually perform their cycle according to algorithm $\textit{A}$
if they have not performed algorithm~$\textit{A}$ in this mega-cycle yet ($r.\executed$ is $\false$).  %{\bf ???}
The robots that executed algorithm $\textit{A}$ update $r.\executed$ and furthermore update $r.\light$ and move according to algorithm $\textit{A}$. 
After the actual execution of algorithm $\textit{A}$, each robot $r$ performing the simulation sets $r.\step=3$ to reset all checking flags (Step~$3$) to prepare %copying colors and activation performed in
for the next Step~$1$.

If all robots in $S_2$ have performed algorithm~$\textit{A}$ so far in this mega-cycle, no light is changed, and Step~$2$ is also performed at the next time.
This situation repeats until some robots not having executed algorithm $\textit{A}$ appear.
The appearance of these robots is guaranteed from the fairness of the scheduler.
%\item {\bf Step 3-Reset Activation flags}  ($\forall \rho(\rho.\step =RA$)). 
%Return to PS
%In the Look phase, $r$ determines {\bf step}~{\textit RA} by observing the colors of $\rho.\step=RA$ of any other robots $\rho$.
%In {\bf step}~RA, all robots reset  their activated flags and the step will return to {\bf Step}~{\textit PS}.

%PS-> RA or RA->PS
\item {\bf Step 3-Reset Check flags}   ($\forall \rho\neq r(\rho.\step =3)$). 

%Return to CCA
In the Look phase, $r$ understands to be in Step~$3$ by observing $\rho.\step=3$ for any other robots $\rho$. 
In Step~$3$, all robots reset $r.checked$, $r.\CoS.checked$ to $\false$, and $r.\CoS.\executed$ to $\{ \false \}$, enabling the robots to perform another round in this cycle. Each robot resetting its flags changes $r.\step$ to $1$.

\item {\bf Step $\RA$-Reset Mega-Cycle}   ($\forall \rho\neq r(\rho.\step =\RA)$). 
In the Look phase, $r$ understands to be in Step~$\RA$ by observing   $\rho.\step=\RA$ of any other robots $\rho$. 
In Step~$\RA$, 
%Before performing the simulation,
%\color{red} robots check whether this mega-cycle ends or not, where mega-cycle consists of the duration all robots are activated just once in the simulated algorithm.
%Activated robots check whether all robots have been activated since the Mega-cycle begins
%by using the light $r.activated$, where its initial value is $\bot$ when the mega-cycle begins and it is changed to $D$(one) when the robot is activated.
%Then, checking the end of mega-cycle can be done by  $\forall \rho(\rho.activated =D)$, note that any robot can recognize its own $activated$ flag.
%If this mega-cycle is finished,  
%\color{black}
each robot $r$ sets $r.\executed=\false$ and  $r.\CoS.\executed = \{\false\}$ and the step returns to $2$ to begin a new mega-cycle after all robots reset their activated flags.
%If this mega-cycle is not finished, this mega-cycle continues and the step moves to the next {\bf Step}~\CCA\;  to keep all activated flags unchanged. 
%the step moves to resetting all lights of $activated$ to $\bot$  (in {\bf Step}~{\textit RA}) and returns to  {\bf Step}~{\textit PS}
%after finishing the reset.
%Also in this step, if all robots have performed actual movement in this mega-cycle, they reset activated flags and a new mega-cycle will begin in the next {\bf Step}~{\textit CCA}.

\end{itemize}

%Figuare~\ref{} shows the transition diagram among phases. 
%When {\bf Step}~$i$ has   finished, some robots $\rho$ set the light $\rho$.\step to $i+1$\footnote{In the case of $i=3$, $\rho$.\step is set to $1$}.
%Then, the  transition between steps can be done from step configuration $\{i\}$ to step configuration $\{i+1\}$ via step configuration $\{i. i+1\}$.

%Transition from one step to another step is done through step transition.
%correctness
%\color{blue}
%\begin{lemma}\label{lem:correct-asignment}
%If robot $r$ located at $x$ is active at time $t$ and all robots located at {\tt suc}$(x)$ and \CoP(x) have been active before $t-1$, then $r.s1.\CoN.checked=D$ at time $t+1$.
%\end{lemma}

%Given a set $X$ of robots and time $t$, let %$\pi_1(X,t)$ denote a predicate
%$\forall$ $x \in X$($x.p1.me.checked =D$ {\bf %and} $x.p1.suc.checked=D$) at the time $t$.

%Given a set $X$ of robots and time $t$, let $\pi(X,t)$ denote a predicate
%each robot checks in {\tt sim-LUMI-by-FCOM(\textit{A})}.
%$\forall$ $x \in X$($x.p1.me.checked =D$ {\bf and} $x.p1.suc.checked=D$) at the time $t$.

%\input{modification-scheme-two-column}
%\color{black}
%modification-sceme
\Newcodeline
%\begin{algorithm}[tbH]
%\begin{algorithm}[tb]
\begin{algorithm}[tp]
\caption{{\tt Scheme-for-modification-flags} for robot $r$ at location $x$.}
\label{algo:mod-scheme}
{\small
\begin{tabbing}
111 \= 11 \= 11 \= 11 \= 11 \= 11 \= 11 \= \kill
{\em Assumptions}\crm
\> Let configuration be $same(\step=\alpha)$ or $except1(\step=\alpha;\gamma)$ 
%$\forall \rho(\rho$.step=$\alfa$, Let $x_0, x_1, \ldots, x_{m-1}$ be the circlar arrangement on the configuration $C$, \crm
%and let define $\CoN(x_i)=\{ x_{i+1 \mod m}, x_{i-1 \mod m}\}$.\crm\crm
\crm\crm
%{\em State Look}\crm
%Observe, in particular,   
%  {\tt pred}$(x).color$, {\tt suc}$(x).color$,    {\tt pred}$(x).step$,   {\tt other}({\tt pred}$(x))$; \\
%  as well as $r.color.here$ (note that, for this, $r$ cannot see its own  color).\crm\crm

{\em State Compute}\crm
%\crm
%\Cl  r.des $\leftarrow$ r.pos  \crm
\Cl  {\bf if} $\forall \rho \neq r \, (\rho.\step = \alpha$) {\bf then} \` // step $\alpha$\crm
\Cl \> {\bf if} {\bf not}  $\forall \rho \neq r \, (\rho.\flag = \true$)   {\bf then}\crm%($\forall \rho (\rho$.CCA.me.checked =D {\bf and} $\rho$.CCA.\CoN.checked=D ))\crm
%\Cl \>\>\>\crm
\Cl \>\> $r.\step \leftarrow \alpha$\crm
\Cl \>\> $r.\flag \leftarrow \true$\crm

\Cl \> {\bf else if} $\forall \rho \neq r \, (\rho.\flag =\true)$  {\bf then}\crm
\Cl \>\> $r.\flag \leftarrow \true$\crm
\Cl \>\> $r.\step \leftarrow \beta$\crm

\Cl  {\bf else if} $\forall \rho \neq r \, (\rho.\step = \beta$) {\bf then} $r.\step \leftarrow \beta$ \` // step $\beta$\crm
\Cl  {\bf else if} $\forall \rho \neq r \, ((\rho.\step=\alpha)$ {\bf or} $(\rho.\step = \beta))$  {\bf then} $r.\step \leftarrow \beta$
%\Cl \> {\bf else if} $\forall \rho \neq r$ (($\rho$.step=$\alfa$) {\bf or} ($\rho$.step=$\beta$))  {\bf then} r.step $\leftarrow$ $\beta$ \crm
%\crm
\end{tabbing}
}
\end{algorithm}

%Note full acivation!
Now we show that this simulation algorithm works correctly for $\FC$ robots in \RSY. 
Note that since \RSY\ may start with all robots activated, it is easily validated that this algorithm works well in \FSY(see Appendix for the details).

Since we consider $\FC$ robots, when a robot $r$ checks a predicate, for example $\forall \rho(\rho.\step =\alpha)$,  $r$ cannot see its own $r.\step$. Therefor, $r$ only checks $\forall \rho\neq r(\rho.\step =\alpha)$ and observes that the predicate may be satisfied although $r.\step$ is not $\alpha$. 
This means that predicates appearing in the $\FC$ algorithm must be of the form 
$\forall$ $\rho\neq r$ ($\ldots$)  and we must consider configurations on which only one robot $r$ observes that some predicate holds but any of the other robots observe that the predicate does not hold.

In order to understand our algorithm, consider a simple example where every robot $\rho$ has two lights $\rho.\step (\in \{\alpha, \beta, \gamma\})$ and $\rho.flag (\in \{\true, \false\})$. Let us consider a configuration where $\rho.\step=\alpha$ for any robot $\rho$ and a configuration where there exists just one robot $r_e$ with  $r_e.\step=\gamma$ while $\rho.\step=\alpha$ for any other robot $\rho$.  The former configuration is denoted by  $same(step=\alpha)$ (because all robots are the same step)   and the latter is denoted by $except1(step=\alpha;\gamma)$ (because all robots are  step $\alpha$ except one  which is step $\gamma$).

The next Lemma shows the following: if a step configuration is $same(step=\alpha)$ or $except1(step=\alpha;\gamma)$ with $\rho.flag=\false$ for any robot $\rho$,
Algorithm~\ref{algo:mod-scheme} transforms the step configuration into one of type $same(step=\beta)$ or $except1(step=\beta;\alpha)$ with
$\rho.flag$ changed to $\true$ for all robots $\rho$.
The idea is that as long as a robot still observes other robots with $flag = \false$, robots just keep on setting their $flag$ to $\true$, while staying with $step = \alpha$.
As soon as a robot notices all other robots having $flag = \true$, the robots need to progress to $step = \beta$.
However, the activated robot may not have been active before and therefore still needs to set $flag$ to $\true$.
Now any robot active that sees another robot with $step = \beta$ knows that all $flags$ are $\true$, including its own.

\begin{lemma}\label{lem:scheme-modification}
Let $C(t_a)$ be a step configuration either of type  $same(step=\alpha)$ or $except1(step=\alpha,\gamma \neq \beta)$ at time $t_a$, and let $\forall \rho (\rho$.flag =$\false$) hold at $t_a$.
If Algorithm~\ref{algo:mod-scheme} is executed on $C(t_a)$,  then there is a time $t_b> t_a$ when $C(t_b)$ satisfies the following conditions:
\begin{enumerate}
\item[(a)] The step configuration $C(t_b)$ is either $same(step=\beta)$ or $except1(step=\beta;\alpha)$.
\item[(b)] $\forall \rho (\rho.flag =\true)$ holds at $t_b$. 
\end{enumerate} 
\end{lemma}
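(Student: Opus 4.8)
The plan is to trace the execution of Algorithm~\ref{algo:mod-scheme} from $t_a$ through three consecutive regimes: a \emph{start-up} step that removes the ``lagging'' robot of an $except1(step=\alpha;\gamma)$ configuration; a \emph{flag-collection} phase in which all robots stay in step $\alpha$ and switch their $\flag$ to $\true$ a group at a time; and a \emph{$\beta$-propagation} phase in which robots move to step $\beta$ a group at a time while every flag stays $\true$. The recurring subtlety is the $\FC$ restriction: a robot reads neither its own $\step$ nor its own $\flag$, so every guard of the algorithm has the shape $\forall\rho\neq r(\cdots)$ and may hold at a robot whose own state differs; the work of the proof is to check that under each such guard an activated robot either makes ``safe'' progress or does nothing harmful, and that fairness of \RSY\ then forces eventual progress.

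For the start-up, suppose $C(t_a)=except1(step=\alpha;\gamma)$ with $\gamma\notin\{\alpha,\beta\}$ and let $r_e$ be the lagging robot. Any other robot, when activated, observes $r_e$ in step $\gamma$, so none of the step guards of Algorithm~\ref{algo:mod-scheme} (which test only $\alpha$ and $\beta$) is satisfied and it does nothing; hence the configuration is frozen until $r_e$ is activated, which occurs by fairness. At that round $r_e$ sees every other robot in step $\alpha$ and, since nothing has changed, with $\flag=\false$, so it executes lines~3--4, setting $r_e.\step\leftarrow\alpha$ and $r_e.\flag\leftarrow\true$; we are left in a $same(step=\alpha)$ configuration in which all flags are $\false$ except possibly $r_e$'s. (The case $C(t_a)=same(step=\alpha)$ with all flags $\false$ is this situation with the reduction vacuous.) Now, while the configuration is $same(step=\alpha)$, an activated robot takes the line-1 branch: if it still sees a robot with $\flag=\false$ it performs lines~3--4 (sets its own $\flag$ to $\true$, keeps $\step=\alpha$), and if it sees all other robots with $\flag=\true$ it performs lines~5--7 (sets its own $\flag$ to $\true$, moves to $\step=\beta$). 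So the set of robots with $\flag=\false$ is non-increasing, step values stay at $\alpha$ until some robot executes lines~5--7, and the first time this happens all flags are already $\true$ (that is the guard of line~5) and that robot also ends with $\flag=\true$. By fairness there is a first time $t_1>t_a$ at which all flags are $\true$; then the configuration is $same(step=\alpha)$ (if the last flags were set by a simultaneous activation of two or more robots) or $except1(step=\alpha;\beta)$ (if the last flag was set by a lone activation, which then also ran lines~5--7), and all flags are $\true$. From $t_1$ on all flags remain $\true$, since the only flag assignment is ``$\leftarrow\true$'' and the assignment $\step\leftarrow\alpha$ of line~3 cannot fire once any robot is in $\beta$ (line~3 lies inside both the line-1 guard and the line-2 guard, and the line-2 guard is false when all flags are $\true$).

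In the $\beta$-propagation phase, write the configuration from $t_1$ on as ``$k$ robots in step $\beta$, $n-k$ in step $\alpha$, all flags $\true$'' with $0\le k\le n-1$. If $k=0$, every activated robot sees all others in $\alpha$ with $\flag=\true$, performs lines~5--7, and moves to $\beta$, so after one round $k\ge1$ (and $k=n$ if that round was a full activation). If $k\ge1$, an activated robot in step $\alpha$ sees only steps in $\{\alpha,\beta\}$ with at least one $\beta$, so line~8 (when $k=n-1$) or line~9 (otherwise) fires and it moves to $\beta$, whereas an activated robot in step $\beta$ merely re-confirms $\step=\beta$; thus $k$ is non-decreasing, and by fairness every robot is eventually activated, so after finitely many rounds $k=n$ and the configuration is $same(step=\beta)$ with all flags $\true$ (and then stays there). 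Taking $t_b$ to be the first round with configuration $same(step=\beta)$ gives condition~(a); condition~(b) holds because all flags are $\true$ from $t_1$ on; and $t_b>t_a$ since some robot is in step $\beta$ at $t_b$ but none is at $t_a$. The boundary situations only shorten the argument: if \RSY\ is still in its full-activation prefix, each round advances all robots together, so flag-collection and $\beta$-propagation each take a single round; and a team of size $n=2$ is immediate since ``all other robots'' then names a single robot.

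I expect the main obstacle to be the exhaustive bookkeeping rather than any single hard step: one must verify, for every admissible $same$/$except1$ step configuration, that the $\FC$-restricted guards never allow an activated robot to (i)~jump to $\beta$ while some flag is still $\false$, (ii)~roll a step value back to $\alpha$ or $\gamma$, or (iii)~idle forever; each of these is resolved by reading off the guards of Algorithm~\ref{algo:mod-scheme}, but the sheer number of cases is what makes the argument lengthy.
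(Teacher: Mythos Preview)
Your proposal is correct and follows essentially the same approach as the paper's proof: both reduce the $except1(\step=\alpha;\gamma)$ case to $same(\step=\alpha)$ by waiting for the lagging robot (your ``start-up''), then argue that $\#(\flag=\false)$ drains to $\le 1$ while the step configuration stays $same(\step=\alpha)$ (your ``flag-collection''), and finally that $\#(\step=\alpha)$ drains once all flags are $\true$ (your ``$\beta$-propagation''). The only cosmetic differences are that you push the propagation all the way to $same(\step=\beta)$ whereas the paper stops as soon as $\#(\step=\alpha)\le 1$, and that your description of the moment $t_1$ (``lone activation'' versus ``two or more'') would be cleaner phrased as ``exactly one robot with $\flag=\false$ remained'' versus ``at least two remained''---but your actual reasoning is sound.
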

%\begin{proof} See Appendix. \end{proof}
\begin{proof}
Consider the two cases:
(1) $same(step=\alpha)$ and 
(2) $except1(step=\alpha,\gamma \neq \beta)$ at $t_a$.
%Let $t'$ be $t' \geq t$. 
 
%\noindent
(1) $same$(step=$\alpha$) at $t_a$: 
Since line~1 holds and line~2 holds until $\forall \rho (\rho.flag =\true)$ holds, the number of robots $\rho$ with $\rho.flag=\false$ (denoted as $\#(flag=\false))$ decreases. 
As long as $\#(flag=\false)\geq2$,  since line~1 and line~2 hold and the schedule is fair,  $\#(flag=\false)$ decreases. 
Let $t$ be the first time when $\#(flag=\false)\leq 1$. 

\begin{itemize}
\item 
If $\#(flag=\false)=0$, since $\forall \rho (\rho.flag =\true)$ holds,
the  robots activated at $t$ execute lines~6 and~7 and the step configuration becomes $\{\alpha, \beta\}$ at $t+1$. 
If $\#(step=\alpha)\geq 2$, the
 robots activated after $t+1$ execute line~9 and  $\#(step=\alpha)$ decreases.
As long as $\#(step=\alpha)\geq 2$, when robots $r$ with $r.\step=\alpha$ are activated, they observe $\rho.\step=\alpha$ and $\rho'.\step=\beta$.
Then, since there is a time when $\#(step=\alpha)\leq 1$, letting $t_b$ be such time,  the lemma holds.

\item
Otherwise ($\#(flag=\false)=1$),  let $r_e$ be the robot with $r_e.flag=\false$.
In this case, since $r_e$ observes $\forall \rho\neq r_e(\rho.flag = \true)$, it will execute lines~5-7, changing $r_e.\step$ to $\beta$, but also setting $r_e.flag$ to $\true$.
The next time, $\forall \rho (\rho.flag =\true)$ holds. Therefore, the same actions as in the previous situation of $\#(flag=\false)=0$ are taken, and the lemma holds.
\end{itemize}

 (2) $except1(step=\alpha,\gamma \neq \beta)$ at $t_a$:
 Let $r_e$ be the robot with $r_e.\step=\gamma$ and let $t$ be the first time $r_e$ is activated after $t_a$.
Any activated robot except r$_e$ between $t_a$ and $t$ does not change the color of its own lights, because it observes $r_e.\step=\gamma$.
Since only $r_e$ observes $\forall \rho \neq r_e(\rho.\step =\alpha$) and {\bf not}  $\forall \rho \neq r(\rho.flag =\true)$,
$r_e$ changes $r_e.\step=\alpha$ and the step configuration becomes $same(step=\alpha)$  at $t+1$.
Then, this case can be reduced to case (1) and the lemma holds.%\qed
\end{proof}

This scheme of transitions between step configurations is used in the algorithm.%~\ref{algo:simAmain}. 
In particular, transitions from Step~$1$  to Step~ $2$, from Step~$2$ to Step~$m$, from Step~$m$ to Step~$2$, and from Step~$3$ to Step~$1$ (Lemmas~\ref{lem:step-1-2}, \ref{lem:step-2-4}, \ref{lem:step-4-2} and~\ref{lem:step-3-1}).  Note that the condition of \RSY\ is not used in the proof of Lemma~\ref{lem:scheme-modification}, that is, Algorithm~\ref{algo:mod-scheme} works also in  \SSY\  correctly. The prohibited pattern in \RSY\ is necessary only for the actual simulation (Step~2).

Let $\pi_1(X,t)$ denote the predicate {\em checked-all-neighbors-and-me}: \\
$\forall$ $\rho \in X$($\rho.me.checked = \true$ {\bf and} $\rho.\CoS.checked=\true$), at time $t$.

\begin{lemma}\label{lem:step-1-2}
Assume that step configuration is $same(step=1)$ or $except1(step=1;3)$ and it holds that $\forall \rho \in R(\rho.checked =\false$ and $\rho.\CoS.checked=\false)$ at time $t_1$.
Then there exists a time $t_2>t_1$ such that the following (1) and (2) hold.
\begin{enumerate}
\item[(1)] The step configuration at $t_2$ is $same(step=2)$ or $except1(step=2;1)$.
\item[(2)] $\pi_1(R,t_2)$ holds and  for every robot $r \in R$ at any location $x$,
 $r.\CoS.\light = \{ \rho.\lcolor \mid \rho \in \CoS(x)\}$ and
 $r.\CoS.\executed =\{ \rho.\executed \mid \rho \in \CoS(x)\}$ at $t_2$.
\end{enumerate}
\end{lemma}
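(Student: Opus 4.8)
\emph{Overall approach.} I would run the flag-propagation scheme of Lemma~\ref{lem:scheme-modification} in two successive phases — first to bring every robot's $\checked$ flag up, then to bring every robot's $\CoS.\checked$ flag up — relying throughout on the fact that the Step~$1$ body never writes a robot's own $\lcolor$ or $\executed$, so whatever \emph{Copy-Colors-of-Neighbors} and \emph{Copy-Execution-of-Neighbors} record stays correct for the rest of the argument. First, though, I would reduce a configuration of type $except1(step=1;3)$ to one of type $same(step=1)$: while the unique robot $r_e$ in Step~$3$ is not activated, every other activated robot fails its Step~$1$ guard $\forall\rho\neq r(\rho.\step=1)$, sees only a ``$\step\in\{1,3\}$'' configuration, and through the matching transition line of Algorithm~\ref{algo:simAmain} merely resets its own $\step$ to $1$ without touching a single flag; by fairness $r_e$ is eventually activated, sees all others in Step~$1$, runs the Step~$1$ body, finds \emph{checked-all-neighbors-and-me} false (the other robots still have $\checked=\false$), and returns to $\step=1$. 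This leaves a $same(step=1)$ configuration in which every $\checked/\CoS.\checked$ flag is $\false$ except possibly $r_e$'s (which only helps), so from now on I assume $same(step=1)$ at $t_1$.

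\emph{The two phases.} In the first phase, fairness activates every robot while the configuration is $same(step=1)$, each activated robot copying its successors' $\lcolor$ and $\executed$ and setting $\checked=\true$. No robot can leave Step~$1$ in this phase: its exit guard demands $\CoS.\checked=\true$ for \emph{all} $\rho\neq r$, and — because $\CoS$ is a cyclic permutation of the occupied locations, and any robot located at the successor of $r$'s location is itself one of the $\rho\neq r$ — ``$\CoS.\checked=\true$ for all $\rho\neq r$'' already forces $\checked=\true$ for every robot, which is not yet true. Hence the first phase ends with every robot holding $\checked=\true$ together with a correct copy of its successors' lights and flags. In the second phase every activated robot, now seeing all robots at its successor location with $\checked=\true$, sets $\CoS.\checked=\true$; by fairness some robot $r$ is eventually activated in Step~$1$ with \emph{checked-all-neighbors-and-me} true and advances to $\step=2$. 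The crucial point is that at that round \emph{all} robots have $\checked=\true$ and $\CoS.\checked=\true$: the guard gives this for every $\rho\neq r$, while $r$ has just set $r.\checked=\true$ and — since every robot at the successor of $r$'s location is some $\rho\neq r$ and hence has $\checked=\true$ — has just set $r.\CoS.\checked=\true$ as well. So $\pi_1(R,\cdot)$ already holds, although the step configuration has only become $except1(step=1;2)$ (or carries several robots in Step~$2$, if several advance together).

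\emph{Draining the mixed configuration.} From ``some robots in Step~$2$, the rest in Step~$1$'' I would finish by noting that no robot executes $A$ while the mix lasts: a Step~$1$ robot fails $\forall\rho\neq r(\rho.\step=1)$ and is pushed to $\step=2$ by the ``$\step\in\{1,2\}\Rightarrow2$'' line, touching no flag; a Step~$2$ robot fails $\forall\rho\neq r(\rho.\step=2)$ and similarly only updates $\step$ (a Step~$2$ robot seeing all others in Step~$1$ re-runs the Step~$1$ body, but with all flags already set this is harmless). Thus every copy and every $\checked/\CoS.\checked$ flag stays frozen while the number of Step~$1$ robots strictly decreases, so at the first time $t_2$ at which the step configuration is $same(step=2)$ or $except1(step=2;1)$ we still have $\pi_1(R,t_2)$ and, by the stability of $\lcolor$ and $\executed$, $r.\CoS.\lcolor=\{\rho.\lcolor\mid\rho\in\CoS(x)\}$ and $r.\CoS.\executed=\{\rho.\executed\mid\rho\in\CoS(x)\}$ for every robot $r$ at every location $x$, which are (1) and (2).

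\emph{Main obstacle.} The delicate part is the \FC\ asymmetry: since a robot never sees its own $\checked$ or $\CoS.\checked$, the exit guard is a statement about the \emph{other} robots only, so the whole argument rests on the two ``self-completion'' facts — that the robot which first advances to Step~$2$ has, in the very same activation, raised both of its own flags to $\true$ — together with the counting argument over the cyclic successor permutation that upgrades ``all other robots' $\CoS.\checked$'' to ``all robots' $\checked$''. Pinning down these two facts and the ``no $A$-execution while the step configuration is mixed'' claim is the real content; everything else is bookkeeping in the style of Lemma~\ref{lem:scheme-modification}.
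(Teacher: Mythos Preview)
Your argument is correct and in line with the paper, which does not spell out a proof but simply points to the scheme of Lemma~\ref{lem:scheme-modification} as handling the Step~$1$-to-Step~$2$ transition. Your two-phase elaboration (first raise every $\checked$, then every $\CoS.\checked$), the reduction from $except1(step=1;3)$ to $same(step=1)$, and the observation that $\lcolor$ and $\executed$ are never written in Step~$1$ supply exactly the details the paper leaves implicit.
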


Lemma~\ref{lem:step-1-2} implies that each robot $r$ can calculate its own color ($r.\light$) and whether $r$ has executed algorithm~$\textit{A}$ in the current mega-cycle or not ($r.\executed$) by looking at the neighbor's corresponding lights.

%color determination
% \begin{figure}[tbh]
%\centering\includegraphics[width=8.5cm]{MyColors}
%    \caption { $\{{\tt pred}(x).{\tt neigh}.\lcolor - r'.\lcolor\} - r.\lcolor.\here$ }
%    \label{Fig:MyColor}
%\end{figure}

The color of robot $r$ ($r.\lcolor$) used in the execution of algorithm~{\textit A} is determined as follows:
Assuming all robots have correctly set $r.\CoS.\light$, $\rho.\CoS.\light$ for any $\rho\in \CoP(x)$ contains all colors at location $x$. Let $r.\light.\here$ be the set of colors seen at $x$. Note that this by definition does not include the color of $r.\light$, since $r$ is on location $x$ and it cannot see its own light. Now $r$ can calculate the color of $r.\light$ in the following way:
$$ r.\light = \CoP(x).\CoS.\light  - x.\light.\here.$$

Since the executed flag of robot $r$ contains 2 colors, $r.\executed$ can be determined similarly to the case of determining $r.\light$ by using $\CoP(x).\CoS.\executed$ instead of $\CoP(x).\CoS.\light$.
We can remove the assumption of chirality by using the method of \cite{FSW19}.
Although without chirality a unique circular ordering of the locations $X(t)$ occupied by the robots cannot be obtained, we can still define a set of neighboring locations.
Instead of having a single flag $r.\CoS.\light$ indicating a single set of colors seen at $\CoS(x)$, every robot has a flag $r.neigh.\light$, which is a pair of sets. One of those sets contains the colors at $\CoS(x)$, and the other one the colors at $\CoP(x)$.
In this way, every robot can still determine its own color.

%\color{black}

The following Lemmas~\ref{lem:step-2-4}, \ref{lem:step-4-2} and~\ref{lem:step-3-1} show that transitions from Step~$2$ to Step~$m$, from Step~$\RA$ to Step~$2$, and from Step~$3$ to Step~$1$ work correctly, respectively due to Lemma~\ref{lem:scheme-modification}.

\begin{lemma}\label{lem:step-2-4}
Assume that step configuration is $same(step=2)$ or $except1(step=2;1)$ and it holds that $\forall \rho \in R(\rho.\executed=\true)$ at time $t_2$.
Then there exists a time $t_\RA>t_2$ such that the following (1) and (2) hold.
\begin{enumerate}
\item[(1)] The step configuration at $t_m$ is $same(step=\RA)$ or $except1(step=\RA;2)$.
\item[(2)] $\forall$ $\rho \in R$($\rho.\executed =\true$) at time $t_\RA$
\end{enumerate}
\end{lemma}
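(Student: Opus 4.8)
The plan is to obtain Lemma~\ref{lem:step-2-4} as an instance of the generic step-transition scheme of Lemma~\ref{lem:scheme-modification}, taken with $\alpha=2$, $\beta=m$, $\gamma=1$ (the transition ``step~$2$''\,$\to$\,``step~$m$'', the stray robot of the $except1$ case sitting in step~$1$, which is indeed $\neq m$), reinforced by one bookkeeping invariant about the $\executed$ lights. The first thing I would establish is that invariant: as long as the step configuration is $same(step=2)$, $except1(step=2;1)$, or one of the intermediate mixed configurations $\{1,2\}$ or $\{2,m\}$, every activated robot executes only lines of Algorithm~\ref{algo:simAmain} that write $r.step$ — it either sets $r.step\leftarrow m$, or harmlessly re-sets $r.step\leftarrow 2$ — and in particular no $\executed$ or $r.\CoS.\executed$ light is modified. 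Indeed, under the hypothesis $\forall\rho\,(\rho.\executed=\true)$ the predicate \emph{all-robots-executed} is true at $t_2$, so the ``execute $A$'' branch and the \emph{own-executed} branch of step~$2$ are unreachable, and the step-$m$ branch (the only place $\executed$ is reset) requires $\forall\rho\neq r\,(\rho.step=m)$, which can hold only once the target configuration has already been reached. Hence \emph{all-robots-executed} remains true throughout the transition; this is exactly the role that the ``flag $=\true$ everywhere'' hypothesis plays inside Lemma~\ref{lem:scheme-modification}, so that scheme applies with this globally observable predicate in place of a per-robot handshake flag.

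Next I would replay the two cases of the scheme. If $C(t_2)=same(step=2)$, each activated robot observes $\forall\rho\neq r\,(\rho.step=2)$ together with \emph{all-robots-executed}, hence sets $r.step\leftarrow m$; by fairness every robot still in step~$2$ is eventually activated and moves to step~$m$, and whenever a mixed configuration $\{2,m\}$ appears the robots still in step~$2$ observe $\forall\rho\neq r\,(\rho.step\in\{2,m\})$ together with \emph{all-robots-executed} and again move to step~$m$ via the dedicated $\{2,m\}$ line, so after finitely many rounds the step configuration is $same(step=m)$ or $except1(step=m;2)$. If instead $C(t_2)=except1(step=2;1)$, let $r_e$ be the robot in step~$1$: while $r_e$ stays there, every other activated robot sees the configuration $\{1,2\}$ and, by the $\{1,2\}$ line, merely re-sets $r.step\leftarrow 2$, so the configuration is frozen until $r_e$ is activated (which happens by fairness); at that round $r_e$ sees $\forall\rho\neq r_e\,(\rho.step=2)$ and \emph{all-robots-executed}, so it sets $r_e.step\leftarrow m$, yielding $except1(step=2;m)$, from which the argument of the first case finishes the job. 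In all cases I take $t_\RA$ to be the \emph{first} round whose step configuration is $same(step=m)$ or $except1(step=m;2)$: this gives~(1), and since by the invariant every light modification strictly before $t_\RA$ was a $step$-update, all $\executed$ lights are still $\true$ at $t_\RA$, giving~(2); choosing $t_\RA$ minimal is precisely what prevents the $\executed$-reset performed inside step~$m$ from spoiling~(2).

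The delicate point — and the only place the \FC\ self-blindness matters — is the $except1(step=2;1)$ case: since a robot cannot read its own $step$, $r_e$ behaves as though it were already in step~$2$, and a carelessly written algorithm would let the other robots race ahead to step~$m$ and deadlock in the unhandled three-step configuration $\{1,2,m\}$. I expect the bulk of the write-up to be the routine but careful verification — along the lines of the proof of Lemma~\ref{lem:scheme-modification} — that the $\{1,2\}$ line forces the other robots to idle until $r_e$ catches up, together with a line-by-line check that in each of the configurations $\{2\}$, $\{1,2\}$, $\{2,m\}$, and the one reached just after $r_e$ moves, no robot touches an $\executed$ light and the step count makes progress. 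No property of \RSY\ beyond fairness is used here; as the surrounding text already observes, the prohibited \RSY\ pattern is needed only for the actual simulation carried out in step~$2$, not for this bookkeeping transition.
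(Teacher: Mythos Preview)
Your approach is correct and matches the paper's: the paper does not give a standalone proof of Lemma~\ref{lem:step-2-4} but simply states that it (together with Lemmas~\ref{lem:step-4-2} and~\ref{lem:step-3-1}) ``work[s] correctly \ldots\ due to Lemma~\ref{lem:scheme-modification}.'' Your write-up is considerably more detailed than what the paper supplies, correctly identifying that the hypothesis $\forall\rho\,(\rho.\executed=\true)$ plays the role of an already-completed flag handshake in the scheme of Lemma~\ref{lem:scheme-modification}, and correctly noting that only fairness (not the \RSY\ restriction) is needed here.
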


\begin{lemma}\label{lem:step-4-2}
Assume that step configuration is $same(step=\RA)$ or $except1(step=\RA;2)$ and it holds that $\forall \rho \in R(\rho.\executed =\true)$ at time $t_\RA$
Then there exists a time $t_2>t_\RA$ such that the following (1) and (2) hold.
\begin{enumerate}
\item[(1)] The step configuration at $t_2$ is $same(step=2)$ or $except1(step=2;\RA)$.
\item[(2)] $\forall$ $\rho \in R$($\rho.\executed =\false$) and $(\rho.\CoS.\executed =\{\false\})$ at time $t_2$

\end{enumerate}
\end{lemma}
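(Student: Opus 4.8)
The plan is to follow the template already used for Lemma~\ref{lem:step-2-4} and reduce the claim to the generic flag‑transition Lemma~\ref{lem:scheme-modification}; as in that lemma the restricted‑repetition condition of \RSY\ plays no role here, since Step~$\RA$ never executes $A$. The first step is to recognize that the Step~$\RA$ fragment of Algorithm~\ref{algo:simAmain} --- the branch guarded by $\forall\rho\neq r\,(\rho.\step=\RA)$ together with the two‑step rule $\forall\rho\neq r\,((\rho.\step=\RA)\text{ or }(\rho.\step=\PS))$ --- can be read as an instance of Algorithm~\ref{algo:mod-scheme} under $\alpha\mapsto\RA$, $\beta\mapsto\PS$, where the single boolean $\rho.\flag$ of the scheme is replaced by the derived two‑component predicate ``$\rho.\executed=\false$ and $\rho.\CoS.\executed=\{\false\}$''. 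I would then verify the two hypotheses needed to invoke Lemma~\ref{lem:scheme-modification}: (i) by the assumption $\forall\rho\,(\rho.\executed=\true)$ this predicate is false for every robot at $t_\RA$; and (ii) it is monotone along the execution, because a robot activated in Step~$\RA$ always writes $\executed\leftarrow\false$ and $\CoS.\executed\leftarrow\{\false\}$ jointly, and the only instruction that ever sets $\executed$ back to $\true$ sits inside the execution of a round of $A$ in Step~$\PS$, which the argument will show is not reached before $t_2$.

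One point is not covered literally by Lemma~\ref{lem:scheme-modification}: there the ``exceptional'' step $\gamma$ must be distinct from the target step $\beta$, whereas the hypothesis here permits $\mathit{except1}(\step=\RA;\PS)$, i.e.\ $\gamma=\PS=\beta$. I would dispose of this case separately. Let $r_e$ be the unique robot with $r_e.\step=\PS$. While $r_e.\step=\PS$, no robot other than $r_e$ can alter its flag, since flags are rewritten only in the Step~$\RA$ branch, whose guard $\forall\rho\neq r\,(\rho.\step=\RA)$ fails for every such robot; hence the configuration is effectively frozen (activated robots perform only vacuous updates) until $r_e$ is activated. By fairness that happens; $r_e$ then observes all other robots in Step~$\RA$, and since all of them still have $\executed=\true$ the ``not all reset'' guard holds, so $r_e$ returns to $\step=\RA$ while resetting its own flag. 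We are now in $\mathit{same}(\step=\RA)$ with at most one robot's flag already set --- a situation still handled by Lemma~\ref{lem:scheme-modification}, whose proof uses only monotonicity of the flag and fairness, so one flag being set at the outset is harmless (it merely shortens the process).

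Finally, in the $\mathit{same}(\step=\RA)$ case Lemma~\ref{lem:scheme-modification} with $\alpha=\RA,\ \beta=\PS$ produces a time $t_2>t_\RA$ at which the step configuration is $\mathit{same}(\step=\PS)$ or $\mathit{except1}(\step=\PS;\RA)$ and every robot has its flag set, i.e.\ $\rho.\executed=\false$ and $\rho.\CoS.\executed=\{\false\}$ for all $\rho$; these are exactly conclusions (1) and (2). The step I expect to be the main obstacle is confirming that the two coupled lights $\executed$ and $\CoS.\executed$ genuinely behave as a single boolean --- they are always assigned together in the Step~$\RA$ code, so no reachable intermediate state has only one of them reset --- together with reconciling the $\mathit{except1}(\step=\RA;\PS)$ starting configuration with the precondition of Lemma~\ref{lem:scheme-modification}; by contrast the $\FC$ restriction causes no trouble, since every guard in the Step~$\RA$ code already has the $\forall\rho\neq r\,(\cdots)$ shape for which Lemma~\ref{lem:scheme-modification} was designed.
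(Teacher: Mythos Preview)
Your approach is essentially the paper's: in the text the authors simply state that Lemmas~\ref{lem:step-2-4}, \ref{lem:step-4-2} and~\ref{lem:step-3-1} ``work correctly, respectively due to Lemma~\ref{lem:scheme-modification}'', without giving a separate argument; you carry out exactly this reduction with $\alpha=\RA$, $\beta=\PS$ and the compound flag $(\rho.\executed=\false)\wedge(\rho.\CoS.\executed=\{\false\})$.

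If anything, your treatment is more careful than the paper's on one point. You correctly observe that the hypothesis of Lemma~\ref{lem:scheme-modification} requires $\gamma\neq\beta$, whereas here the starting configuration $\mathit{except1}(\step=\RA;\PS)$ has $\gamma=\PS=\beta$; the paper does not comment on this. Your fix --- freezing the configuration until $r_e$ is activated (other robots fall through to the $\{\PS,\RA\}$ two-value branch with \emph{all-robots-executed} true, so they merely rewrite $\step\leftarrow\RA$), after which $r_e$ enters the Step~$\RA$ branch, resets its own two lights, and brings the system to $\mathit{same}(\step=\RA)$ with at most one flag already set --- is correct and is the natural patch. Your remark that the two lights $\executed$ and $\CoS.\executed$ are always assigned together in the Step~$\RA$ code, so the compound predicate behaves as a single monotone boolean for the purposes of Lemma~\ref{lem:scheme-modification}, is also sound and is the only model-specific observation needed.
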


%own executed can be observed!!

\begin{lemma}\label{lem:step-3-1}
Assume that step configuration is $same(step=3)$ or $except1(step=3;2)$
%Let $t'_2$ be the time defined in Lemma~\ref{lem:step-2-23}.
%Assume that step configuration is $\{2,3\}$ at $all(step=4)$ or $except1(step=4;2)$ and it holds that $\forall$ $\rho \in R$($\rho.activated =$D) at time $t_4$
Then there exists a time $t_1>t_3$ such that the following (1)-(3) hold.
\begin{enumerate}
\item[(1)] The step configuration at $t_1$ is $same(step=1)$ or $except1(step=1;3)$.
\item[(2)] $\forall \rho \in R (\rho.\lcolor$ and $\rho.activated$ at $t_1$ are the same as these at $t_3$).
\item[(3)] $\forall$ $\rho \in R$( the lights except $\rho.$light and $\rho.$executed are reset to $\false$ at $t_1$).
\end{enumerate}
\end{lemma}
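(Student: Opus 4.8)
The plan is to recognize the Step~$3$ portion of {\tt sim-LUMI-by-FCOM} as an instance of the abstract transition scheme of Algorithm~\ref{algo:mod-scheme} and then quote Lemma~\ref{lem:scheme-modification}. I would instantiate that scheme with $\alpha=\RC$ (step~$3$), $\beta=\CCA$ (step~$1$) and $\gamma=\PS$ (step~$2$), reading the Boolean $\rho.\flag$ of Algorithm~\ref{algo:mod-scheme} as the predicate \emph{checked-flags-reset}$(\rho)$ of Algorithm~\ref{algo:simA}, and reading the assignment ``$r.\flag\leftarrow\true$'' as the call to \emph{Reset-Checking-Flags}. That call is precisely the operation that makes \emph{checked-flags-reset}$(r)$ true, since it sets $r.\CoS.\light\leftarrow\emptyset$, $r.\CoS.\executed\leftarrow\{\false\}$, $r.\CoS.\checked\leftarrow\false$, $r.\checked\leftarrow\false$; and these four lights are written \emph{only} by \emph{Reset-Checking-Flags} (to those reset values) and by the Step~$1$ body (to non-reset values), never by Step~$2$. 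Under this dictionary, the branch of Algorithm~\ref{algo:simAmain} guarded by $\forall\rho\neq r\,(\rho.\step=\RC)$ together with the step-transition line for the mixed configuration $\{\RC,\CCA\}$ is, line for line, the scheme of Algorithm~\ref{algo:mod-scheme}; the only extra effect is that the ``flag'' assignment simultaneously overwrites those four checking lights, which alters no branch condition.

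Next I would check the two hypotheses of Lemma~\ref{lem:scheme-modification} at $t_3$. The step configuration is $same(step=3)$ or $except1(step=3;2)$ by assumption, and $2=\gamma\neq\beta=1$, so it has the required form. For the flag hypothesis, observe that Step~$3$ is entered from Step~$2$ (through the $\{\PS,\RC\}$ transition), and the four checking lights were last written inside the preceding Step~$1$, to their ``set'' values, while Step~$2$ leaves them untouched; hence at $t_3$ no robot has \emph{checked-flags-reset} true, i.e.\ all ``flags'' are $\false$ --- this is exactly the hypothesis that, as in the companion Lemma~\ref{lem:step-1-2}, is part of the invariant maintained along the execution. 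Lemma~\ref{lem:scheme-modification} then yields a time $t_1>t_3$ at which (a)~the step configuration is $same(step=1)$ or $except1(step=1;3)$, giving statement~(1), and (b)~every robot has its ``flag'' true, i.e.\ every $\rho$ has $\rho.\CoS.\light=\emptyset$, $\rho.\CoS.\executed=\{\false\}$, $\rho.\CoS.\checked=\false$, $\rho.\checked=\false$, giving statement~(3).

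For statement~(2) I would fix $t_1$ to be the \emph{first} time after $t_3$ with the configuration $same(step=1)$ or $except1(step=1;3)$. By this minimality the configuration is never $same(step=1)$ on $[t_3,t_1)$, so no activated robot enters the Step~$1$ body on that interval; thus on $[t_3,t_1]$ the robots execute only the step-$3$ branch (for configurations $same(step=3)$, $except1(step=3;2)$) and the $\{\RC,\CCA\}$-transition line, and none of these lines moves a robot or touches $r.\lcolor$ or $r.\executed$. Hence every robot's own light $\lcolor$ and its $\executed$ flag agree at $t_1$ with their values at $t_3$, which is statement~(2). No disjointness condition of \RSY\ is used anywhere: fairness suffices, exactly as in Lemma~\ref{lem:scheme-modification}.

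The step I expect to demand the most care is the faithful identification of the Step~$3$ code with the abstract scheme on the window $[t_3,t_1)$: in {\tt sim-LUMI-by-FCOM} the ``flag'' is a compound condition on four lights, not a single bit, and $same(step=1)$ triggers the whole Step~$1$ body rather than the inert ``$r.\step\leftarrow\beta$'' of Algorithm~\ref{algo:mod-scheme}. The first point is settled by noting that once \emph{Reset-Checking-Flags} runs, the four lights stay in their reset state until the robot next executes Step~$1$ (which is impossible before $t_1$), so on $[t_3,t_1)$ the predicate \emph{checked-flags-reset} is monotone and one-shot, behaving exactly like the scheme's $\flag$; the second is settled by the minimal choice of $t_1$, which confines the execution to the scheme's branches on $[t_3,t_1)$ and makes Lemma~\ref{lem:scheme-modification} apply verbatim.
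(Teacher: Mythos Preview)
Your proposal is correct and follows the same approach the paper uses: the paper states explicitly that Lemma~\ref{lem:step-3-1} (together with Lemmas~\ref{lem:step-2-4} and~\ref{lem:step-4-2}) holds ``due to Lemma~\ref{lem:scheme-modification}'', i.e., by instantiating the abstract scheme of Algorithm~\ref{algo:mod-scheme}, which is exactly the reduction you carry out. Your write-up is considerably more detailed than the paper's one-line justification---in particular your explicit dictionary $\alpha=3$, $\beta=1$, $\gamma=2$, $\flag\leftrightarrow\text{\emph{checked-flags-reset}}$, your verification of the flag hypothesis at $t_3$, and your separate argument for item~(2) via a minimal choice of $t_1$---but the underlying idea is identical.
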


%To Lemma {1-2}

Therefore, remaining is the correctness of the transition from Step~$2$ to Step~$3$, including the execution of algorithm \textit{A}.
In Step~$2$, if this mega-cycle is not finished, the step configuration is either ($same(step=2)$ or  $except1(step=2;1)$),  or ($same(step=2)$ or $except1(step=2;\RA)$). We show that the actual execution of the simulated algorithm~${\textit A}$ is performed correctly in Step~$2$ if  \RSY\ is assumed in the following lemma.

\begin{lemma}\label{lem:step-2}%actual move lemma
Assume that step configuration is $same(step=2)$ or $except1(step=2;1))$ or ($same(step=2)$ or $except1(step=2;\RA))$ at time $t_2$.
If it  does not hold that $\forall \rho \in R(\rho.\executed =\true)$, 
then there exists a time $t_3>t_2$ such that the following (1)-(3) hold. 
%Let $S=S(t'_0) \cap ({\mathcal R}-R_{D}(t'_2)$), where $S(t)$ be a set of activated robots at time $t$ and 
Let $R_{\true}(t)$ be $\{r | r.\executed = \true \text{ at time } t\}$.
\begin{enumerate}
\item[(1)] The step configuration at $t_3$ is $same(step=3)$ or $except1(step=3;2)$.
\item[(2)] There exists just one time $t_a \in [t_2,t_3)$ at which all robots $r$ that are element of the non-empty set $S \subseteq R\setminus R_{\true}(t_2)$ execute algorithm~{\textit A}.
\item[(3)] $R_{\true}(t_3) = R_{\true}(t_2)\cup S$.  
\end{enumerate}
\end{lemma}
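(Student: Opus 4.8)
The plan is to trace the step configuration together with the set $R_{\true}(t)=\{r:r.\executed=\true \text{ at }t\}$ across the whole Step~$2$ episode, exploiting the structural fact (needed already for the other step transitions) that the step configuration never leaves $\{\,\{2\},\{1,2\},\{2,\RA\},\{2,3\},\{3\}\,\}$. I would first reduce the two $except1$ cases to the $same(step=2)$ case. Suppose the configuration at $t_2$ is $except1(step=2;1)$ or $except1(step=2;\RA)$ and let $r_e$ be the unique desynchronized robot (we are then necessarily in the restricted phase of \RSY, since full activation keeps all robots at a common step). While $r_e$ is not activated, any other activated robot sees $r_e$ at a step $\neq 2$, so it matches one of the transition clauses that send it to $\step=2$ --- or, in the sub-case of $except1(step=2;\RA)$ where not all $\executed$ flags are $\false$, no clause at all --- and in either case it neither runs the body of Step~$2$ nor alters its $\executed$ flag. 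By fairness $r_e$ is eventually activated; since from its (blind) view every other robot is at $\step=2$, it runs the Step~$2$ body. If $r_e.\executed=\true$ it only resets $r_e.\step\leftarrow 2$, so the configuration becomes $same(step=2)$ with $R_{\true}$ unchanged; if $r_e.\executed=\false$ it computes its own colour by {\em Determine-own-color} (correct by the invariant of Lemma~\ref{lem:step-1-2}, which holds for $r_e$ as well and is not disturbed meanwhile), executes the {\tt Compute} of $A$, and moves to $\step=3$ --- and I take this round as the $t_a$ of the statement with $S=\{r_e\}$. Hence after finitely many rounds we are either in $same(step=2)$ with $R_{\true}=R_{\true}(t_2)$ intact, or $t_a$ has already been identified and the wrap-up below applies.

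In the $same(step=2)$ case let $t_a\geq t_2$ be the first round at which an activated robot has $\executed=\false$. Since the mega-cycle is not finished, $R\setminus R_{\true}(t_2)\neq\emptyset$, so $t_a$ exists by fairness. In every round of $[t_2,t_a)$ each activated robot $r$ has $\executed=\true$; reading its own flag through $\CoP(x).\CoS.\executed$ (valid by Lemma~\ref{lem:step-1-2}) it satisfies {\em own-executed} and merely resets $r.\step\leftarrow 2$, so the configuration stays $same(step=2)$ and no $\executed$ flag moves. At $t_a$ let $S$ be the set of activated robots with $\executed=\false$; it is non-empty by the choice of $t_a$, and $S\subseteq R\setminus R_{\true}(t_2)$ because no $\executed$ flag changed on $[t_2,t_a)$. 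Each $r\in S$ sees all other robots at $\step=2$, fails {\em all-robots-executed} (it has $\executed=\false$ itself) and fails {\em own-executed}, hence calls {\em Determine-own-color} --- returning the correct colour of $r.\light$ by the Lemma~\ref{lem:step-1-2} invariant --- then executes the {\tt Compute} of $A$ with that colour, sets $r.\executed\leftarrow\true$, and moves to $\step=3$; the robots of $R_{\true}(t_2)$ co-activated at $t_a$ fall instead into the {\em own-executed} branch and stay at $\step=2$.

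It remains to reach Step~$3$ with no further execution of $A$. Right after $t_a$ the robots of $S$ are at $\step=3$ and all others at $\step=2$, so the step configuration is $\{2,3\}$, or $\{3\}$ if $S=R$; in the latter case the conclusion already holds at $t_a+1$. Otherwise, any later activated robot at $\step=2$ sees the remaining robots at steps in $\{2,3\}$, matches the $\{2,3\}$-transition clause and moves to $\step=3$ \emph{without} running the body of Step~$2$, while robots already at $\step=3$ only reconfirm $\step=3$; so no $\executed$ flag changes after $t_a$. By fairness there is a first round $t_3$ at which the step configuration is $same(step=3)$ or $except1(step=3;2)$. Then $t_3>t_2$, $t_a$ is the unique round of $[t_2,t_3)$ at which $A$ is run, the executing set is exactly the non-empty $S\subseteq R\setminus R_{\true}(t_2)$, and $R_{\true}(t_3)=R_{\true}(t_2)\cup S$, establishing (1)--(3). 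The \RSY restriction enters precisely here: disjointness of consecutive activation sets guarantees that the robots of $S$ active at $t_a$ are not active again at $t_a+1$, so one pass through Step~$2$ faithfully realises a single activation set of the simulated \SSY schedule of $A$; this is the prohibited pattern the simulation relies on and which is unavailable under the weaker \SSY.

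The main obstacle is the obliviousness of $\FC$ robots to their own lights: every guard must be phrased as $\forall\rho\neq r\,(\cdots)$, which drags all the $except1$ configurations into the case analysis, and for each of them one must verify that the single out-of-sync robot cannot trigger a spurious round of $A$ at the wrong moment --- or, when it does trigger a legitimate one, that it coincides exactly with the $t_a$/$S$ of the statement. Equally delicate is that {\em Determine-own-color} and the reading of the $\executed$ flag are correct only under the invariants imported from Lemma~\ref{lem:step-1-2}, so the accounting of which robots have already executed in the current mega-cycle has to be carried carefully through every sub-case.
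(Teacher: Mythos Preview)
Your proof is correct and follows essentially the same route as the paper's: a case split on whether the step configuration is $same(\step=2)$ or one of the $except1$ variants, identification of the unique round $t_a$ at which the non-empty set $S$ of not-yet-executed robots runs $A$, and then a wrap-up driving the remaining robots from $\step=2$ to $\step=3$ via the $\{2,3\}$ transition clause. The paper organises the cases in the opposite order (it treats $same(\step=2)$ first, distinguishing the full-activation subcase $S_2=R$ explicitly, and only afterwards reduces the $except1$ configurations), whereas you dispose of $except1$ first and absorb the full-activation subcase into the general $same(\step=2)$ argument; both reach the same conclusion with the same use of the \RSY disjointness constraint at round $t_a{+}1$.
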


%\begin{proof} See Appendix. \end{proof}
\begin{proof}
Let $S_2$ be a set of robots activated at $t_2$. Note  that $n >|R_{\true}(t_2)| \geq 0$.
There are two cases  for the step configuration:
(Case 1) $same(\step=2)$,  and  (Case 2)  ($except1(\step=2;1))$ or $except1(\step=2;\RA))$.

\noindent
{\bf Case 1:}  The step configuration is $same(\step=2)$.

If $S_2=R$ (called full activation), full activation must have always occurred   from the starting time to $t_2$ because of the definition of   \RSY.
Then, since it holds that $|R_{\true}(t_2)| = 0$ and any robot $r$ in $S_2$ observes that $r.\executed=\false$ (Lemma~\ref{lem:step-4-2}),
$r$ executes algorithm~$A$ and sets $r.\executed$ and $r.\step$ to $\true$ and $3$, respectively.  Thus, all robots activated at the next time $t_2+1$ observe some robot $\rho$ with $\rho.\step=3$, and there exists a time $t_3$ such that the step configuration at $t_3$ is $all(\step=3)$ or $except1(\step=3;2)$ by a similar way as in the proof of Lemma~\ref{lem:scheme-modification}. Setting $t_a=t_2$,
the conditions of the lemma hold.

Otherwise ($S_2\subset R$). Some robots in $S_2$ have their $execute$ flag set to $\true$, while some have it set to $\false$. Let $\tilde{S_2}$ be a set of robots that executed algorithm~$A$ at time $t_2$, i.e. the subset of robots from $S_2$ that had their $execute$ flag set to $\false$.
There are two cases depending on whether $\tilde{S_2}$ is empty or not.
If $\tilde{S_2} \neq \emptyset$, since the set of robots activated at $t_2+1$ (denoted as $S'_2$) and $S_2$ are disjoint due to \RSY, 
all robots in $S'_2$ observe some robot $\rho$ with $\rho.\step=3$. Now, due to Lemma~\ref{lem:scheme-modification}, there exists a time $t_3$ such that the step configuration at $t_3$ is $all(\step=3)$ or $except1(\step=3;2)$. Setting $t_a=t_2$, the conditions of the lemma are satisfied.

If $\tilde{S_2} =\emptyset$, since the configuration at $t_2$ will not be changed after $t_2$ and the scheduler is fair, there exists a time $t >t_2$ such that $\tilde(S_t) \neq \emptyset$, where $S_t$ is a set of robots activated at $t$. Then the lemma holds similarly to the former case.

{\bf Case 2:}  The step configuration is $except1(\step=2;1))$ or\\ $except1(\step=2;\RA))$.

Let $r$ be a robot such that $r.\step=1$ or $r.\step=\RA$.
None of the robots (except $r$) can execute Step~$2$ because they observe $r$ with $r.\step \neq 2$.
As long as $r$ is not activated, the configuration is  unchanged.
Let $t$ be the first time when $r$ is activated after $t_2$ and let $S_t$ be a set of robots activated at $t$.
If $r.\executed=\true$ at $t$, $r,\step$ becomes $2$ and the step configuration becomes $same(\step=2)$.
Then, this case is reduced to {\bf Case 1}.
Otherwise ($r.\executed=\false$), $r$ executes algorithm~$A$ and sets $r.activated$ and $r.\step$ to $\false$ and $3$.
Note that any robot in $S_t \setminus \{r\}$ cannot execute Step~$2$ at $t$. Setting $t_a=t$, the Lemma~\ref{lem:scheme-modification} holds.

\end{proof}

By  Lemmas~\ref{lem:step-1-2}-\ref{lem:step-2}, \textit{sim-LUMI-by-FCOM(A)} executes Step~$1$-Step~$3$ and Step~$\RA$ in infinite rounds in \RSY\ and the execution of \textit{A} obeys \SSY.
Let $E$ be the sequence of the set of activated robots that execute simulated algorithm~{\textit A} in Algorithm \textit{sim-LUMI-by-FCOM(A)}. Since, by Lemma~\ref{lem:step-2},  any mega-cycle is completed and every robot executes exactly once in every mega-cycle, $E$ is fair.
%If the simulation-possible schedule condition is satisfied, every robot executing algorithm \textit{A} can become fair in $E$,  and \textit{A} works correctly in \SSY. %Since \FSY\ is a subset of \R \textit{sim-LUMI-by-FCOM(A)} also works in \FSY. 
Then we have obtained Theorem~\ref{th:sim-LUMI-by-FCOM}. Note that, if   algorithm~\textit{A} uses $\ell$ colors, the simulating algorithm~ \textit{sim-LUMI-by-FCOM(A)} uses $O(\ell 2^{\ell})$ colors.

Therefore it holds that $\mathcal{FCOM}^{RS} \geq \mathcal{LUMI}^{S}$.
Since $\mathcal{LUMI}^{S}\equiv \LU^{RS}$ by Theorem~\ref{th:LUMISeqLUMIRS}, we have that: 
$\mathcal{FCOM}^{RS} \geq \mathcal{LUMI}^{RS}$. Moreover, 
since the reverse relation is trivial and $\FS^{RS} <\LU^{S}$ (Theorem~\ref{th:RSvsS2}(1)), %(\textcolor{red}{ $\lnot$IL cannot be used and CYC has to be used to show $\LU^{RS} > \FS^{RS}$}),   
the next theorem follows. %using Lemmas \ref{lem:NOTILL1} and \ref{lem:NOTILL2}.%in the model part

\begin{theorem}\label{th:FCRSLURSLUS}
 $\FC^{RS} \equiv \LU^{RS} \equiv \LU^{S}$, and \\
  $\FC^{RS} > \FS^{RS} > \OB^{RS}$.
 \end{theorem}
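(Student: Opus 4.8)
The plan is to assemble the statement entirely from results already in hand, together with the trivial model inclusions $\LU^{P}\ge\FC^{P}$, $\LU^{P}\ge\FS^{P}\ge\OB^{P}$ (for every $P\in\{\FSY,\RSY,\SSY\}$) and $M^{F}\ge M^{RS}\ge M^{S}$. For the equivalence chain I would start from Theorem~\ref{th:sim-LUMI-by-FCOM}, which gives $\LU^{S}(R)\subseteq\FC^{RS}(R)$ for every $R\in\mathcal{R}$, i.e.\ $\FC^{RS}\ge\LU^{S}$; composing with Theorem~\ref{th:LUMISeqLUMIRS} ($\LU^{S}\equiv\LU^{RS}$) yields $\FC^{RS}\ge\LU^{RS}$. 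The reverse inequality $\LU^{RS}\ge\FC^{RS}$ is immediate from $\LU^{P}\ge\FC^{P}$ instantiated at $P=\RSY$ (a $\FC$ protocol is a $\LU$ protocol that never consults its own light). Hence $\FC^{RS}\equiv\LU^{RS}\equiv\LU^{S}$.

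For $\FC^{RS}>\FS^{RS}$, the non-strict part $\FC^{RS}\ge\FS^{RS}$ is obtained by routing through $\LU$: $\FC^{RS}\equiv\LU^{RS}\ge\FS^{RS}$, using $\LU^{P}\ge\FS^{P}$ at $P=\RSY$. (One cannot compare $\FC$ and $\FS$ directly, since they are incomparable even under \FSY, cf.\ Theorem~\ref{th:InFandS}; this detour is the only structural subtlety in the argument.) Strictness is precisely Theorem~\ref{th:RSvsS2}(\ref{FCRS<LUS}), $\FS^{RS}<\LU^{S}$: since $\FC^{RS}\equiv\LU^{S}$, the witnessing team and problem can be taken to be $\mathrm{CYC}$ on $R\in\mathcal{R}_n$ with $n\ge3$, which lies in $\FC^{S}(R)\subseteq\FC^{RS}(R)$ by Lemma~\ref{lem:CYC-FCS} but, because $\FS^{RS}(R)\subseteq\FS^{F}(R)$, not in $\FS^{RS}(R)$ by Lemma~\ref{lem:CYC-FSFimpo}.

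For $\FS^{RS}>\OB^{RS}$, the non-strict part is the trivial inclusion $\FS^{P}\ge\OB^{P}$ at $P=\RSY$. For strictness I would use Rendezvous: $\mathrm{RDV}$ is solvable by two-robot teams in $\FS^{S}$ (\cite{FSVY}), hence in $\FS^{RS}$ since $\FS^{RS}\ge\FS^{S}$, whereas by Lemma~\ref{lem:RDV} there is an $R\in\mathcal{R}_2$ with $\mathrm{RDV}\notin\OB^{RS}(R)$; this gives $\OB^{RS}<\FS^{RS}$. Since every ingredient — the $\FC$ simulation of $\LU$ under \RSY, the $\LU$-equivalence of \RSY\ and \SSY, and the impossibility results for $\mathrm{CYC}$ and $\mathrm{RDV}$ — has already been established, no genuinely new obstacle remains; the only points needing care are the indirect $\FC$–$\FS$ comparison through $\LU^{RS}$ and the correct placement of the separating problems ($\mathrm{CYC}$ for $\FC^{RS}$ vs.\ $\FS^{RS}$, $\mathrm{RDV}$ for $\FS^{RS}$ vs.\ $\OB^{RS}$).
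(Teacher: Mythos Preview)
Your proof is correct and follows essentially the same approach as the paper: the equivalence chain via Theorems~\ref{th:sim-LUMI-by-FCOM} and~\ref{th:LUMISeqLUMIRS} plus the trivial $\LU^{RS}\ge\FC^{RS}$; the strict inequality $\FC^{RS}>\FS^{RS}$ via $\FC^{RS}\equiv\LU^{S}$ together with Theorem~\ref{th:RSvsS2}(\ref{FCRS<LUS}); and $\FS^{RS}>\OB^{RS}$ via $\mathrm{RDV}\in\FS^{S}\subseteq\FS^{RS}$ and Lemma~\ref{lem:RDV}.

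One small factual slip in your parenthetical: $\FC$ and $\FS$ are \emph{not} incomparable under \FSY; Theorem~\ref{th:InFandS}(\ref{LUF=FCF>FSF>OBF}) gives $\FC^{F}\equiv\LU^{F}>\FS^{F}$. They are incomparable under \SSY\ (Theorem~\ref{th:InFandS}(\ref{FCSneqFSS})). Your detour through $\LU^{RS}$ is still the right move---there is no \emph{trivial} inclusion $\FC^{P}\ge\FS^{P}$ to invoke---but the stated justification should be corrected.
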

\begin{proof}
The equality follows from Theorems~\ref{th:LUMISeqLUMIRS} and~\ref{th:sim-LUMI-by-FCOM}.\\
$\FC^{RS} > \FS^{RS}$ follows from that 
$\LU^{S} \equiv \FC^{RS}$ and $\LU^{S} > \FS^{RS} $ (Theorem~\ref{th:RSvsS2}(1)).
%there exists a problem such that it is in $\LU^{S}$ but is not in $\FS^{F}$~\cite{FSW19} and $\LU^{S} = \FC^{RS}$ and $\FS^{RS} \subseteq \FS^{F}$. 
$\FS^{RS} > \OB^{RS}$ follows from RDV can be solved in $\FS^{RS}$ and
Lemma~\ref{lem:RDV}.
\end{proof}
%\begin{corollary}
%\begin{itemize}
% \item $\mathcal{FCOM}^{F}=\mathcal{LUMI}^{F}$, and
%  \item $\mathcal{FCOM}^{F} > \mathcal{FSTA}^{F}$.
%\end{itemize}
%\end{corollary}

%We can obtain the following orthogonality of $\FS^{F}$ with $\mathcal{X}^{RS}$ ($\mathcal{X} \in \{\LU, \FC \}$) and $\OB^{F}$ with $\mathcal{X}^{RS}$ ($\mathcal{X} \in \{\LU, \FC, \FS\}$) by Theorem~\ref{th:FCRSLURSLUS}  and Theorem~\ref{th:FSYNCvsSSYNC} \ref{FSFneqLUS}. and \ref{OBFneqLUSFCSFSS}.
%
%\begin{theorem}\label{th:orthFSOBwithRS}
%\begin{enumerate}
%    \item\label{FSFneqLURS=FCRS} $\FS^{F} \false$ $\LU^{RS} \equiv \FC^{RS}$
%    \item\label{OBFneqLURS=FCRS} $\OB^{F} \bot$ $\LU^{RS} \equiv \FC^{RS}$
%    \item  $\OB^{F} \bot$ $\FC^{RS}$
%    \item\label{OBFneqFSRS}  $\OB^{F} \bot$ $\FS^{RS}$
%\end{enumerate}
%\end{theorem}

%\bibliographystyle{plain}
%\bibliography{referencesorg}

%\end{document}

%----------------------
%\section{Power within  \RSY}\label{LUS=FCRS}
%----------------------
 
\section{Concluding Remarks}
\label{sec:conclusion}
%About $\FC^{A}$
In this paper, we have  started the
 investigation of  the algorithmic and computational issues arising in distributed systems 
 of autonomous mobile entities in the Euclidean plane  where their energy is limited, albeit renewable.
 
We have   studied   the difference  in computational power caused by the energy restriction, and provided
a complete characterization of the computational difference between  energy-restricted 
and unrestricted robots
in all four models considered in the literature: $\OB, \FS, \FC,\\ \LU$
(see Fig. \ref{fig:RelationshipBetweenRandS}).
We have also examined the difference with energy-unrestricted robots operating under a fully-synchronous
scheduler (see Fig. \ref{fig:RelationshipBetweenFandR}). Furthermore,  we have studied   the impact that memory persistence 
and communication capabilities have on the computational power
of such energy-constrained systems of robots.
Some of these results have been obtained through the design and analysis of novel {\em simulators}:
 algorithms that allow a set of robots with a given set of capabilities to execute correctly any protocol
 designed for robots with more powerful capabilities.

The current contribution  provides a complete answer to the problems considered. At the same time,
it opens an entirely new research area.  
Indeed, many other forms of energy constrains can be considered, and their computational
impact analyzed.
In particular, the case considered here, of a robot consuming all its energy after $p=1$  activity cycles and requiring
$q=1$  idle rounds for energy restoration, opens the investigation 
for $(p,q)$-energy constrained  robots.
% At the basis of our results is the mapping allowing to represent   a system of energy-constrained robots
%as  a system of robots with unlimited energy but  subject to a special adversarial activation scheduler.

%\begin{figure}[tbh]
\begin{figure}[H]
%    \centering
%    \includegraphics{}
\centering\includegraphics[keepaspectratio, width=13cm]{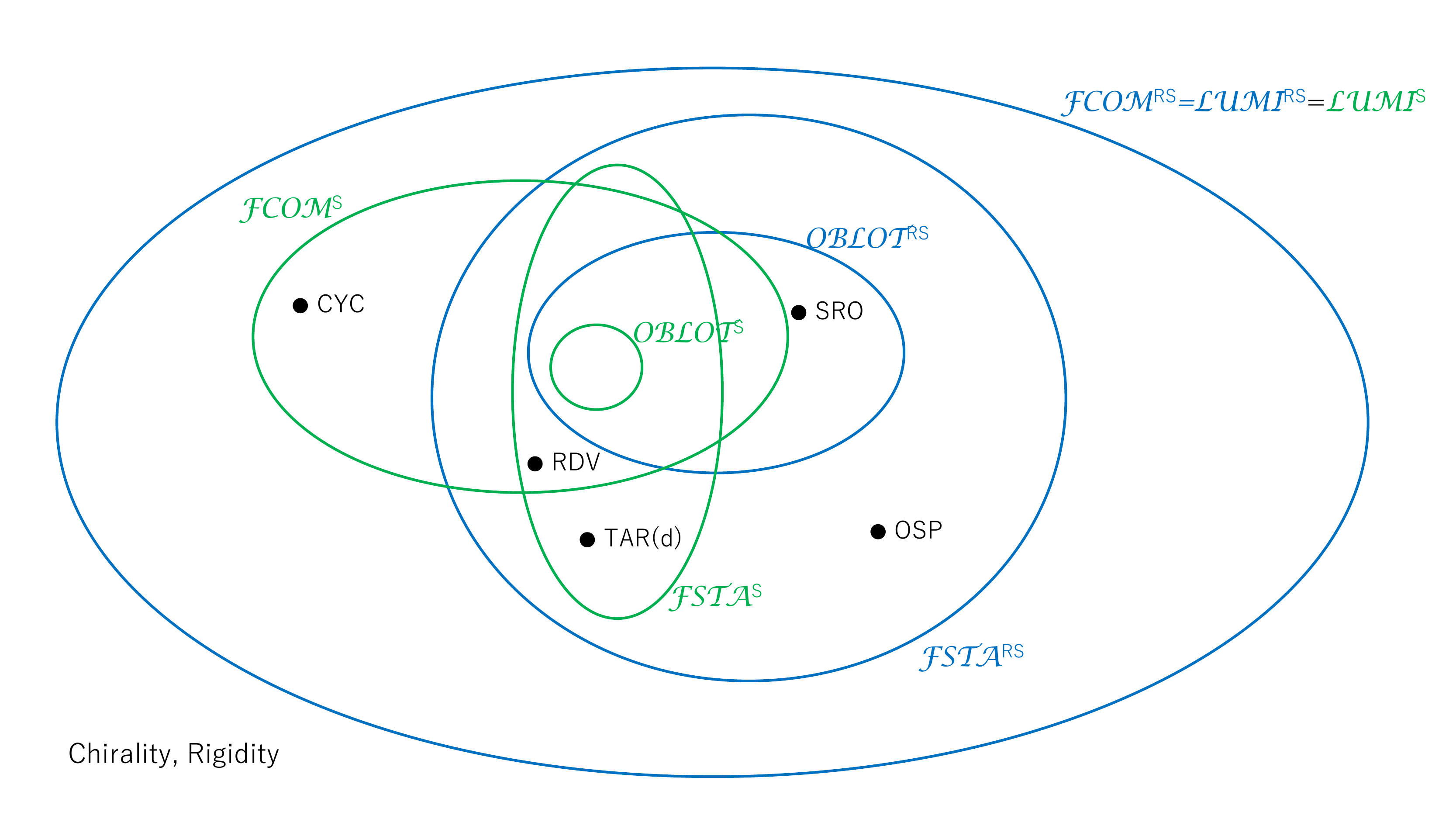}
    \caption{Relationship among $\LU$, $\FC$, $\FS$ and $\OB$ in \RSY\ and  \SSY\ assuming   chirality  and  rigidity, where TAR(d) is defined in~\cite{FSW19}.}
    \label{fig:RelationshipBetweenRandS}

\end{figure}

%\begin{figure}[tbh]
\begin{figure}[H]
%    \centering
%    \includegraphics{}
\centering\includegraphics[keepaspectratio, width=13cm]{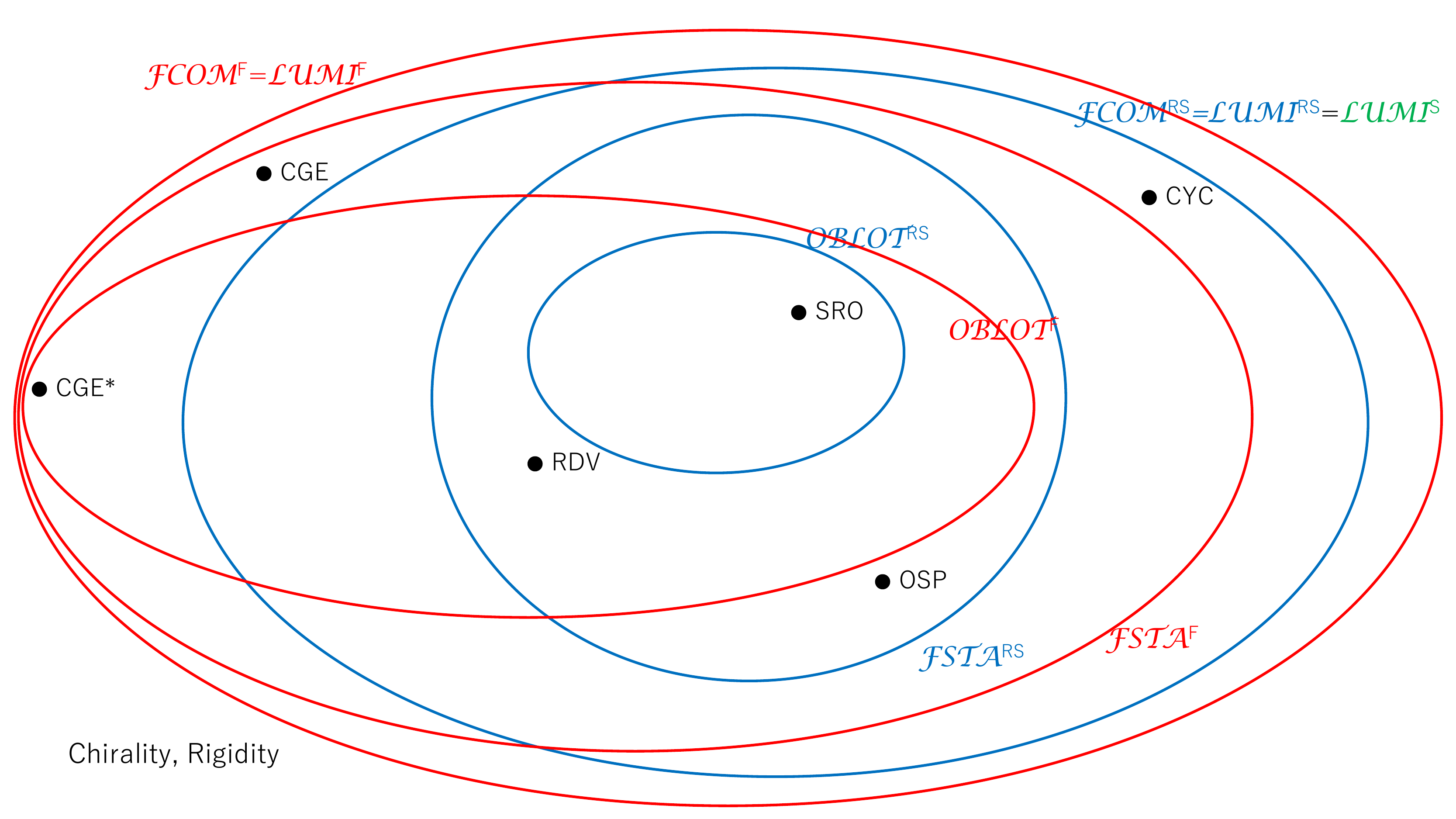}
    \caption{Relationship among $\LU$, $\FC$, $\FS$ and $\OB$ in \FSY\ and  \RSY\ assuming   chirality  and  rigidity.}
    \label{fig:RelationshipBetweenFandR}

\end{figure}

\newpage

\bibliographystyle{plain}
\bibliography{referencesorg}

\end{document}